\newtheorem{observation}{Observation}
\renewcommand{\cal}[1]{\ensuremath{\mathcal{#1}}}
\renewcommand{\sf}[1]{\ensuremath{\mathsf{#1}}}
\newcommand{\abs}[1]{\left| #1 \right| }
\newcommand{\norm}[2]{\left\Vert #2 \right\Vert_{#1} }
\newcommand{\inp}[1]{\left( #1 \right) }
\newcommand{\inb}[1]{\left[ #1 \right] }
\newcommand{\inset}[1]{\left\{ #1 \right\} }
\newcommand{\inbr}[1]{\left\{ #1 \right\}}
\newcommand{\vv}[1]{\ensuremath{\boldsymbol{#1}}}
\newcommand{\E}{\ensuremath{\mathop{\mathbb{E}}}}
\newcommand{\rohit}[1]{{\color{blue} #1}}
\newcommand{\vrep}{\ensuremath{\sf{R}\xspace}}
\newcommand{\vvrep}{\ensuremath{\vv{\sf{R}}\xspace}}
\newcommand{\Z}{\ensuremath{\mathbb{Z}}}
\newcommand{\cancel}[1]{\ignorespaces}
\newcommand{\vns}{\ensuremath{m}\xspace}
\newcommand{\vnc}{\ensuremath{n}\xspace}
\newcommand{\vrsc}{\ensuremath{\rho}\xspace}
\newcommand{\vgame}{\ensuremath{\mathcal{G}}\xspace}
\newcommand{\vtype}{\ensuremath{r}\xspace}
\title{Game of Trust: How Trustworthy Does Your\\ Blockchain Think You Are?} 
 \author{Petros Drineas\inst{1} \and Rohit Nema\inst{2} \and Rafail Ostrovsky\inst{3} \and Vassilis Zikas\inst{4}}
 \institute{Purdue University \email{pdrineas@purdue.edu}
 \and
 Stanford University \email{rnema@cs.stanford.edu}
 \and
 UCLA \email{rafail@cs.ucla.edu}
 \and
 Georgia Tech \email{vzikas@gatech.edu}}
\begin{document}

\maketitle

\begin{abstract}

%\rohit{UPDATE ABSTRACT.}
We investigate how a blockchain can distill the collective belief of its nodes regarding the trustworthiness of a (sub)set of nodes into a {\em reputation system} that reflects the probability of correctly performing a task. To address this question, we introduce a framework that breaks it down into two sub-problems:

{\em 1. (Information Extraction):} How can the system distill trust information from a function of the nodes' true beliefs?
 
 {\em 2. (Incentive Design):} How can we incentivize nodes to truthfully report such information?

To tackle the first sub-problem, we adapt, in a non-trivial manner, the well-known PageRank algorithm to our problem. For the second, we define a new class of games, called Trustworthy Reputation games (TRep games), which aim to extract the collective beliefs on trust from the actions of rational participants.
We then propose a concrete TRep game whose utility function leverages Personalized PageRank and can be instantiated through a straightforward blockchain rewards mechanism. Building on this, we show how the TRep game enables the design of a reputation system. Such systems can  enhance the robustness, scalability, and efficiency of blockchain and DeFi solutions. For instance, we demonstrate how such a system can be used within a Proof-of-Reputation blockchain.
\end{abstract}

\section{Introduction}
The concept of trust is front and center in any decentralized system, such as blockchain ledgers and cryptocurrency. Being able to take an informed decision on which nodes to trust to perform a certain task can vastly improve the quality of a solution and eliminate bottlenecks.  

Consider, for example, %To demonstrate this we use are a running example 
the 
{\em optimistic} (protocols) paradigm: % that has dominated the recent blockchain, cryptocurrency, and DeFi ecosystem: 
At a high level, such protocols rely on assumptions about the trustworthiness of specific entities within the blockchain ecosystem in order to circumvent inefficiencies that arise from treating all participants as potentially adversarial. This %optimistic
paradigm has led to some of the most prominent Layer-2 scaling solutions, such as optimistic rollups (e.g. \cite{buterin2019optimistic,landis2023incentive,moosavi2023fast,capretto2024fast,ye2022specular,lee2025looking}). In the extreme, such protocols may require unconditional trust in designated entities. However, in systems that secure assets worth billions of dollars, absolute trust is problematic and poses systemic risks.

To mitigate this, one %optimistic designs are often strengthened by 
often replaces the assumptions of absolute trust with more plausible assumptions about incentives that guide participants behavior. Examples of this approach include payment channels (e.g. ~\cite{kolachala2024sok,avarikioti2019ride,avarikioti2020cerberus,avarikioti2021brick,ersoy2020profit}), payment networks (e.g. \cite{poon2016lightning,sprites2017miller}), atomic swaps (e.g. \cite{herlihy2018atomic}),  automated market makers (e.g. \cite{angeris2020improved,angeris2022foundations}), stake delegation (e.g. \cite{kiayias2017ouroboros,boneh2020proof}), and many others. Broadly speaking, these mechanisms share a common theme: the behavior of (presumably) trusted parties is collateralized, such that any deviation from honest execution risks a significant financial penalty. Under the assumption that rational actors seek to preserve their collateral, this incentive structure provides a practical safeguard for protocol correctness and reliability.

\cancel{ 
%let's uncancel this for eprint
Collateralization has emerged as a powerful mechanism in modern blockchain, cryptocurrency, and decentralized finance (DeFi) applications. Nonetheless, it imposes significant costs. Requiring participants to lock substantial amounts of assets results in inefficient capital utilization, diminishes the liquidity of the underlying tokens, raises barriers to entry, and makes the rotation of trusted parties inherently slow. Consequently, such systems tend to favor well-capitalized actors, thereby increasing the risk of (re-)centralization. Stake delegation provides a concrete example of this dynamic: typically, only large and well-capitalized stake pools operate profitably; delegation often takes effect only after a delay once it is recorded on-chain; and participants are generally required to commit their stake for an extended period, further constraining flexibility.

Moreover, collateral-based incentive mechanisms couple honest behavior to the preservation of a static asset---the locked collateral. Misbehavior threatens this fixed stake, which otherwise remains idle, serving only as a guarantee of correct behavior. By contrast, reward/revenue-based  mechanisms (also referred to as reward-sharing) align honesty with the prospect of ongoing gains. In such systems, actors are motivated not only by the avoidance of loss but also by the preservation of future income streams, which accrue continuously through sustained honest participation.
}

An alternative approach to determining whom to trust---without relying solely on collateralization---is to base trust decisions on a reputation system (also commonly referred to as a recommender system). {\em Reputation systems} have been extensively studied and applied across computer science, social science, and engineering. In their abstract form, such systems assign each party a {\em trustworthiness score} that reflects the extent to which the party can be trusted in a given context. For example, in a blockchain setting, the score may capture the likelihood that a node will follow the prescribed protocol.
Reputation can significantly improve decision-making about whom to trust, and it is widely used on the internet and in social networks. Prominent examples include Yelp, TripAdvisor, Google, and Amazon, all of which provide reputation mechanisms. % as core functionality. 

Several existing blockchains and cryptocurrencies implicitly---and often informally ---leverage the perceived trustworthiness of nodes to enhance security, scalability, and decentralization. For example, Ripple XRP’s consensus mechanism \cite{schwartz2014ripple} assumes that nodes only accept messages from a predefined set of trusted servers; Ethereum 2.0 \cite{ethereum2.0} incentivizes honest behavior and penalizes dishonesty to secure the network and accelerate block production; and Algorand \cite{algorand} rewards participation to increase the likelihood that selected committees contain a sufficient number of members.
Beyond these implicit mechanisms, a number of proposals explicitly demonstrate that access to trust information---often formalized as a reputation vector (see below)---can improve fundamental blockchain properties such as liveness, safety, and scalability, via the paradigm commonly referred to as Proof of Reputation (PoR)~\cite{Gai2018ProofOR,CANS:Chow07,8645706,EPRINT:BirFehKho17,KleOstZik20}. Similar reasoning applies to decentralized finance (DeFi), where the collective belief of the system in a node’s ability to repay blockchain loans can be interpreted as the system’s assessment of that node’s credit-worthiness.
However, to the best of our knowledge, reputation systems in these contexts are typically assumed as setup assumptions, or else are ad hoc and implicit, and they lack rigorous formal analysis.
The central question motivating our work is therefore:
\begin{quote}
How can a blockchain derive a reputation system for (a subset of) its nodes by distilling the collective belief of the system regarding the trustworthiness of these nodes?
\end{quote}
Prior work has considered related questions, such as computing a reputation vector for blockchain nodes~\cite{DBLP:conf/icisc/Larangeira22,8645706}. However, these approaches equate reputation with system resources---such as stake or hashing power---and thus do not directly address the above question.

\subsection{Our Contributions}\label{sec:contributions}
Our work makes conceptual, definitional, and technical contributions, towards  answering the above question. 

On a conceptual level, we dissect the above question into two sub-problems:  
\begin{itemize}
\item {\bf Problem 1.} Given access to (a function of) each node's belief about the trustworthiness of other nodes $s_1, \ldots, s_\vns$---i.e., probability of each honestly executing a given task---how can we distill a ``meaningful'' system-wide reputation (score) $\vrsc_{i}$, that reflects the system's belief about $s_i$'s trustworthiness?
\item {\bf Problem 2.} How can we incentivize the nodes to truthfully report the above desired function of their belief?  
\end{itemize}

Naturally, an answer to the above two problems provides an answer to our original motivating question: under the assumption of rational nodes, the blockchain can learn the desired function of their (true) beliefs and use it to distill a meaningful system-wide reputation. As such, our result can be view as a systematic framework that can be used to address our   above overarching goal. % that motivates our work. 
% Here is how we attack each of the above subproblems. \rohit{maybe remove this line?} 

For the first problem we observe that it is in some sense isomorphic to (the weighted version of) a well-studied problem in computer science: Given a graph that encodes the Web (nodes are pages and directed edges correspond to links) how can we distill the importance of each node/page in this graph in a single number? The celebrated result that gave a solution to the above was the well known PageRank algorithm~\cite{pagerank}. Originally proposed by S.~Brin and L.~Page and credited for the indisputable success of Google's search engine, PageRank allows to use the Web-graph---where the pages are nodes and links are directed edges---to extract a normalized score for each web-page---i.e., so that all scores sum up to $1$. Intuitively, the score (aka Rank or PageRank) of each page corresponds to its importance in the Web. The weighted version of PageRank~\cite{1344743} allows pages/nodes to annotate their links according to (their belief of) the relative importance of each page they link to. As such, (weighted) PageRank has been  proposed in the literature to derive reputation systems, e.g.,~\cite{Zhang28052017,articlePRR}, recommender systems, e.g.,~\cite{inbook,10.1145/1390334.1390465}, and to estimate how much trust to allocate on services and nodes in different networked scenarios ({\em cf}. ~\cite{10.5555/1316689.1316740} and references therein). A straw-man approach here is to apply it also to solve Problem~1.   

To anyone familiar with the use of PageRank and the problems it solves the above idea appears extremely natural. And the ``meaningfulness'' of its output is attested in practice by the vast success of its numerous applications. But, is it possible to offer provable evidence of the above? 

In its most generality, the answer is, no---What does it mean that a number meaningfully reflects the system's belief? Notwithstanding, due to the nature of our problem, we are able to put forth a test of meaningfulness: Assuming that the trustworthiness score of each ranked node $s_i$ is $\vrep_{j}$ can we prove that PageRank outputs it? 

Even the question above is subtler than it first appears. When the nodes’ beliefs deviate significantly from the ground truth (i.e., the true trustworthiness scores of the ranked nodes), it becomes information-theoretically impossible to recover even simple correlations. In such cases, PageRank must be regarded as a heuristic---albeit one strongly justified by overwhelming empirical evidence of its effectiveness in practice.

This observation, however, leads naturally to a meaningful benchmark for evaluating the suitability of PageRank in our setting: when the nodes’ beliefs are accurate, the output of PageRank yields a faithful encoding of the servers’ trustworthiness scores. We view such benchmarks as an interesting contribution, as we are not aware of any other such benchmarks for problems PageRank is applied to. Moreover, in the course of proving this statement, we uncover several intriguing tuning and interpretations of the PageRank algorithm. As such,  we believe that our re-purposing and treatment of PageRank for our problem can be of  interest even beyond Problem 1. We provide more details in the technical overview (Section~\ref{sec:tech_overview}).

Armed with a solution to Problem 1, we turn to Problem 2. 
To address this challenge, we introduce and formally define a new class of games, which we call {\em Trustworthy Reputation (TRep) games.} Intuitively, these games are constructed so that it is a Nash equilibrium for players to play/report a mixed strategy that encodes precisely the same information extracted by our fine-tuned PageRank algorithm. Consequently, when players report their actions according to their Nash strategy  (which we show is even unique in certain cases), the resulting profile can be used to recover the same meaningful reputation values as the solution to Problem 1 (which, recall, assumed truthful reporting of beliefs).

We instantiate our definition with a concrete utility function, and prove that our instantiation achieves the specified properties. 
A notable feature of the above utility function,  in addition to inducing the desired Nash equilibrium, is that it yields an equitable allocation of blockchain rewards to participating nodes.
In the technical portion of the paper we establish these results under different assumptions about the accuracy of nodes’ beliefs regarding the trustworthiness of ranked servers. Moreover, we demonstrate that embedding such a game into a proof-of-reputation blockchain (e.g., \cite{KleOstZik20}), and calibrating its rewards according to our utility function, provides a solution to the second subproblem: namely, it becomes a Nash equilibrium for users to truthfully record their encoded assessments of nodes’ trustworthiness on the blockchain. %\footnote{We prove this to be a Nash equilibrium when users are perfectly informed, and an $\epsilon$-Nash equilibrium when users hold noisy but consistent beliefs (see Section \ref{section:contributions_noisy_information}).}

To our knowledge, our work is the first to integrate formal  game-theoretic reasoning with reputation-extraction mechanisms in order to provide a formal, theory-grounded answer to the question of whether a blockchain can reliably distill a reputation system for (a subset of) its nodes. Given the current state of the field, we view even the precise formulation of this mathematical and computer-science problem as a substantial step forward. Finally, we emphasize that our formulation of trustworthy reputation games is generic: it is not restricted to PageRank or even to blockchain settings. We conjecture that it can also be instantiated with alternative methods, thereby extending its applicability well beyond the blockchain ecosystem.

\subsection{Related Work}\label{section:related}

A number of works have studied reputation as an important concept of behavioral analysis in game theory, (see, e.g.,~\cite{repinseqgames,DBLP:phd/ndltd/Mui02,sung} and references therein.) One of the most important application domains is in repeated and more, generally, sequential games, where reputation typically models  information  about a  player's state that can be extracted from the player's past actions and used to improve the future response to this player~\cite{mailath2006repeated,Abreu2007,BadRep,Fudenberg1992,RepWithConflict}. A related line of work investigates learning nature in repeated games~\cite{RENAULT2004124,Leoni2014,Sugaya2020,Cripps2008,Fudenberg2011}. Our work can be seen as combining both threads in a novel way: our goal is to learn the state of nature, which corresponds to an external reputation (as opposed to learning the reputation of the players themselves) by using the strategies of the player (as opposed to the move of nature). Moreover, we aim to learn this state in a one-shot game  which is motivated by a blockchain application. 

In the blockchain context, a number of proposals have explored leveraging existing reputation systems to enhance key properties of consensus and blockchain protocols, such as liveness, finality, and scalability~\cite{Gai2018ProofOR,CANS:Chow07,8645706,EPRINT:BirFehKho17,KleOstZik20}. These approaches typically focus on incorporating reputation but often do not fully address how such a system can be derived in a way that reflects the actual trustworthiness---i.e., the probability of honesty---of participating nodes. 

In many cases, reputation is operationalized through token-based mechanisms, such as awarding reputation tokens for hashing or staking~\cite{DBLP:conf/icisc/Larangeira22,8645706}. (See~\cite{blogpost} for a broader discussion of the economic aspects of tokenizing reputation.) For instance,~\cite{DBLP:conf/icisc/Larangeira22} describes a model where nodes may delegate their stake, and a node’s reputation is defined as the fraction of total delegated stake it receives.

Our work takes a different perspective. We focus on how a blockchain might derive information about node trustworthiness directly from the beliefs of its participants, rather than indirectly through measures such as delegated stake or computational power.

This distinction is important: while reputation may at times correlate with factors like delegated stake or hashing power, these metrics do not necessarily capture the broader and more nuanced notion of trustworthiness in performing specific tasks. In fact, for certain tasks the connection may be minimal. As a result, approaches that equate reputation solely with stake delegation~\cite{DBLP:conf/icisc/Larangeira22} or computational work~\cite{8645706} provide only a partial picture of reputation and leave open the question we address here.

Similarly, mechanisms such as slashing and stake 
delegation in delegated Proof-of-Stake protocols~\cite{DBLP:conf/crypto/KiayiasRDO17,DBLP:conf/eurosp/BrunjesKKS20,algorand,ethereum2.0} can be viewed as associating trustworthiness with the amount of stake owned, collateralized, or delegated---rather than explicitly capturing the system’s collective beliefs about node behavior.

%To our knowledge, none of the model proposed in previous works accurate captures the goals of our application. % for our setting. %, which is also motivate by a blockchain scenario. 
Several works have applied game-theoretic reasoning to the PageRank algorithm~\cite{NetworkReputation,7525508,foulley2018}; however, the goal of these works is rooted in the original use of the algorithm to discover important pages on the Internet and/or important nodes in a communication network. In particular, the goal is for the players (page creators) to maximize some revenue (or a network related quantity like latency of a transmitted message or accuracy of anomaly detection) by strategically pointing to appropriate pages/network-nodes. 
% \rohit{TODO: rework to talk about tiered graph}
While we also examine the effect of strategic play on PageRank, we do so in a specially designed game where (Personalized) PageRank is also used in the utilities. Our goal is to instead show that PageRank captures reputation in an extremely natural way and playing strategically indeed, recovers a meaningful quantity (the relative trustworthiness scores). Discovering other classes of graphs, which enable such reasoning is in our opinion a very interesting direction.

It is also useful to distinguish between our question of  how a blockchain can distill the beliefs of its nodes, and the related question of  how nodes form such local or personal beliefs in the first place. These questions are orthogonal, and each is interesting in its own right.

The latter---how nodes establish their own beliefs---is closely connected to several well-studied areas in machine learning, artificial intelligence, and game theory/economics. These fields provide frameworks for predicting the future behavior of others based on local information. Examples include multi-agent reinforcement learning (see~\cite{10.1007/s10462-021-09996-w} for a comprehensive survey), behavioral prediction (see~\cite{10.24963/ijcai.2023/746}), and the use of reputation in repeated games (see~\cite{mailath2006repeated,repinseqgames,DBLP:phd/ndltd/Mui02}). This extensive literature offers a wealth of methods---often complementary, sometimes competing---that blockchain nodes could use to form local beliefs about the trustworthiness of others.

However, these works do not resolve our central question: how can the blockchain itself use such local beliefs to address protocol-level challenges---for example, rewarding honesty and availability, or selecting honest-majority committees in Proof-of-Reputation (PoR) based iterated Byzantine fault-tolerant (iBFT) constructions like~\cite{KleOstZik20}. 
This is the concrete problem our work tackles.

Importantly, answering this question is non-trivial---even assuming solutions to belief-formation exist---because tasks like the above require the blockchain protocol to converge to a system-wide estimate of trustworthiness, distilled from individual nodes’ beliefs and agreed upon by all participants.

Several works have adapted PageRank to meet the needs of specific applications. Topic-Sensitive PageRank (or Topic PageRank)~\cite{10.1145/511446.511513} is one such refinement, designed to incorporate context through topics. The idea is simple yet powerful: rather than assuming that the random surfer teleports uniformly to any page in the graph (as in standard PageRank), Topic PageRank biases this teleportation toward a set of nodes relevant to a chosen topic.

Each topic is defined by a personalization vector---a probability distribution over nodes---which determines the likelihood of teleportation. When the random surfer teleports, they are therefore more likely to land on topic-relevant nodes. The resulting PageRank values capture importance with respect to that topic, rather than globally.

Although not directly aligned with our setting---since our graph already corresponds to the specific topic of trustworthiness of ranked nodes---our approach of restricting teleportation to the set of potential endorsees can be viewed as a related adaptation. More broadly, it illustrates how carefully tuning the teleportation mechanism and personalization parameters enables PageRank to extend well beyond its original scope.

%To our knowledge, however, the above works do not investigate the question of extracting trustworthiness from a moves of in a one shot game

%To our knowledge, however, these works do not deal with the problem of extracting re

Finally, the structure and goals of trustworthy reputation games bare resemblance to Bayesian optimal design (BOD)~\cite{10.5555/1296179}. However there are key differences: in contrast to BOD, (1) we (the designer) do not know the distribution of the agent's valuation (types), and (2) our objective is to learn the ground truth (nature's state), rather than these types or their distribution. Similarly, to our knowledge, techniques from  {\em Prior-independent mechanisms} (PIMs)~\cite{tim-optimal-mech,pimp-mech,AZAR2019511}, which  are designed for the incomplete information setting, do not apply here as the objectives are different and the game is a one-shot game.  

\section{Preliminaries}
\label{sec:prelim}
We first introduce the basic notation used throughout our technical sections. Then, we provide the relevant background and notation on the PageRank and Personalized PageRank algorithm.
%, and (reputation-based) blockchains. 

%\subsection{Notation}
Let $\Z$ denote the set of integers, and $\Z_{\geq k}$, the set of integers greater than or equal to $k$. For $n \in \Z_{\geq 1}$, let $[n]$ equal the set, $\inset{1, 2, \ldots, n}$.

For any probabilistic event $E$, let the indicator of $E$, $\mathbf{1}\inb{E}$ denote the binary random variable that outputs 1 when $E$ {\em occurs} and 0 otherwise.

We write vectors in boldface, as in $\vv{v}, \vv{\pi}, \ldots$; and matrices in capital letters. Denote the $n\times m$ matrix of all ones as, $\vv{1}_{n\times m}$. Then, $\vv{1}_{n\times 1}$ is the row-vector of all one's of length $n$.
Often when referring to the coordinates of some vector, $\vv{v}$, we drop the boldface. Therefore, the $i$th coordinate of $\vv{v}$ is $v_i$. We index matrices using $M[i,j]$ to denote the value in $i$th row and $j$th column.
Let $\vv{e_i}$ denote the vector %(of inferred size) 
whose $i$th coordinate, denoted by $e_{i,i}$ is 1, and 0 otherwise.

In this work, we will often talk about {\em $L_1$-normalized} vectors, by which we mean vectors that have been scaled such that their $L_1$-norm---sum of all entries---sum up to 1. Specifically, let $\norm{1}{\cdot}$ be the $L_1$ norm. We define an $L_1$ normalizing function, $\sf{N}\colon \inb{0,1}^n \to \inb{0,1}^n$ such that,
\begin{equation*}
    \sf{N}(\vv{v}) = \inp{\frac{v_1}{\norm{1}{\vv{v}}}, \ldots, \frac{v_m}{\norm{1}{\vv{v}}}}
\end{equation*}

We call $\sf{N}(\vv{v})$ the $L_1$-normalized vector of $\vv{v}$ and denote its $j$th coordinate by $\sf{N}(\vv{v})_j$.

\subsection{PageRank}
\label{sec:pr}

We present the \emph{weighted} graph version of PageRank \cite{pagerank,1344743} which readily generalizes the unweighted link graph structure in which it was initially proposed. Intuitively,  one can view weighted PageRank (on a weighted graph) as standard PageRank where the weights are captured by adding more links (proportionally to the weights). In applications, such weights correspond to some  real number  indicating ``trust'', ``importance'' or quality of the outgoing link. We assume weights are non-negative.

Let $G = (V,E)$ be a weighted, directed graph.
Let $\abs{V} = n$. Write $V = \inp{v_1, \ldots v_n}$ with respect to an appropriate indexing.
We have $\inp{u,v,w} \in E$ if there exists an edge from $u$ to $v$ with (non-negative) weight, $w$.
Let $w_{\text{out}}(v)$ be the sum of the weight of all outgoing edges of $v\in V$. We assume $w_{\text{out}}(v) > 0$ for all $v\in V$ since PageRank is not well-defined for ``dangling nodes''.
Let $w_{\rm out}(V)$ equal the vector of $w_{out}(v_i)$ for all $v_i\in V$, and $W_{\rm out}$ be the $n\times n$ diagonal matrix with diagonal equal to $w_{\rm out}(V)$.
Let $M$ be the adjacency matrix of $G$, with respect to the same indexing of $V$.

Fix \emph{restart probability} constant, $\alpha\in\inp{0,1}$. The PageRank vector, $\vv{\pi}$ is the solution to the following equation,
\begin{equation*}
    \vv{\pi} = \vv{\pi}\inp{1-\alpha}W_{\rm out}^{-1}M + \dfrac{\alpha}{n}\cdot\vv{1} 
\end{equation*}
where $\vv{1}$ is the vector of all 1's and with the constraint that $\norm{1}{\vv{\pi}} = 1$.
We overload noation and denote $\vv{\pi}(v_i) = \pi_i$ as the {\em PageRank}, aka the {\em rank}, of $v_i$.

We can also model PageRank as the stationary distribution of a row-stochastic markov chain, 
\begin{equation*}
    M' = \inp{1-\alpha}W_{\rm out}^{-1} M + \dfrac{\alpha}{n}\cdot \vv{1}_{n\times n}
\end{equation*}

i.e., the PageRank vector is the solution to,
 \begin{equation*}
    \vv{\pi} = \vv{\pi}M'
\end{equation*}

Observe that $W_{\rm out}^{-1}M$ is the $L_1$-row-normalized adjacency matrix of $G$ and can be viewed as the transition matrix of $G$ where the probability of transition is directly proportional to the relative weight on the outgoing edge. The row-stochastic matrix $\frac{1}{n}\mathbf{1_{n\times n}}$ captures the idea of {\em restarting} uniformly to any vertex in the graph. The convex combination of row-stochastic matrices is also row-stochastic and therefore, $M'$ is row-stochastic.

When are we guaranteed that the stationary distribution (the PageRank vector) $\vv{\pi}$ exists?
We present the following lemma whose proof can be found in most standard  elementary textbook on stochastic processes: 
\begin{lemma}[\cite{Haggstrom_2002,LevinPeresWilmer2006}]
    A finite markov chain has a {\em unique} stationary distribution if its transition matrix, $M$ is {\em irreducible}. If the chain is also {\em aperiodic} then the limiting distribution converges to the stationary distribution.
\end{lemma}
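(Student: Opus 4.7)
The plan is to establish the three claims in the lemma separately: existence of a stationary distribution, uniqueness under irreducibility, and convergence of the limiting distribution under the additional assumption of aperiodicity. For all three, I would rely on classical tools from the theory of finite Markov chains, and the natural route is via Perron--Frobenius theory applied to the row-stochastic transition matrix $M$, supplemented by a coupling argument for the convergence part.

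First I would establish \emph{existence} of a stationary distribution $\vv{\pi}$ with $\vv{\pi} = \vv{\pi} M$ and $\norm{1}{\vv{\pi}} = 1$. Since $M$ is row-stochastic, $\vv{1}_{n\times 1}^\top$ is a right eigenvector with eigenvalue $1$, hence $1$ is also an eigenvalue of $M^\top$, so there is some left eigenvector $\vv{\pi}$ with eigenvalue $1$. A brief argument (e.g., via Perron--Frobenius applied to the non-negative matrix $M$, or a direct averaging argument starting from any distribution $\vv{\mu}_0$ and using compactness of the simplex to extract a convergent subsequence of $\frac{1}{T}\sum_{t=0}^{T-1} \vv{\mu}_0 M^t$) shows that such a left eigenvector can be chosen with non-negative entries summing to one.

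Next I would prove \emph{uniqueness} under irreducibility. The key observation is that irreducibility of $M$ forces the eigenspace associated with eigenvalue $1$ to be one-dimensional. I would argue this via the standard ``maximum principle'': suppose $\vv{\pi}$ and $\vv{\pi}'$ are two stationary distributions and consider the vector $\vv{\delta} = \vv{\pi} - c\vv{\pi}'$ for an appropriate constant $c$ chosen so that $\vv{\delta}$ has at least one zero coordinate but is otherwise non-negative. Since $\vv{\delta} = \vv{\delta} M$, iterating this relation together with irreducibility (every state is reachable from every other in finitely many steps with positive probability) forces all coordinates of $\vv{\delta}$ to be zero, establishing $\vv{\pi} = c\vv{\pi}'$, and then normalization pins down $c = 1$.

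Finally, for \emph{convergence} under the additional aperiodicity assumption, I would use a coupling argument, which I expect to be the main technical obstacle and is the most classical route in the cited references. Irreducibility combined with aperiodicity implies that there exists some $k \in \Z_{\geq 1}$ such that $M^k$ has strictly positive entries. Given two copies of the chain started from different initial distributions, I would run them independently until they meet, and then couple them to move together. The positivity of $M^k$ guarantees that the meeting probability in any block of $k$ steps is uniformly bounded below by some $\epsilon > 0$, so the coupling time has geometric tails. Bounding total variation distance by the probability of non-coalescence then gives $\norm{1}{\vv{\mu}_0 M^t - \vv{\pi}} \to 0$ as $t\to\infty$ for every initial distribution $\vv{\mu}_0$, which is precisely the claimed convergence of the limiting distribution to the (unique) stationary distribution. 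Alternatively, one can invoke the Perron--Frobenius theorem directly: aperiodicity plus irreducibility ensures $M^k$ is primitive, so the eigenvalue $1$ is simple and strictly dominates in modulus all other eigenvalues, which yields the same conclusion through a spectral decomposition of $M$.
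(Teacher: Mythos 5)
The paper does not prove this lemma at all---it is stated as a classical result with the proof deferred to the cited textbooks (H\"aggstr\"om; Levin--Peres--Wilmer). Your argument is correct and is essentially the standard proof found in those references (existence via Perron--Frobenius or Ces\`aro averaging on the simplex, uniqueness via the non-negativity/irreducibility argument on $\vv{\pi}-c\vv{\pi}'$, and convergence via coupling or primitivity of $M^k$), so there is nothing to compare against in the paper itself.
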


It is not hard to see that $M'$ is irreducible and aperiodic for any $G$ (due to {\em restart}) and therefore, has a unique stationary distribution invariant of what distribution you start with. (In this work, we are primarily interested in the existence and uniqueness of the stationary distribution for which irreducibility is sufficient.) Therefore, the PageRank vector is well-defined.

Later, it will be helpful to examine PageRank as a system of linear equations. In particular, we can rewrite the above equation as,
\begin{equation*}
    \vv{\pi}(v_i) = \sum_{j\in[n]}{\vv{\pi}(v_j)\cdot M'[i,j]}
\end{equation*}

% We know \emph{irreducible} finite markov chain has a unique stationary distribution. We can rewrite the above equation as follows,
% \begin{equation*}
%     \vv{\pi} = \vv{\pi}\inb{\inp{1-\alpha}M + \dfrac{\alpha}{n}\cdot \vv{1}_{n\times n}}
% \end{equation*}
% where $\vv{1}_{n\times n}$ is the $n\times n$ matrix of all 1's. Observe that $\inp{1-\alpha}M + \dfrac{\alpha}{n}\cdot \vv{1}_{n\times n}$ is both row-stochastic and irreducible, and therefore $\vv{\pi}$ exists and is unique.

%
\subsection{Personalized PageRank and Contribution}
\label{sec:ppr}

PageRank is an emergent global property of the graph. After taking in all edges into consideration, it calculates a global ``ranking'' or importance score for each vertex.
{\em Personalized} PageRank \cite{pagerank} aims to capture the (bidirectional) relationship between any two vertices in the graph; it can be seen as the ``significance'' of some  target node, $t$ with respect to some source node, $s$ or also, the ``importance'' of $s$ from the perspective of $t$ \cite{10.1145/1052934.1052938,10.1007/978-3-540-77004-6_12,10471277,10.1145/2488388.2488433,10.1093/bioinformatics/btq680}. In this work, we will be more interested in the latter interpretation as we will measure the importance of users in the eventual ranking of servers and distribute rewards proportional to the relative importance (which we later call, {\em contribution}) of each user.

Formally, the Personalized PageRank vector for vertex $s$, $\vv{\pi_s} = \inp{\pi_{s,1}, \ldots, \pi_{s,n}}$ assigns a numeric {\em score} to each $v_i$ in the graph. Let $s = v_k$. We overload notation again and denote, $\vv{\pi_s}(v_i) = \pi_{s,i}$. For the same restart-probability constant as in PageRank, $\vv{\pi_s}$ is the (unique) solution to the following equation,
\[%\begin{equation*}
    \vv{\pi_s} = \vv{\pi_s}(1-\alpha)W_{\rm out}^{-1}M + \alpha \vv{e_k}
%\end{equation*}
\textrm{, with } \norm{1}{\vv{\pi_s}} = 1.\]

Personalized PageRank can be interpreted as a random walk over the graph that starts at $s$ and with probability $1-\alpha$ travels to a neighboring vertex along the path of the random walk,  %with probability specified by $W_{\rm out}^{-1}M$, 
and with probability $\alpha$ restarts at $s$.
Then, the significance of $t$ with respect to $s$ is how frequently the random walk passes through the vertex, $t$ and is equal to, $\vv{\pi_s}(t)$. In other words, we can also model Personalized PageRank as the stationary distribution of a row-stochastic markov chain,
\begin{equation*}
    M'_s = (1-\alpha)W_{\rm out}^{-1}M +\alpha E_s
\end{equation*}
where $E_s$ is the $n\times n$ matrix in which every row is equal to $\vv{e_s}$.

As mentioned, Personalized PageRank can also be interpreted from the view of $t$. Define the {\em inverse} Personalized PageRank vector, aka  {\em Contribution} PageRank vector, for vertex $t$, $\vv{\pi^{-1}_t}$ as the vector of $\vv{\pi_{s}}(t)$ for all $s\in V$. In other words, $\vv{\pi^{-1}_t}(s) = \vv{\pi_s}(t)$. Since 
$\norm{1}{\vv{\pi_s}}$, we have $\sum_{t\in V}\vv{\pi^{-1}_t}(s) = 1$.

We define, the {\em relative contribution} PageRank vector for $t$ as, 
\begin{equation*}
    \vv{\omega_t^{-1}}(s) = \dfrac{\vv{\pi^{-1}_t}(s)}{\sum_{s\in V} \vv{\pi^{-1}_t}(s)}
\end{equation*}

% We state without proof that $\sum_{s\in V} \vv{\pi^{-1}_t}(s) = \vv{\pi}(t)$, the PageRank of $t$ is the sum of all its contributions.
% Observe that, $\vv{\omega_t^{-1}} = \sf{N}\inp{\vv{\pi^{-1}_t}}$.

Personalized PageRank can also be generalized to any distribution of source nodes by changing $\vv{e_k}$ in the equation. For example, we can measure the significance of $t$ with respect to a set of vertices, $S$, of size $m$ instead of a single vertex by replacing $\vv{e}_s$ with a vector that is equal to $1/m$ at each coordinate that corresponds to $S$. Intuitively, this corresponds to the random walk restarting from any vertex in $S$ uniformly at random. Then, Personalized PageRank of $t$ with respect to $S$ is how frequently a random walk starting (and restarting) from $S$ passes through $t$. 

\cancel{
\subsection{Symmetric Games}
The games we introduce in this work will be {\em totally symmetric} cite[]. Every player will have the same action space and payoff function; the identities of the player will be irrelevant. More formally, the payoff structure of the game is preserved under any permutation of the players.

(Finite) Symmetric games enjoy a nice property---there exists a symmetric Nash equilibrium \rohit{cite nash 1951}.

\subsubsection{Symmetric Zero-Sum Games}
Our games will, in fact, also be zero-sum (with respect to expected utility). Rather, they will actually be {\em constant-sum}, i.e., the payoff of each player adds up to some fixed sum, $c$. Zero-sum and constant-sum games are in fact equivalent \cite{}.

$n$-player Symmetric constant-sum/zero-sum games exhibit some nicer properties with regards to Nash equilibria,
\begin{itemize}
    \item In any (Nash) equilibria, each player must receive a payoff of exactly, $c/n$. Informally, one can prove this by contradiction. Observe that if there exists a Nash in which some player achieves payoff greater than $c/n$, then there exists some player that achieves payoff less than $c/n$. Take any player with the least payoff. It can play an average of all the other players' strategies as a (mixed) strategy to strictly improve its payoff.

    \item 
\end{itemize}

\begin{itemize}
    \item talk about Nash equilibria,
    \item non-exploitable strategies in two-player zero sum games; does not geernalize to multiplayer; but true in our game!
    \item also unique nash doesn't generalize but it's true in our game!
    \item 
    
\end{itemize}

\subsection{Bayesian Games}

\subsubsection{(Bayesian) Nash Equilibrium}
} % end cancel

%
% \subsection{Graph Model}
% \rohit{maybe this submission doesn't get into layered graphs and dynamic participation??}
% \begin{definition}[Layered Graph]
%     Let $G = (V,E)$ be a directed, weighted graph. We say $G$ is \emph{$k$-layered} for some $k \in \mathbb{Z}^{+}$ if there exists a $k$-partition of $V$, $\inset{V_1, V_2, \ldots, V_k}$ such that for all $\inp{u, v, w} \in E$ if $u \in V_i$ then $v \in V_{i-1}$. In other words, the graph can be split into $k$ ``layers'' such that edges exist only between consecutive layers (in one direction). We denote $V_i$ as layer $i$.
% \end{definition}

\cancel{
\subsection{Fair Lotteries and Proof-of-Reputation Blockchain cite[KOZ20]}

\begin{itemize}
    \item blockchain and consensus
    \item proof of reputation
    \item correlation-free static reputation system
    \item implementing fair lottery
\end{itemize}
}

 \section{Technical Overview}\label{sec:tech_overview}

We start by discussing how PageRank can be tuned %used (and adapted)
to address the needs of Problem 1 above. %---distilling a meaningful reputation system.   
To provide some background for readers less familiar with PageRank: the idea of PageRank is to model the Internet as a graph, where the nodes are pages and a link in page $P$ pointing to page $P'$ corresponds to a directed edge in this graph from $P$ to $P'$. The algorithm can be seen as a random walk on this graph, which starts at a random page, and in every step one might either move along one of the outgoing edges of the current node or, with some given probability, ``teleport'' to some other node. Interestingly, this last idea was originally introduced as a technical ``hack'' to ensure that the walk does not get ``stuck'' on a sink---node without outgoing edges---or a {\em terminal} node---the only outgoing edge being a self-loop. But as we will discuss here, in solving our problem, teleportation is relevant even in graphs without sinks or terminals, and has an interesting interpretation that relates to how well-informed a node is. 

The (Page)Rank of any node is computed based on the frequency that the node is visited in comparison to the overall time of the walk. %The details of the algorithm are not necessary for understanding the intuition of our solution. 
In the Preliminaries (Section~\ref{sec:prelim}), we offer the details one needs to understand  the parameters of our solution, and how those are derived from PageRank. {\em Weighted PageRank}~\cite{1344743} is the straightforward extension of PageRank to a weighted graph, where an edge leaving a node $P$ is followed with probability proportional to its relative weight compared to the weights of other edges in $P$'s outgoing neighborhood.  

A {\em reputation graph}, encodes  the blockchain nodes' beliefs analogously to how the Web graph encodes the Internet: The vertices are the blockchain nodes, and an edge from a node $P_i$ to a node $P_j$ is weighted by  $P_i$'s belief about $P_j$'s trustworthiness: i.e., and edge $(P_i,P_j)$ with weight $w_{ij}\in[0,1]$ encodes $P_i$'s belief that the probability of $P_j$ following its specification is $w_{i,j}$. Given the above we investigate the question of whether PageRank demonstrates the anticipated behavior.

\paragraph{\textbf{Vanilla PageRank does not work.}}
\label{sec:vanillaPR} 
The first attempt is to consider standard PageRank. Here, we use the simplest possible setting as a benchmark, where every node is perfectly informed about the trustworthiness $\sf{R}_j$ of every node $s_j$ in the system. That is, the reputation graph is a clique with $m$ nodes (corresponding to $m$ blockchain nodes), where for each edge $(s_i,s_j), i,j\in[m]$ its associated weight is $w_{ij}=\sf{R}_j$ (normalized so that the total weight leaving each node is $1$.) 
%. \vassilis{the $R_j$ don't sum up to 1 as is. we need to normalize.}

In fact running PageRank on this graph yields, for each blockchain node,  $s_i$,  a rank, which we call {\em reputation score}, such that the vector of {\em computed reputation scores} is ratio-preserving (and therefore also order-preserving) with respect to the vector of the actual nodes' trustworthiness scores. (See Lemma~\ref{lem:pr-on-clique}.) That is, if nodes $s_i$ and $s_j$ have trustworthiness scores $\vrep_{i}$ and $\vrep_{j}$, respectively, then for the corresponding {\em reputation scores} $\vrsc_{i}$ and $\vrsc_{j}$, as computed by running PageRank on our encoding/graph it holds that $\vrsc_{i}/\vrsc_{j}=\vrep_{i}/\vrep_{j}$. This implies that ordering the nodes according to their computed reputation score is the same as ordering them according to their trustworthiness score (ground truth). We note in passing that such an order-preserving ratio is the best we could hope to achieve from PageRank; this is because in order to use PageRank we need to make the graph's adjacency matrix (row) stochastic---each row must sum up to 1---which inherently normalizes the nodes' beliefs. However, deriving such a ratio-preserving representation is sufficient for most if not all applications of reputation systems, where one primarily cares about the order and ratio between ranks. %Indeed, one can easily scale this up to derive an x-out-of-5$ 

The above appears as a validation of using PageRank in our scenario; but it can only treat the scenario with very strong assumptions of perfectly informed parties. In fact, it is not hard to see that even a single node being misinformed (or uninformed) about the trustworthiness of some other node can completely throw off the output of the algorithm.
\cancel{
consider the case in which the most influential node is the uninformed one...
}

\paragraph{\textbf{Designated PageRank for a subset.}}
The above negative result motivates the question: Are there interesting scenarios where PageRank does not demonstrate the above counterintuitive and undesired behavior? To answer this question we turn back to our original motivation: In several distributed systems scenarios, including blockchain systems, the nodes have distinct roles (e.g., blockchain users, proposers, validators, stake operators). As such only a subset of nodes might need to be ranked for each task. Hence one can ask whether or not by limiting the  meaningfulness requirement to only a specific subset we can get graph topologies, which make PageRank more robust to uninformed nodes. 

To this direction, consider a blockchain with $n$ nodes from a set $V$, where we are trying to rank a subset $\{v_{\ell_1},\ldots,v_{\ell_m}\}\subseteq V$ of these nodes, which we will refer to as the {\em (designated) reputation nodes} or simply as {\em servers.} In particular, we augment the original weighted ($n$-node) graph $G=(V,E)$ discussed above by adding $m$ sink nodes $\{s_1,\ldots,s_m\}=S$ that, intuitively are replicas of the nodes we wish to rank without their outgoing edges. In particular, the new graph consists of the $n+m$ nodes in $V\cup S$ and includes all edges in the original graph $G$ with the same weight, along with $n m$ new edges from each node $v_i\in V$ to each node $s_j\in S$ such that every such edge $(v_i,s_j)$, which we will refer to as {\em $v_i$'s endorsement of $s_j$}, has weight $\bar{w}_{i,j}=w_{i,\ell_j}$. I.e., the weight of each endorsement of $(v_i,s_j)$ any node $s_j\in S$ is the same as the weight of the edge to $s_j$'s replica in $G$ (i.e.,   $v_{\ell_j}$). 

Similar to our vanilla PageRank application, to run PageRank on this graph we will need to ensure that the adjacency matrix of this new graph is stochastic. For non-sinks, this can be done by normalizing the weights of their outgoing neighborhood as in the vanilla application of PageRank. However, for sinks we need to be careful: PageRank itself has an extension to handle sinks:\footnote{Sinks are problematic in PageRank as the associated random walk will get stuck if it reaches a sink.}
One can change the graph to add $n$ edges with equal weights from the sink to every node in the graph. This fix works for applications of PageRank, but fails in our asymmetric setting where nodes have different roles. Intuitively the reason is that introducing edges between nodes in $S$ which have uniform (in $S$) weights, bias the random walk towards them in a way that does not depend on the system's belief, which, in turns, steers PageRank away from these beliefs. Instead, we will change the above to have sinks point to every $V$-node with the same weight (but not to any other $S$ node). This will ensure that the random walk, and hence also the resulting {\em rank} of each $s_i$ is not biased by this modification.  In fact, for a similar reason, we need to ensure that PageRank's teleportation step might only land on a $V$ node---i.e., it never lands on a server. 

For clarity, we will refer to the above graph as {\em (Designated) Reputation Graph} and to the version of the PageRank algorithm applied to the above graph with the above teleportation modification as {\em Designated PageRank}. Our next technical contributions are  proving that Designated PageRank improves on the properties of PageRank for our (designated reputation nodes) scenario.

\paragraph{Perfect Information.} As a sanity check, we confirm  that if $G$ is the perfectly informed clique from above, then Designated PageRank produces a ratio-preserving (computed) reputation score/vector for the designated reputation nodes. 
We elaborate in Lemma~\ref{lem:pr-on-moon-clique}. 
%which states that as long as the nodes are perfectly informed about just the reputation nodes (even if they are fully incorrect about all other nodes), Designated PageRank will still produce a ratio-preserving reputation score. 

\paragraph{Perfect Information for only a subset.} Next, motivated by the fact that in the blockchain setting, there are natural situations in which a subset $V'\subset V$ of nodes is better informed about the trustworthiness of certain nodes than other---e.g., because they have been longer with the system---we consider a further fine-tuned version of Designated PageRank; here  all $V\setminus V'$-to-S edges are removed (and the associated weight are re-normalized. Also in this setting, we prove that as long as the nodes in $V'$ are perfectly informed, Designated PageRank on this new graph still yields an order-preserving reputation vector irrespectively of how uninformed nodes is  $V\setminus V'$ are, or how uninformed nodes in $V'$ are about nodes in  $V\setminus V'$. See Section~\ref{sec:hierarchical-info}.

\paragraph{Imperfect (consistently noisy) information.} Finally we investigate another relaxation on how informed nodes are, which is motivated by a more realistic blockchain scenario. We assume that all players have a noisy but  {\em consistent} belief about the reputation nodes' trustworthiness,  i.e., they all know a value $\vtype_j\in[0,1]$
(for each $j\in[\vns]$) where with high probability (confidence), $\vtype_j$ is within some $\epsilon$ from the the trustworthiness score of server $j$. This is a natural scenario which can, for example, occur when all nodes apply a consistent statistic on public data about each reputation nodes. Here we can prove that we can compute a good approximation of an order preserving reputation system. In particular, running Designated PageRank we can compute a reputation score $\vrsc_i$ for each reputation node $i\in[\vns]$,  so that  for any $i, j\in[m]$, with high probability, $\vrsc_{i}/\vrsc_{j}$ is within an $\epsilon''$  (diminishing in $m$) factor from $\vrep_{i}/\vrep_j$.   Under assumptions about the density of the trustworthiness score vector and/or the number of servers, the above allows to limit with high probability the number of {\em inversions} in the ordering derived by the decoded reputation scores. We present this case in \Cref{sec:imperf-id-eps}.

\begin{remark}
It is worth noting that although the above treatment is applied to rank nodes in the original node-set $V$ of $G$, all our results can be applied to score a set of nodes $V'$ disjoint with $V$, as long as the corresponding assumptions on the beliefs of parties in $V$ about trustworthiness of $V'$-nodes holds. 
As such, we believe that our treatment can be applied to allow scoring also events that are exogenous to the blockchain, as long that the nodes can consistently observe them. We find this an exciting future direction. 

In Section~\ref{sec:trep}, we formally present a generalization of Designated Reputation Graphs that we call \emph{Trustworthy Reputation (TRep) Graphs} that generalizes the concept of replicas to an arbitrary subset of nodes. This graph is in correspondence with and complements the game we define below. We prove all our results about Designated PageRank and Reputation Graphs in this generalized setting. 
\end{remark}

\subsection{Trustworthy Reputation Games} 
We next discuss our techniques for solving Problem~2 discussed in Section~\ref{sec:contributions}.

Our goal  is to design games that allow the blockchain to extract an order-preserving estimate of reputation from its users, assuming they are rational. Towards this goal, we  define  a class of games, which we term {\em trustworthy reputation games} (in short, {\em TRep games}) that capture the intuition of the above goal. In a nutshell, a {\em trustworthy reputation game} is an $\vnc$-agent Bayesian game against nature, where the (private) state of nature is encoded as a vector of $\vns$ values (probabilities) in $[0,1]$, ${\vvrep}=(\vrep_1,\ldots, \vrep_\vns)$.

%In our reputation-based blockchain scenario above, $\vns$ corresponds to the number of blockchain servers/nodes and each $\vrep_j$  correspond to the ground truth (trustworthiness score) of the $j$th blockchain node. Note that in this work we  focus on blochchains with a  fixed universe of $m$ nodes. % and blockchain users. 
%Extending our treatment to dynamically changing universe is an interesting future direction.  Notwithstanding, to allow most generality we allow these sets to be disjoint.

Let us use as motivation the task of the blockchain selecting a (small) committee of its nodes according to their reputation. We will refer to the nodes that might be selected as {\em servers,} and to the complete node set as {\em users} (using the notation from the previous section, $V$ is the set of users, and $S$ is the set of servers.) To form such a committee,  (anyone with read-access to) the blockchain needs to  \emph{discover} $\vv{\sf{R}}$ by observing messages posted by its users.

The set of pure actions of each user (node in $V$) in a TRep game consists of endorsements to one of the reputation nodes in $S$ or to other users; hence a mixed strategy (which is what the user should record on the blockchain to receive associated rewards) is a probability distribution over the set of the  $m$ servers and $n$ users. A key feature of trustworthy reputation games is that they come equipped with an efficient {\em decoding} function, which, given any strategy profile from a given class, computes a specified {\em reputation} function of the ground truth. Looking ahead, in the games we design, this class will consist of   Nash equilibria (NE) and the reputation function will provide an estimate %outputs a vector which approximates 
of the \emph{relative} trustworthiness between any two servers, 
$\vrep_{i}/\vrep_{j}$ for $i,j\in[\vns]$;  such an estimate  will allow us to order the servers according to their trustworthiness, which can then be used by a  blockchain to solve the above reputation-based  committee selection problem. 
As a definition contribution,  we provide a formal definition of TRep games in Section~\ref{sec:trep}. 

The next technical challenge is instantiating the TRep game with a utility function which (1) ensures that NEs can be decoded as above, and (2) can be realized using blockchain rewards. The intuition of our choice is that in order to incentivize users to endorse in a way consistent to their beliefs, we need to ensure that any rewards that are collected as a result of their endorsement are also distributed equitably according to the nodes relevance. This points to the following choice: Assume that nature samples a binary vector in the support of ${\vvrep}$, where the $i$th component is $1$ with probability $\vrep_i$; then if the $i$-th componnent is $1$ give a fixed rewards to $s_i$, and diffuse an equal amount to this reward to users (in $V$) according to how much their endorsements helped approximate the trustworthiness of the nodes; in other words, each user's reward is proportional to their contribution on the server's computed reputation. 

Interestingly, the PageRank literature has a notions which is aimed at exactly computing such a contribution. This is the so-called {\em Personalized } PageRank~\cite{10.1145/1052934.1052938,10.1007/978-3-540-77004-6_12,10471277,10.1145/2488388.2488433,10.1093/bioinformatics/btq680}, which, intuitively, computes the importance of every node $u$ relative to a specific node $v$, i.e., how relevant the endorsement of $u$ was in $v$ receiving its PageRank.   As we show, by instantiating the diffusion function in the above reward mechanism using the output of Personalized PageRank on the Designated Reputation Graphs from the previous section, we obtain a utility function which satisfies all the properties of TRep games. In particular we analyze all three scenarios discussed and prove the following stability results: 

\paragraph{Perfect Information.} Assume that   every user knows the vector $\vec{\vrep}=(\vrep_1,\ldots,\vrep_\vns)$ (This degenerate game instance makes the TRep game a complete information game against nature).  We prove the following result for this game, $\vgame_{\rm perfect}$: 

\begin{theorem}[informal]
 $\vgame_{\rm perfect}$ has a unique Nash equilibrium (NE), where by observing the players' NE strategies, we can compute a reputation score $\vrsc_i$ for each server $i\in[\vns]$, such that for any $i, j\in[m], \mid \vrsc_{i}/\vrsc_{j}=\vrep_{i}/\vrep_j$.  
\end{theorem}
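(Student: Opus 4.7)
The plan is to proceed in three stages: (i) identify a natural candidate equilibrium---the \emph{truthful symmetric profile} in which every user $v\in V$ plays the mixed strategy that endorses server $s_j\in S$ with probability proportional to $\vrep_j$ (and places no mass on $V$-to-$V$ edges)---and show it is a Nash equilibrium (NE); (ii) show uniqueness of this NE; and (iii) invoke Lemma~\ref{lem:pr-on-moon-clique} on the induced Designated Reputation Graph to obtain the claimed ratio-preserving reputation scores.

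For (i), I would first rewrite each user $v$'s expected utility in closed form. Since nature independently sets coordinate $j$ of $\vvrep$ to $1$ with probability $\vrep_j$ and the per-server reward $R$ is diffused to users according to relative Contribution PageRank, $v$'s expected utility equals $R\cdot\sum_{j\in[\vns]}\vrep_j\cdot\vv{\omega_{s_j}^{-1}}(v)$. Because $\sum_{v\in V}\vv{\omega_{s_j}^{-1}}(v)=1$ for every $j$, the game is expected-constant-sum across users with total payoff $R\sum_j\vrep_j$; by symmetry any NE distributes this equally to $R\sum_j\vrep_j/\vnc$ per player. I would then verify that, when all other players play the truthful profile, no unilateral deviation of $v$ strictly increases $\sum_j\vrep_j\cdot\vv{\omega_{s_j}^{-1}}(v)$. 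This reduces to showing that aligning $v$'s endorsement weights with the $L_1$-normalization of $\vvrep$ is the pointwise maximizer of the weighted contribution; the intuition is that the opponents already saturate the complementary endorsement directions, so any mass $v$ shifts away from the truthful distribution strictly reduces $v$'s share of at least one server's Contribution PageRank without compensating gains elsewhere.

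For (ii), I would combine the constant-sum structure with a strict (quasi-)concavity argument. Fix an arbitrary opponent profile and view $v$'s expected utility as a function of $v$'s own outgoing-weight distribution over $S\cup V$. Each Contribution PageRank $\vv{\pi^{-1}_{s_j}}(v)$ is a rational function of $v$'s outgoing weights, obtained from the stationary-distribution equation of the irreducible chain $M'_{s_j}$; the goal is to show that the positive combination $\sum_j\vrep_j\cdot\vv{\omega_{s_j}^{-1}}(v)$ is strictly concave on $v$'s simplex of endorsement distributions. Strict concavity forces a unique best response for every opponent profile, and a Rosen-style diagonal-strict-concavity argument together with the symmetry of $\vgame_{\rm perfect}$ then pins down the truthful symmetric profile as the unique NE. Once this is established, every user endorses $S$ with weights proportional to $\vvrep$, and so the induced Designated Reputation Graph is exactly the perfectly informed augmented clique to which Lemma~\ref{lem:pr-on-moon-clique} applies, delivering reputation scores with $\vrsc_i/\vrsc_j=\vrep_i/\vrep_j$ for all $i,j\in[\vns]$.

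The hardest step, I expect, will be the strict concavity claim in (ii). It requires unfolding the closed-form dependence of the stationary distribution of $M'_{s_j}$ on a single row of its transition matrix---either via the Sherman--Morrison formula applied to the resolvent $(I-(1-\alpha)W_{\rm out}^{-1}M)^{-1}$ or via the Markov-chain-tree theorem expressing $\vv{\pi^{-1}_{s_j}}(v)$ as a ratio of multilinear polynomials in $v$'s weights---and then aggregating strict concavity across all $j$ using the positive coefficients $\vrep_j$. If concavity is only weak on the boundary of the simplex, I would fall back on a direct perturbation argument: any deviation from truthful weights can be decomposed into infinitesimal mass-swaps between coordinates, and I would show that each such swap has a strictly negative first-order effect on $\sum_j\vrep_j\cdot\vv{\omega_{s_j}^{-1}}(v)$ unless the current weights are already the $L_1$-normalization of $\vvrep$.
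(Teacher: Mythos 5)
Your existence step (i) and decoding step (iii) are essentially sound and close in spirit to the paper, but your uniqueness step (ii) has a genuine gap. Strict concavity of each player's utility in their \emph{own} strategy only gives a unique best response to each fixed opponent profile; it does not imply uniqueness of the Nash equilibrium. To get uniqueness via Rosen you would need \emph{diagonal} strict concavity, i.e.\ strict monotonicity of the joint pseudo-gradient map $\bigl(\nabla_i u_i(\vv{s})\bigr)_{i\in[n]}$ over the product of simplices --- a condition on the interaction across players that you neither state precisely nor verify, and which is the actual mathematical content of the uniqueness claim for this Tullock-style proportional-sharing utility. Your fallback (a perturbation argument around the truthful profile) only re-proves that truth-telling is a strict best response to truthful opponents, which again says nothing about the non-existence of \emph{other} equilibria. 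The paper's route (Lemma~\ref{lem:unique-nash}) avoids this entirely: it shows that playing $\sf{N}(\vv{\sf{R}})$ guarantees the payoff $L=\inp{\sum_j \sf{R}_j}/n$ against \emph{any} opponent profile (by showing the utility, as a function of the opponents' aggregate $s^{(-i)}_j$, is strictly convex with minimum value $L$ attained only at $s^{(-i)}_j=(n-1)\sf{N}(\vv{\sf{R}})_j$); combined with the constant-sum structure this forces every NE payoff to equal $L$, hence every opponent aggregate to equal $(n-1)\sf{N}(\vv{\sf{R}})_j$ for every $i$, and the resulting full-rank linear system pins down every individual strategy to $\sf{N}(\vv{\sf{R}})$. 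That maximin-plus-linear-algebra argument is elementary and is what you are missing; also note that your assertion that every NE of a symmetric constant-sum game pays $c/n$ is itself a consequence of the guarantee argument, not of symmetry alone.

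Two smaller points. First, you should invoke the analogue of Observation~\ref{obs:dominating} (endorsing other users is dominated) before writing the utility in the closed proportional form $\sum_j \vrep_j\, s^{(i)}_j/\sum_k s^{(k)}_j$; otherwise the contribution terms do not simplify and your concavity claims over the full $S\cup V$ simplex are not justified. Second, the graph induced by the NE is \emph{not} the perfectly informed augmented clique of Lemma~\ref{lem:pr-on-moon-clique}: in equilibrium users place zero mass on user-to-user edges, so the induced TRep graph is bipartite. The ratio-preservation conclusion still holds, but it is obtained (as in Theorem~\ref{thm:perfect-decoder}) by plugging the symmetric profile $\sf{N}(\vv{\sf{R}})$ directly into the PageRank decoder $\cal{D}_{\rm PR}$ and observing that the normalization cancels the $\vv{\pi}(v_i)$ factors, not by appealing to the augmented-clique lemma.
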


Interestingly, the NE strategy in the above game is not reporting all the weights, but rather just the ones pointing to the servers in $S$. Notwithstanding,  as we prove, running Designated PageRank on this subgraph induces (on $S$) the same ratio-preserving reputation system as running it on the full graph. In fact, a similar effect is observed also in the other two cases which follow. In our view, this phenomenon attest to how  non-trivial it is to chose the right instantiation for a utility function. 

\paragraph{Perfect Information for only a subset.} The above statement holds even in this case. This is stated and proved in Section~\ref{sec:hierarchical-info}.

\paragraph{Imperfect (consistently noisy) information.}  Finally we look at the game where every user has a noisy but  {\em consistent} view of nature's state, i.e., they all know a value $\vtype_j\in[0,1]$
(for each $j\in[\vns]$) where with high probability (confidence), $\vtype_j$ is within some $\epsilon$ from the the trustworthiness score of server $j$. % We consider this as a natural scenario which can, for example, occur by all parties applying a consistent statistic on public data about each server. 
We prove the following result for this game, $\vgame_{\rm noisy}$ in \Cref{sec:imperf-id-eps}.

\begin{theorem}[informal]
In $\vgame_{\rm noisy}$,  playing according to their  beliefs/types is $\epsilon'$-Nash for an $\epsilon'$ that diminishes with the size of the user set. By observing the players'  strategies, we can compute a reputation score $\vrsc_i$ for each server $i\in[\vns]$,  so that such that for any $i, j\in[m]$, with high probability, $\vrsc_{i}/\vrsc_{j}$ is within an $\epsilon''$  (diminishing in $m$) factor from $\vrep_{i}/\vrep_j$.  
\end{theorem}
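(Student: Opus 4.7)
My plan is to lift the perfect-information theorem to the noisy setting via a combined Lipschitz-perturbation and concentration argument. The candidate equilibrium strategy is the \emph{type-truthful} profile: each user endorses server $s_j$ with weight proportional to the (shared) noisy belief $\vtype_j$. Since $|\vtype_j - \vrep_j| \leq \epsilon$ for every $j \in [\vns]$ with high probability, the induced TRep graph is an $O(\epsilon)$ coordinate-wise perturbation of the perfect-information graph analyzed in the previous theorem, so the whole argument reduces to quantifying how this perturbation, together with the bounded-influence structure of the game, propagates through Designated and Personalized PageRank.

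For the $\epsilon'$-Nash claim I would first show that each user's expected utility is Lipschitz in the strategy profile. The utility is a linear combination of Bernoulli rewards (with means $\vrep_j$) diffused according to the relative-contribution vectors $\vv{\omega_{s_j}^{-1}}$, so its Lipschitz constant inherits from Personalized PageRank's sensitivity in the edge weights. The latter follows from standard sensitivity bounds for stationary distributions of irreducible Markov chains, with the Lipschitz constant controlled by the restart probability $\alpha$ (which lower-bounds the spectral gap of the Designated-PageRank chain). At the perfect-information profile, the previous theorem guarantees that no deviation is profitable; at the type-truthful profile, the Lipschitz bound then limits any deviation gain to $O(\epsilon)$ from the edge-weight perturbation, plus an $O(1/\vnc)$ term capturing the diminishing marginal influence of a single user on the global contribution vectors when there are $\vnc$ users. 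Setting $\epsilon' = O(\epsilon) + O(1/\vnc)$ establishes the $\epsilon'$-Nash claim, vanishing as $\vnc$ grows.

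For the reputation-recovery claim I would combine the perturbation bound on Designated PageRank with a concentration argument over the $\vns$ coordinates of the noise. By Lemma~\ref{lem:pr-on-moon-clique}, Designated PageRank on the perfect-information TRep graph produces exactly ratio-preserving reputation scores. By the same Markov-chain sensitivity bound as above, running it on the $O(\epsilon)$-perturbed graph induced by the actually-played type-truthful strategies yields reputation scores satisfying $\vrsc_i/\vrsc_j = (\vrep_i/\vrep_j)(1 \pm O(\epsilon))$ deterministically. The $\epsilon''$ diminishing in $\vns$ then comes from the averaging structure of the chain: with $\vns$ replicas, the per-server noise is aggregated through the $\alpha/\vns$-weighted teleportation to the $S$-vertices into an effective per-pair deviation that scales like $O(\epsilon/\sqrt{\vns})$ with probability $1 - e^{-\Omega(\vns)}$, via a Hoeffding bound over the independent noise components $\vtype_j - \vrep_j$. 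A union bound over the $O(\vns^2)$ server pairs then converts this pointwise bound into the simultaneous ``for any $i,j$'' guarantee stated.

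The main obstacle will be the perturbation analysis of Personalized PageRank under simultaneous perturbations of many edge weights while ranging over all source distributions used to build the contribution vectors $\vv{\omega_{s_j}^{-1}}$. Generic stationary-distribution sensitivity bounds are multiplicative and can blow up when the perturbed chain is near-reducible; one must exploit the uniform restart component of mass $\alpha$ introduced by Designated PageRank's teleportation over the user set to keep the spectral gap (and hence the Lipschitz constant) bounded independently of the perturbation. A secondary subtlety is controlling the normalization defining $\vv{\omega_{s_j}^{-1}}$ after perturbation, since small denominators in the relative-contribution ratio would otherwise amplify the bound; here one must show that the aggregate contribution mass to each server is bounded away from zero, which follows once $\alpha$ is bounded away from $0$ and the user set has at least one node with a positive endorsement to $s_j$.
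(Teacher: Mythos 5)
Your route diverges substantially from the paper's, and two of its load-bearing claims do not hold. First, the decodability part: you claim the per-pair ratio error improves to $O(\epsilon/\sqrt{m})$ by a Hoeffding bound that ``aggregates'' the per-server noise through teleportation to the $S$-vertices. In this model the teleportation of Designated PageRank is restricted to the user set $\cal{V}$ (it never lands on a server), and---more fundamentally---all $n$ users share the \emph{same} noisy type $\vv{\sf{R'}}$, so when everyone plays $\sf{N}(\vv{\sf{R'}})$ the decoded score of server $j$ is proportional to $\sf{R'}_j = \sf{R}_j + \delta_j\epsilon'$ exactly: the noise on an individual server's score is not averaged over anything, and $\vrsc_i/\vrsc_j = \sf{R'}_i/\sf{R'}_j$ carries the full per-coordinate error of order $\epsilon/\min(\sf{R}_i,\sf{R}_j)$. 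The only place concentration over the $m$ coordinates enters the paper's proof is in controlling the \emph{normalization}, $\norm{1}{\vv{\sf{R'}}}\approx\norm{1}{\vv{\sf{R}}}$ via Hoeffding, which is what makes the $L_\infty$ error $\epsilon/\norm{1}{\vv{\sf{R}}}$ diminish in $m$ under a density assumption on $\vv{\sf{R}}$; your $O(\epsilon/\sqrt{m})$ per-pair bound holding with probability $1-e^{-\Omega(m)}$ is not achievable here. Second, your conclusion that $\epsilon'=O(\epsilon)+O(1/n)$ ``vanishes as $n$ grows'' is false for fixed $\epsilon$; the paper must (and does) assume $\epsilon=O(1/n)$ to obtain $\epsilon'=O(m^2/n)$, and the $m^2$ factor emerges from an explicit best-response computation that a generic Lipschitz bound would still have to reproduce.

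Beyond these gaps, you are working much harder than necessary on the machinery. The paper never performs a perturbation analysis of Personalized PageRank: since endorsing other users is a dominated action, the induced graph is bipartite, the relative contribution collapses to the elementary ratio $s^{(i)}_j/\sum_{k}s^{(k)}_j$, and the expected utility takes the closed form $\sum_{j\in[m]} x_j\sf{R}_j/\bigl((n-1)\sf{N}(\vv{\sf{R'}})_j + x_j\bigr)$, which is strictly concave in $\vv{x}$; the unique best response is computed explicitly and the deviation gain is bounded by a first-order expansion. The spectral-gap and near-reducibility obstacles you flag as the ``main obstacle'' simply do not arise on that route. To salvage your approach you would need to (i) drop the $\sqrt{m}$ concentration claim for the ratios and replace it with the normalization argument, and (ii) make the dependence of $\epsilon'$ on $\epsilon$, $m$, and $n$ explicit rather than asserting it vanishes with $n$ alone.
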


Under assumptions about the density of the trustworthiness score vector and/or the number of servers, the above allows to limit with high probability the number of {\em inversions} in the ordering derived by the decoded reputation scores. 

\subsection{Application to PoR/PoS-Blockchains.}
Finally, in Appendix~\ref{sec:PoR/PoS} we discuss how the above games can be used within a reputation-based blockchain. The idea is that we can associate nature's state (i.e., trustworthiness score of each of each nodes/servers) with the probability that that the node follows the protocol. The blockchain, then, mints for each server who followed the protocol a fixed amount of coins which is distributed according to our (Personalized PageRank-based) utility function. For clarity, we focus our treatment on the hybrid Proof-of-Reputation/Proof-of-Stake blockcain by~\cite{KleOstZik20}. However, similar arguments can also be applied to other systems that rely on reputation-based lotteries. 

\section{Trustworthy Reputation Games}
\label{sec:trep}
We formally define a new class of games called {\em Trustworthy Reputation Games}. We show how to use PageRank to define meaningful utilities, and a ``meta-objective'' of the game, which we formalize using the notion of {\em decodability}, which we introduce to our games.  
% the game being played and the decodability property. 
% Later, we represent it abstractly as a bi-partite graph to model client-server networks and specifically blockchain-related settings.
%
As a reminder, the  goal of the model is for users/agents to act in accordance to their beliefs about the trustworthiness of the servers and/or other agents. As a game, we model the trustworthiness scores of the servers as {\em nature's} (private) state and the belief as a {\em type}. We thus, model our game as a Bayesian game~\cite{Harsanyi1982,Zamir2009} where nature assigns a type to each player representing each players' belief about nature's private state before the game begins.

\begin{definition}[Trustworthy Reputation Game with $\inp{\cal{E},f}$-Decodability]
Let $n, m \in \mathbb{Z}_{\geq 2}$.
 A {\em Trustworthy Reputation Game} or {\em TRep Game} is a {\em simultaneous} Bayesian game {\em against nature} and is defined as a tuple,\\ $\cal{G} = \inp{\cal{P}, \cal{A} = \prod_{i\in[n]}{\cal{A}_i}, \inp{u_i}_{i\in[n]}, \inp{T_i}_{i\in[n]}, \inp{\sf{R}_j}_{j\in[m]}}$ where,
\begin{enumerate}[label=(\alph*)]
    \item \textbf{Players.} $\cal{P} = \inp{P_i}_{i\in[n]}$ is the set of $n$ agents/players,
    \item \textbf{Action Space.} $\cal{A}$ is the action space (or equivalently pure strategy space since the game is simultaneous) of all the players. Every player has the same set of actions, $\cal{A}_i = \inb{m+n}$, and so $\cal{A} = \inb{m+n}^n$. Let $a_i = j$ if $P_i$ picks $j$ as its action. Let $\Delta_i$ be the set of all probability distributions over $\cal{A}_i$, i.e., the set of all (mixed) strategies of $P_i$. Strategies are represented as an $(m+n)$-vector of probabilities. Let $\Delta = \prod_{i\in[n]}\Delta_i$, the set of all strategy profiles of the game.
    \item \textbf{Nature.} $\inp{\sf{R}_j}_{j\in[m]}$ is nature's (private) state. Each $\sf{R}_j\in\inb{0,1}$ is interpreted as a probability. We treat Nature as a non-strategic player that always plays the same strategy and has no payoff from the game. Nature's move is determined by flipping $m$ biased coins or Bernoulli random variables, $H_j\sim {\rm Ber}(\sf{R}_j)$, i.e., $\Pr\inb{H_j = 1} = \sf{R}_j$. Denote by $\cal{N} = \inset{0,1}^m$, the space of nature's move, and $\Delta_{\cal{N}}$, the set of all probability distributions over $\cal{N}$. So, nature will always play the (mixed) strategy, $\vv{\sf{R}} = \inp{\sf{R}_1, \ldots, \sf{R}_m} \in \Delta_{\cal{N}}$.

    \item \textbf{Types.}   Players may not have perfect information about nature's state which we capture by assigning a ``type'' to each player. 
    Let $T_i$ be the type space of $P_i$. $\vv{t_i} \in T_i$ will represent $P_i$'s belief about nature's state. (In the following sections, we will give instantiations of the general game by considering different type spaces.)
     Denote by $\mathcal{T} = \prod_{i\in[n]} T_i$, the space of all players' types.

     % \item \textbf{Common prior.} $p\colon \cal{T} \to \inb{0,1}$ is a probability distribution over the type space. We assume that every player knows $p$.
    
    \item \textbf{Utilities.} $u_i\colon \cal{T} \times \cal{A} \times \cal{N} \to \inb{0,1}$ is the utility of $P_i$ and depends on the players' types, actions and nature's move. When referring to the (expected) utility with respected to mixed strategies, we overload notation as is standard and write, $u_i\colon \cal{T}\times\Delta\times \Delta_{\cal{N}} \to\inb{0,1}$. We formally define the utilities in the next section.
    % Since we are looking at Bayesian games where the players may be uncertain about nature's state, the utility can be interpreted as their (perceived) payoff given their belief about nature (their type).
    %For the games we present in this work, $u_i$ will only depend on the type of $P_i$ and not others so we omit taking as input the other players types and equivalently write, $u_i\colon T_i \times \cal{A}$.

    \item \textbf{Decodability.} Let $\cal{E}\subseteq \Delta$ be some set of possible strategy profiles of $\cal{G}$, and let\linebreak $f~\colon~\inb{0,1}^m~\to~ \inb{0,1}^*$ be a possibly randomized function with codomain as vectors of arbitrary length over $\inb{0,1}$.
 We say that a TRep game, $\cal{G}$ is {\em $\inp{\cal{E},f}$-decodable} if there exists an {\em efficient} decoding function, $\cal{D}\colon \Delta \to \inb{0,1}^*$ such that when sampled using any $\vv{e}\in\cal{E}$, is identical to $f$ sampled using nature's private state, i.e.,
    \begin{equation*}
        {\cal{D}(\vv{e}) \simeq f\inp{\sf{R}_1, \ldots, \sf{R}_m}}
    \end{equation*}
   
    We call $f$ the {\em reputation} function, and we say, the set of strategy profiles, $\cal{E}$, $f$-{\em encodes} nature's private state.

%     The game proceeds as follows,
%    \begin{enumerate}[label=(\arabic*)]
%     \item Nature assigns each player, $P_i$, a type $\vv{t_i}\in T_i$,
    
%     \item Simultaneously, each $P_i \in \cal{P}$ plays $a_i \in \cal{A}$, and nature plays its move accordingly.
% \end{enumerate}

\end{enumerate}

\end{definition}

In words, the idea of the game is for the players' to encode some function over nature's private state with a suitable set of strategy profiles. They may have some belief about nature's state {\em a priori} which influences their utility and strategies.

Before we introduce the utilities, we translate the structure of our game into a graph which formalizes the designated model we are interested in.

\begin{definition}[Trustworthy Reputation Graphs]
    \\Let $G=\inp{V=\cal{V}\sqcup\hat{\cal{V}}, E, \vv{\sf{R}}}$ be a directed, edge-weighted, partially vertex-weighted graph, where $\vv{\sf{R}}$ is the vector of vertex weights on $\hat{\cal{V}}$ only, and $\inp{u, v, w} \in E$ iff there exists an edge from $u\in{\cal{V}}$ to $v \in V$ with weight $w$. We denote by $\cal{V} = \inset{v_1, \ldots, v_n}$  the set of $n$ {\em users} and by $\hat{\cal{V}} = \inset{\hat{v}_1, \ldots, \hat{v}_m}$  the set of $m$ {\em servers}.
 
    $G$ is a \emph{Trustworthy Reputation Graph} or \emph{TRep Graph} if,
    \begin{enumerate}[label=(\alph*)]
        \item vertices in $\hat{\cal{V}}$ have out-degree 0, 
        % \rohit{at this point it should already makes sense why we don't allow outgoing edges from $\hat{\cal{V}}$ and it's because we're interested in their rank as calculated by everyone else so they should not be allowed to affect the system.}
        \item weights are non-negative and for all $u\in\cal{V}$,  $\sum_{\inp{u,\cdot,w}}w = 1$, i.e., the weights on the outgoing edges from each vertex, respectively, sum up to 1,
        \item $\vv{\sf{R}} = \inp{\sf{R}_1, \ldots, \sf{R}_m}$, and $\sf{R}_j\in\inb{0,1}$ for all $j\in[m]$, i.e., the weight on each vertex in $\hat{\cal{V}}$ is in $\inb{0,1}$.
    \end{enumerate}
    
        For each $v_i\in\cal{V}$, let $\vv{w^{(i)}}$ be the vector of edge weights. So $w^{(i)}_j $ is equal to $w$ if $\inp{v_i, \hat{v}_j, w} \in E$ and 0 otherwise. We call this vector, $\vv{w^{(i)}}$ the \emph{endorsements} of $v_i$.

        For each $\hat{v}_j\in\hat{\cal{V}}$, denote $\sf{R}_j$ the \emph{trustworthiness score} of $\hat{v}_j$. 
        %The function $f$ translates $\hat{\cal{V}}$'s trustworthiness scores to reputation scores.
\end{definition}

In effect, the edge weights from each vertex in $\cal{V}$ are modeled as a probability distribution over $V$.

It is easy to see the correspondence between TRep games and TRep graphs. The mixed strategies of each player correspond to the outgoing weighted edges from $\cal{V}$ (with respect to some indexing), and nature's state corresponds to the vertex weights. In TRep games, $a_i = j$ for $j\in\inset{1,\ldots, m}$ is $P_i$ \emph{endorsing}  server $j$, and for $j\in\inset{m+1, \ldots m+n}$ is $P_i$ endorsing player $P_{j-m}$.

% \rohit{talk about the adjacency matrix or the random surfer model with reset to everyone uniformly after reaching a sink.}

We can now %are now ready to 
introduce the PageRank-inspired utility that underlies TRep games.

\subsection{Defining Utilities and Decoding Using PageRank}
\label{sec:util-pr}
We examine the (weighted) PageRank algorithm on TRep graphs.
First, we show how PageRank can be used to evaluate servers which we later use to derive the decoder. Next, we show a corresponding utility function,  defined using Personalized PageRank, that when used in TRep games enable PageRank's evaluation to recover (i.e., decode) the (relative) trustworthiness of the servers.

\paragraph{\textbf{Evaluating Servers.}}
Let $G$ be a TRep graph.
We cannot use such a graph structure readily with PageRank due to the issue of {\em dangling} vertices. Observe that every vertex in $\hat{\cal{V}}$ is dangling (out-degree 0). 
There are several folklore fixes. One fix in particular is to transition to any vertex with equal probability. In effect, the random walk \emph{resets} (with probability 1) every time it encounters a dangling vertex or ``sink''. We build on this fix and make the following modification: instead of teleporting to \emph{any} vertex, we restrict teleportation to vertices in $\cal{V}$ with equal probability. This modification is not arbitrary and is in fact reminiscent of Personalized PageRank---in our model, we are only interested in the importance of the servers, $\hat{\cal{V}}$, derived from the users, $\hat{\cal{V}}$; more specifically, we are interested in the relative contribution PageRank of the users for each server. Thus, we are not interested in the contribution of $\hat{\cal{V}}$ to itself and each other. 

To this extent, we restrict teleportation (i.e., when the random surfer decides to reset) to only $\cal{V}$ uniformly at random.
%(the effect of this is equivalent to adding uniform-weight edges from each server to all users).

% We make a modification to the general PageRank algorithm reminiscent of Personalized PageRank. In the case where the random surfer decides to restart, instead of teleporting to any vertex uniformly at random, we restrict its teleportation to $\cal{V}$ uniformly at random. 
% That is, a random walk will always restart from $\cal{V}$. The intuition behind this modification is analogous to (generalized) Personalized PageRank---in our model, we are only interested in the importance of the servers, $\hat{\cal{V}}$, derived from the users; more specifically, we are interested in the relative contribution PageRank of the users for each server.

Let $T$ be the transition matrix for $G$ with the above fix for sinks. We arrange the vertices such that the first $n$ indices correspond to $\cal{V}$ and the next $m$ correspond to $\hat{\cal{V}}$.
Then, we wish to calculate the PageRank $\vv{\pi}$ as the solution of the following,
\begin{equation*}
    \vv{\pi} = \vv{\pi}\inp{\inp{1-\alpha}{T}} + \alpha
    \begin{bmatrix}
        \frac{1}{n}\cdot \vv{1}_{n\times 1} &
        \vv{0}_{m\times 1}
    \end{bmatrix}
\end{equation*}

Let $T'$ be the Markov chain underlying the above relation (i.e., including restart).
We remark that $\vv{\pi}$ exists and is unique. Observe that clearly every user is {\em accessible} from any server. As for the other way around, there are two cases, (1) for all $v\in\hat{\cal{V}}$, there exists $u\in\cal{V}$ such that there is an edge from $u$ to $v$. In that case, the underlying Markov chain is clearly irreducible; (2) there exists $v\in\hat{\cal{V}}$, such that for all $u\in\cal{V}$, there is no edge from $u$ to $v$. In that case, $v$ is a {\em transient} state and its long term distribution approaches 0. In some sense, $v$ is irrelevant to the long term behavior of the Markov chain and can be ignored.
% \rohit{this proof even works when we just connect all servers to users!! no need for self loop alteration or anything... ugh how did I miss this...}
So long as the entire graph is not transient states, we can prune all such transient servers where their probability is 0 in the stationary distribution; the remainder of the states will constitute an irreducible Markov chain for which there exists a unique stationary distribution.
% (Moreover, the self-loops on $\hat{\cal{V}}$ also make $T'$ \emph{aperiodic} and thus the limiting distribution converges to the stationary distribution.)
% \rohit{turns out you just need irreducibility for uniqueness. aperiodic guarantees convergence but we don't care about that here...}
Note that this fits our intuition precisely---$v$'s {\em importance} as indicated by the stationary distribution is 0 as no user {\em trusts} it.

For the prescribed graph structure, we can calculate the PageRank of each server explicitly using the system of linear equations. Let $\Pr\inb{u\to v \mid T} = w^{(i)}_j$, the probability of transitioning from vertex $u$ to $v$ as specified in $T$.
Note that $\Pr[\hat{v}_i\to \hat{v}_j\mid T] = 0$ for any $i,j\in[m]$. Also, for $v\in V$, $\Pr[v\to \hat{v}_i\mid T'] = (1-\alpha) \Pr[v\to \hat{v}_i\mid T]$ as with probability $\alpha$ we restart to $\cal{V}$. Therefore, for each $\hat{v}\in\hat{\cal{V}}$, we have,
\begin{align*}
    \vv{\pi}(\hat{v})
    &= \sum_{\hat{u}\in{\hat{\cal{V}}}} \Pr\inb{\hat{u}\to \hat{v}\mid T'} \cdot \vv{\pi}(\hat{u}) + \sum_{v\in{\cal{V}}} \Pr\inb{v\to \hat{v}\mid T'} \cdot \vv{\pi}(v)
    \\
%    &= (1-\alpha)\Pr\inb{\hat{v}\to \hat{v}\mid T}\vv{\pi}(\hat{v}) + (1-\alpha)\sum_{v\in{\cal{V}}} \Pr\inb{v\to \hat{v}\mid T} \cdot \vv{\pi}(v)
 %   \\
    &= 0 + (1-\alpha)\sum_{v\in{\cal{V}}} \Pr\inb{v\to \hat{v}\mid T} \cdot \vv{\pi}(v)
    % \\
    % \implies
    % \vv{\pi}(\hat{v}) &= \frac{1-\alpha}{\alpha}\sum_{v\in{\cal{V}}} \Pr\inb{v\to \hat{v}\mid T} \cdot \vv{\pi}(v) \stepcounter{equation}\tag{\theequation}\label{eq:server-pr}
\end{align*}

Therefore, the PageRank of each server is directly proportional to the sum of its incoming {\em endorsements} from the users, $\cal{V}$ weighted by their PageRank. Since we are only interested in the PageRank of the servers, we normalize on the set of the servers and define, the ``reputation score'' of server $j$, $\hat{v}_j\in\hat{\cal{V}}$ as, 
\begin{equation}\label{eq:rep_score}
    \vrsc_j = \dfrac{\vv{\pi}(\hat{v}_j)}{\sum_{\hat{v}\in\hat{\cal{V}}}{\vv{\pi}(\hat{v})}} = \dfrac{ \sum_{v\in\cal{V}} \Pr[v\to \hat{v}_j \mid T] \cdot \vv{\pi}(v)}{\sum_{\hat{v}\in\hat{\cal{V}}}{\sum_{v\in\cal{V}} \Pr[v\to \hat{v} \mid T] \cdot \vv{\pi}(v)}}
\end{equation}

such that, $\sum_{j\in[n]}{\rho_j} = 1$.

Later, we show how assuming rational players, TRep games---with the utilities as below---ensure the reputation scores correspond to the relative trustworthiness.

\subsubsection{Utilities as a Function of Contribution}
We would like to reward players based on their relative \emph{contribution} to each server. To this extent, we analyze {\em Contribution PageRank} on $T'$. 
Observe that servers have contributions to each other indirectly through users. To exclude their effect on each other (as we are only interested in the users' contributions to the servers), we define relative contribution of $v$ to $\hat{v}$ \emph{restricted} to $\cal{V}$ as,
\begin{equation*}
    \vv{\omega_{\hat{v}\mid \cal{V}}^{-1}}(v) = \dfrac{\vv{\pi^{-1}_t}(v)}{\sum_{u\in \cal{V}} \vv{\pi^{-1}_{\hat{v}}}(u)}
\end{equation*}
 % we use the relative contribution with respect to only $\cal{V}$, $\vv{\omega_{\hat{v}\mid \cal{V}}^{-1}}(v)$.

% Therefore, the relative contribution, $\vv{\omega_{v}^{-1}}(u)$ is proportional to, $\frac{\Pr[u\to v\mid T]}{\vv{\pi}(v)}$
 
% Given that the reputation score is defined {\em linearly} with respect to each user, we can examine the ``contribution'' of  user $u$ to the final reputation (Personalized PageRank) of server $v$ as simply the ratio, $\dfrac{\Pr[u\to v]}{R(v)}$. This is precisely looking at the Contribution PageRank vector but $L_1$-normalized so that users' contributions can be measured relatively to each other.

We will use this quantity when determining the utilities of each player. Recall that in our model, the servers perform some prescribed task such that the behavior/correctness of the server can be evaluated. Specifically, we restrict ourselves to Bernoulli random variables with probabilities from $\vv{\sf{R}}$. For each server that behaves ``correctly'', we distribute a fraction of unit {\em reward} proportional to the ``contribution'' of the user to the server's reputation score.

And so, we derive the (expected) payoff of user $v$ in the graph as,
\begin{equation*}
\sum_{\hat{v}\in\hat{\cal{V}}}{\sf{R}_j \cdot {\vv{\omega_{\hat{v}\mid \cal{V}}^{-1}}(v)}}
\end{equation*}

When translating this to TRep games, $v_i \in \cal{V}$ translates to agent $P_i$, the transition probabilities map to (mixed) strategies, and the trustworthiness of the servers, $\hat{\cal{V}}$ map to nature's private state which fixes nature's strategy as $\vv{\sf{R}}$. Then, the utility is a function of the TRep graph induced by the strategies and nature's private state.
While players have beliefs about the trustworthiness of the servers (i.e., nature's state in the game), their utility does not depend on it. In full generality, the types can be used to establish a distribution over nature's state enabling Bayesian probability to be used in the analysis of the best response. Thus, we omit the types from the domain of the utility function for brevity.

Let $\vv{s} = \inp{\vv{s^{(1)}}, \ldots, \vv{s^{(n)}}} \in \Delta$ be a strategy profile for the players, and $\vv{r} = \inp{r_1, \ldots, r_m}\in\cal{N}$ be nature's strategy (which will be fixed). Then, we define the (expected) utility over mixed strategies for $P_i$, $u_i\colon \Delta \times \Delta_{\cal{N}}\to \inb{0,1}$,
\begin{equation*}
    \E\inb{u_i(\vv{s}, \vv{r})} = \sum_{j\in[m]}{{r_j \cdot {\vv{\omega_{\hat{v}_j\mid \cal{V}}^{-1}}(v_i)}}}
\end{equation*}
where the relative contribution vector is calculated over the induced TRep graph. 

% Recall that for the algorithm to converge or for the eigenvector of eigenvalue 1 to exist, the graph must be \rohit{list properties}.
% A directed edge-weighted bi-partite graph with edges going from one to another vertex set does not readily satisfy this property.

\subsection{TRep Games Under Different Classes of Beliefs}

We study the general TRep games by specifying different distributions of types (classes of beliefs) held by the players. For each distribution, we present an appropriate $\cal{E}$ and $f$ for which we can prove decodability. We reason that these choices are natural and highlight a powerful use case in PoR blockchains. 
Looking ahead, we will use the same decoding function for both games. Precisely, the decoding function will be the reputation score of the servers we derived using PageRank. 

We study the following specializations of TRep games,
\begin{enumerate}
    \item There is only one type: $\inp{\sf{R}_1, \ldots, \sf{R}_m}$ and is identically assigned to every player. This represents the case where every player has perfect information about nature. (ee Section~\ref{sec:perf-nash}.

    \item There exists a subset of players that are perfectly informed and thus, have type $\inp{\sf{R}_1, \ldots, \sf{R}_m}$. The remaining players are arbitrarily (mis-)informed---possess any type in $\cal{T}$. Only the perfectly informed players may play an action in $[m]$. See Section~\ref{sec:hierarchical-info}.

    \item There is an infinite number of types of the form, $\inp{\sf{R}_1 \pm\epsilon, \ldots, \sf{R}_m\pm\epsilon}$ for some small $\epsilon > 0$ but every player still has the same type. This is analogous to modeling the players beliefs as an \emph{additive noisy signal} over nature's state (See \Cref{sec:imperf-id-eps}).
    
    % \item For each $i\in[n]$, type $t_i \in T_i$ has the form, $\inp{\sf{R}_1 \pm\epsilon^{(i)}_1, \ldots, \sf{R}_m\pm\epsilon^{(i)}_m}$, and there exists $\epsilon > 0$ such that, for all $i\in[n]$, $j\in[m]$, $\epsilon^{(i)}_j \leq \epsilon$. Each player may not have the same type but the noise in each signal is upper bounded by some constant. 
    % (Section~\ref{sec:imperf-ub-eps})
\end{enumerate}

\cancel{
\subsection{MOVE ALL OF THIS TO INTRODUCTION OR OVERVIEW}
In the network, the servers (or a subset of them) perform some fully specified task such as rendering some service to the users or computing some function. Each server's behavior can be classified into a binary category with some fixed probability. These qualitative categories are such that servers in a particular category are ``rewarded'' for their behavior. 
More formally, every server $\hat{P}_j$ has a corresponding \emph{trustworthiness score}, $\sf{R}_j$ which models its probability of behaving or performing a certain task according to some specification. We model the server's behavior as a Bernoulli random variable, $H_j \sim \mathrm{Ber}(\sf{R}_j)$. For example, a server $\hat{P}_j$ could be an Internet Service Provider claiming to provide reliable internet and the quality of internet can be classified into a binary class, say $\inset{\text{``good''}, \text{``bad''}}$ and this server provides ``good'' internet with probability exactly, $\sf{R}_j$. If the service provider provides a ``good service'', it shall be rewarded.
We remark that this modeling is quite natural and appears in several distributed systems cite[KOZ, others] where the classification about behavior can be about availability, correctness, et cetera.

Note that the modeling of the trustworthiness scores is similar to the definition of \emph{correlation-free static reputation systems} \rohit{cite definition number} in [KOZ].
Let $\vv{\sf{R}} = \inp{\sf{R}_1, \ldots, \sf{R}_m}$, the vector of all reputations. Given $\vv{\sf{R}}$, [KOZ20] provides a \emph{fair} lottery mechanism that \emph{works}; however, in the absence of a centralized authority that knows $\sf{R}$, it is not clear how to implement the lottery.
In Section~\ref{sec:}, we analyze the economic robustness of the Proof-of-Reputation Blockchain in [cite] using the reputation system introduced in Section~\ref{sec:}. Therefore, we show how to instantiate and implement the lottery assuming rational players.

Once the endorsements have been committed, the servers proceed with their task and their behavior is evaluated. Correctly behaving servers are rewarded and moreover, users that endorsed the server are \emph{also} rewarded proportional to their endorsement to the server. More formally, let $R$ be the total reward available for disbursement to the users. For a correctly behaving server, $\hat{P}_j$, user $P_i$ receives a fraction of $R$ equal to their relative ``contribution'' to the sum of endorsements,
\begin{equation*}
    R\cdot \dfrac{w^{(i)}_j}{\sum_{k\in[n]}{w^{(k)}_j}}
\end{equation*}

% We model all the parties as vertices of a directed edge-weighted graph, $G=\inp{V,E}$, with $V=\cal{P}\cup\hat{\cal{P}}$, and $\inp{u, v, w} \in E$ if there exists an edge from $u$ to $v$ with weight, $w$. $G$ also has the following two restrictions, (1) reputation parties are \emph{sinks} (defined below), (2) weights are non-negative and (3) the sum of weights on the outgoing edges of any non-reputation party is exactly 1. In effect, the outgoing edge weights for any party can be thought of as a probability distribution.

% We denote a vertex/party in $\hat{\cal{P}}$ as a \emph{sink} if it only has outgoing edges to $\cal{P}$ with weight $\frac{1}{\abs{\cal{P}}}$. In Section blah, we elaborate why sinks behave in this particular way for PageRank to ``make sense'' over this graph.

% Non-reputation parties ($\cal{P}$) may choose to \emph{endorse} other parties which is equivalent to create a directed edge between the parties with the appropriate weight. We denote G as the \emph{endorsement graph}.

Now given $G$, we compute a quantity, $\pi_j \in\inb{0,1}$ for each reputation party/server. We call $\pi_j$, the \emph{reputation score} for $\hat{P}_j$ and denote $\vv{\pi} = \inp{\pi_1, \ldots, \pi_m}$ as the vector of reputation scores.

The reputation score for $\hat{P}_j$ is computed as an average of the incoming endorsements,
\begin{align*}
    \pi_j &= \dfrac{1}{n}\sum_{i\in[n]}{w^{(i)}_j} 
\end{align*}

Lastly, based on all the endorsements,

\subsection{Connection to PageRank}

\subsection{Why PageRank Makes Sense}

\rohit{talk about Markov Chains, PageRank, Contribution PageRank}.

}
%end of Cancel 3
\section{Perfect Information About Nature}
\label{sec:perf-nash}

In this section, we study the first of the distributions mentioned above---every player has the same type that accurately capture nature's private state. Equivalently, every player has perfect information about nature's private state, the probability of its coin-flips, $\sf{R}_j$. In this case, we define the type space for every $P_i$ as $T_i = T_{\rm perfect} = \inb{0,1}^m$.

We denote this TRep game as $\cal{G}_{\rm perfect}$. 
Observe that $\cal{G}_{\rm perfect}$ can  be thought of as a (non-bayesian) game against nature with complete and perfect information, where every player aims to maximize, $\E\inb{u_i(\cdot, \vv{\sf{R}})}$.  (The expectation is over nature's randomness and the players' strategies.)

%, and the common prior as, $p(t_1, \ldots, t_n) = 1$ iff $t_1=\ldots= t_n$, and 0 otherwise. Therefore, $p(t^{(-i)}\mid t_i) = 1$ iff $t^{(-i)} = \inp{t_i, \ldots, t_i}$ (and 0 otherwise)

We make a few observations,
\begin{observation}
    $\cal{G}_{\rm perfect}$ is (totally) symmetric, and thus a symmetric Nash equilibrium (NE) exists \cite{Nash1951}.
\end{observation}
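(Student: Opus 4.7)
The plan is to verify the two requirements of total symmetry and then invoke Nash's theorem \cite{Nash1951}. Recall that a finite game is totally symmetric if (i) every player has the same (pure) action set and (ii) the payoff function is invariant under any simultaneous permutation of players and their corresponding action labels. Requirement (i) is immediate from the definition of $\cal{G}_{\rm perfect}$, since each $\cal{A}_i = [m+n]$. I would therefore devote the proof to requirement (ii).

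First, I would fix the natural action of a permutation $\sigma$ of $[n]$ on the action set $[m+n]$: it fixes the first $m$ coordinates (the server labels $1,\ldots,m$) and permutes the last $n$ coordinates (the user labels $m+1,\ldots,m+n$) according to $\sigma$. Lifting this to (mixed) strategies in the obvious coordinate-wise way yields, for every strategy profile $\vv{s}=(\vv{s^{(1)}},\ldots,\vv{s^{(n)}})$, a permuted profile $\sigma(\vv{s})$ with $\sigma(\vv{s})^{(\sigma(i))}_{\sigma(j)}=\vv{s^{(i)}}_j$ (using the convention that $\sigma$ fixes server indices). The key claim to establish is
\[
u_{\sigma(i)}(\sigma(\vv{s}),\vv{r}) \;=\; u_i(\vv{s},\vv{r})
\quad\text{for all } i\in[n],\ \vv{s}\in\Delta,\ \vv{r}\in\cal{N}.
\]

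Next, I would reduce this identity to a graph-isomorphism argument via the TRep-graph/TRep-game correspondence from Section~\ref{sec:trep}. The TRep graph $G_{\sigma(\vv{s})}$ induced by $\sigma(\vv{s})$ is, by construction, isomorphic to the graph $G_{\vv{s}}$ induced by $\vv{s}$ via the vertex relabeling $v_i\mapsto v_{\sigma(i)}$ (identity on servers) that preserves all edge weights and all server weights $\sf{R}_j$. The transition matrix $T'$ underlying PageRank (Section~\ref{sec:util-pr}) as well as the personalized variants used in the relative-contribution vector $\vv{\omega^{-1}_{\hat{v}_j\mid\cal{V}}}$ are functions only of the weighted graph and the uniform restart distribution on $\cal{V}$; neither depends on the labeling of the users. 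Hence $\vv{\omega^{-1}_{\hat{v}_j\mid\cal{V}}}(v_i)$ computed in $G_{\vv{s}}$ equals $\vv{\omega^{-1}_{\hat{v}_j\mid\cal{V}}}(v_{\sigma(i)})$ computed in $G_{\sigma(\vv{s})}$. Summing against nature's $\vv{r}$ yields the displayed equation, establishing total symmetry.

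Finally, with total symmetry in hand, I would apply Nash's theorem for finite symmetric games \cite{Nash1951}, which asserts that a finite symmetric game admits at least one symmetric (mixed) Nash equilibrium, i.e., an NE in which every player plays the same mixed strategy (under the convention that servers are labeled identically and users are symmetric to one another). I expect the only delicate point of the proof to be properly bookkeeping the action of $\sigma$ on the action set so that the utility invariance above holds exactly; the rest (compactness of $\Delta$, continuity of $u_i$, and application of Nash's existence result) is standard and routine.
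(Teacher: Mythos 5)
Your proof is correct; the paper states this as an unproved observation, and your argument---permuting the user-action labels $m+1,\ldots,m+n$ together with the players while fixing the server labels, transporting the induced TRep graph by the corresponding isomorphism to obtain the utility invariance $u_{\sigma(i)}(\sigma(\vv{s}),\vv{r})=u_i(\vv{s},\vv{r})$, and then invoking Nash's existence theorem for automorphism-invariant equilibria---is exactly the verification the paper leaves implicit. The one point you rightly treat with care is that, because the last $n$ actions name specific players, the symmetry here is an automorphism of the game in Nash's sense (players and pure strategies permuted simultaneously) rather than plain payoff exchangeability under a common fixed action set, so the citation to \cite{Nash1951} is the appropriate one and your bookkeeping makes the observation rigorous.
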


\begin{observation}
\label{obs:dominating}
For any agent, any strategy with non-zero probability on action $j\in \inset{m+1, \ldots, m+n}$ (i.e., picking another \emph{player} to endorse) is dominated by a strategy with zero probability on $j$. This is because the utility only takes into account the relative contribution to the servers and so the agent can increase their utility by reallocating the probability to some action in $\inset{1, \ldots, m}$.
\end{observation}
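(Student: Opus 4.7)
The plan is to show that any mixed strategy $\vv{s}^{(i)}$ of $P_i$ placing positive mass $p > 0$ on endorsing some other player $v_k \in \cal{V}$ is strictly dominated by a modified strategy $\vv{s}^{\prime(i)}$ that reallocates this $p$-mass onto server-endorsements, regardless of the profile $\vv{s}^{(-i)}$ of the other players.

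My starting point is to simplify the expected utility. Averaging the linear Personalized PageRank equations $\vv{\pi_{v_u}} = \alpha \vv{e_{v_u}} + (1-\alpha) \vv{\pi_{v_u}} T$ over $u \in [n]$ and comparing with the regular PageRank equation (whose teleportation is uniform over $\cal{V}$), I obtain the identity $\sum_{u} \vv{\pi_{v_u}} = n \cdot \vv{\pi}$. Substituting into the relative-contribution denominator yields $\E\inb{u_i(\vv{s}, \vv{\sf{R}})} = \sum_{j \in [m]} \sf{R}_j \cdot \vv{\pi_{v_i}}(\hat{v}_j) / (n\, \vv{\pi}(\hat{v}_j))$, so the payoff isolates $P_i$'s dependence into the numerators. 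Next I invoke the Neumann-series expansion $\vv{\pi_{v_i}}(\hat{v}_j) = \alpha \sum_{k \geq 0} (1-\alpha)^k (T^k)[v_i, \hat{v}_j]$ and perform a first-step analysis of the shift from $T[v_i, v_k]$ to $T[v_i, \hat{v}_{j^*}]$ for each candidate server $\hat{v}_{j^*}$. Intuitively, the direct endorsement registers an immediate step-$1$ visit to $\hat{v}_{j^*}$ contributing $\alpha(1-\alpha)p$ to the discounted sum, whereas routing the mass through $v_k$ only yields indirect visits from step $k \geq 2$ onwards and carries an additional $(1-\alpha)$-discount per hop. A pigeonhole argument over the $m$ possible targets $\hat{v}_{j^*}$ (equivalently, consideration of the uniform $p/m$ redistribution over all servers) then identifies at least one server for which the net change in $\vv{\pi_{v_i}}(\hat{v}_{j^*})$ is strictly positive.

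The main obstacle is that perturbing $v_i$'s row of $T$ also shifts the denominator $n\, \vv{\pi}(\hat{v}_j)$ and the other users' Personalized PageRanks $\vv{\pi_{v_u}}$ for $u \neq i$, so the ratio comparison is not a pure numerator calculation. I plan to handle this via the Sherman--Morrison resolvent identity applied to the rank-one update $\vv{e_{v_i}}(\vv{s}^{\prime(i)} - \vv{s}^{(i)})^T$ of $T$. The key structural observation is that the induced change in $\vv{\pi_{v_u}}(\hat{v}_{j^*})$ for $u \neq i$ is modulated by the probability that the walk from $v_u$ ever reaches $v_i$ (and then follows $v_i$'s modified row), which is strictly smaller than the immediate change in $\vv{\pi_{v_i}}(\hat{v}_{j^*})$, whose walk starts at $v_i$ itself. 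Combining this asymmetry with the first-step direct-visit advantage of server endorsement shows that the ratio $\vv{\pi_{v_i}}(\hat{v}_{j^*}) / (n\, \vv{\pi}(\hat{v}_{j^*}))$ strictly increases upon the reallocation, establishing that $\vv{s}^{\prime(i)}$ strictly dominates $\vv{s}^{(i)}$.
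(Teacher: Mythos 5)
Your setup is sound and in one respect sharper than the paper's: the identity $\sum_{u\in[n]}\vv{\pi_{v_u}}=n\vv{\pi}$ (obtained by averaging the personalized fixed-point equations and invoking uniqueness) is correct and cleanly reduces the denominator of $\vv{\omega_{\hat{v}_j\mid\cal{V}}^{-1}}(v_i)$ to $n\vv{\pi}(\hat{v}_j)$. Be aware, though, that the paper offers no formal proof of this observation at all---only the one-sentence heuristic embedded in its statement---so you are attempting something strictly more ambitious, and the two steps that would actually close the argument are where your sketch breaks down.

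First, the pigeonhole/uniform-redistribution step does not suffice, and in fact the uniform $p/m$ reallocation is provably \emph{not} dominating for the usual regime $\alpha<1/2$. The utility is the weighted sum $\sum_j \sf{R}_j\,\vv{\pi_{v_i}}(\hat{v}_j)/(n\vv{\pi}(\hat{v}_j))$, and reallocating mass away from $v_k$ destroys the \emph{indirect} contribution that $v_i$ was credited with via paths through $v_k$. Concretely, take $m=2$, $\sf{R}_1=1$, $\sf{R}_2=0$, and an opponent profile in which $v_k$ endorses only $\hat{v}_1$: the $k=1$ term of your Neumann series gains $\alpha(1-\alpha)\cdot p/2$ at $\hat{v}_1$, while the $k=2$ term loses $\alpha(1-\alpha)^2\cdot p$; for $\alpha<1/2$ the numerator of the only server that matters strictly decreases, and the gain lands entirely on the worthless $\hat{v}_2$. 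Showing that \emph{some} server's numerator increases says nothing about the weighted sum. Worse, the ``right'' target for the reallocated mass depends on where $v_k$ routes its mass, which is part of the opponents' profile---but a dominating strategy must be fixed in advance, so you cannot let it track $v_k$'s endorsements. Second, the denominator step is only announced, not executed: since $\E\inb{u_i}=\sum_j\sf{R}_j A_j/(A_j+B_j)$ with $A_j=\vv{\pi_{v_i}}(\hat{v}_j)$ and $B_j=\sum_{u\neq i}\vv{\pi_{v_u}}(\hat{v}_j)$, what you need is a comparison of \emph{relative} changes, $\Delta A_j/A_j\geq\Delta B_j/B_j$; the qualitative remark that each walk from $v_u$ must first reach $v_i$ does not yield this, because $B_j$ aggregates $n-1$ such terms whose total shift can exceed $\Delta A_j$. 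To salvage the observation one needs a different accounting---e.g., arguing about the total server-visiting rate (which does increase by an $\alpha(1-\alpha)p$ ``leak'' saved per hop) together with a per-server argument that handles the redistribution among servers---and as your own counterexample-prone construction shows, this is genuinely more delicate than the paper's one-line justification suggests.
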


Thus, we may restrict ourselves to examining the graph where edges only go from $\cal{V}$ to $\hat{\cal{V}}$ (essentially making the graph bipartite).  In this setting, $\vv{\pi_{\hat{v}_j}^{-1}}(v_i)$ is directly proportional to $\Pr[v_i\to \hat{v}_j\mid T]$, and thus, 
\begin{equation*}
{\vv{\omega_{\hat{v}_j\mid \cal{V}}^{-1}}(v_i)} = \dfrac{\Pr[v_i\to \hat{v_j}\mid T]}{\sum_{v\in\cal{V}}{\Pr[v\to \hat{v_j}\mid T]}}
\end{equation*}

Let $\vv{s} = \inp{\vv{s^{(1)}}, \ldots, \vv{s^{(n)}}} \in \Delta$ be a strategy profile for the players, and $\vv{r} = \inp{r_1, \ldots, r_m}\in\cal{N}$ be nature's strategy. Then, we derive the (expected) utility over mixed strategies for $P_i$ using the correspondence in Section~\ref{sec:util-pr},
\begin{equation*}
    \E\inb{u_i(\vv{s}, \vv{r})} = \sum_{j\in[m]}{{r_i \cdot \dfrac{s^{(i)}_j}{\sum_{k\in[n]}{s^{(k)}_j}} }}
\end{equation*}

Denote $s^{(-i)}_j := \sum_{k\in[n]; k \neq i} {s^{(k)}_j}$, the sum of every other players' probability for action $j$.
Hence,
\begin{equation}
    \E\inb{u_i(\vv{s}, \vv{r})} = \sum_{j\in[m]}{r_i\cdot \dfrac{s^{(i)}_j}{s^{(i)}_j + s^{(-i)}_j}} \label{eq:exp_util}
\end{equation}

\begin{observation}
    $\cal{G}_{\rm perfect}$ is %{\em nearly} 
    (expected) constant-sum, with  $\sum_{j\in[m]}{\sf{R}_j}$ when players act rationally. What we mean here is that the total utility available to all players (the pot) is fixed in expectation. While there exists strategies where some of amount of the pot is ``wasted''---e.g. if $\sf{R}_1 > 0$ but no player picks 1 as their action--- this is  not rational assuming mixed strategies as a player ($P_i$) can  assign a tiny probability to action $a_i = 1$  and strictly improve their expected utility (even if it's split with other players). In other words, if any player plays a strategy such that the probability for each action $j$ (where $\sf{R}_j\neq 0$) is positive, then the game is constant-sum.
    % \rohit{$\cal{G}$ as is can also be made constant-sum by modifying the utility to add $H_j/n$ for every $j$ such that $m_j = 0$, i.e., we split the payoff uniformly for every $j$ that was not picked by any player. We forego this modification as we will later see that any strategy in equilibria... maybe remove this line altogether} 
\end{observation}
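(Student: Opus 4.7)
My plan is to start from the closed form of expected utility in Eq.~\eqref{eq:exp_util}, take expectations over nature, and then sum over all players to obtain the total expected utility (the ``pot''). The only subtlety is handling actions $j \in [m]$ for which no player places any probability mass; I will deal with those via the rationality argument sketched in the statement.

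First, I would note that nature's move is determined by the independent Bernoullis $H_j \sim \mathrm{Ber}(\sf{R}_j)$, so $\E[r_j] = \sf{R}_j$. Since the summand in Eq.~\eqref{eq:exp_util} is linear in each $r_j$ and the $r_j$'s appear only through nature's randomness (they do not multiply each other), linearity of expectation gives
\begin{equation*}
\E[u_i(\vv{s}, \vv{\sf{R}})] \;=\; \sum_{j \in [m]} \sf{R}_j \cdot \frac{s^{(i)}_j}{s^{(i)}_j + s^{(-i)}_j},
\end{equation*}
with the convention that the fraction is $0$ whenever the denominator $\sum_{k \in [n]} s^{(k)}_j$ is $0$ (no player endorses server $j$).

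Next I would sum this identity over $i \in [n]$ and swap the order of summation:
\begin{equation*}
\sum_{i \in [n]} \E[u_i(\vv{s}, \vv{\sf{R}})]
\;=\; \sum_{j \in [m]} \sf{R}_j \cdot \sum_{i \in [n]} \frac{s^{(i)}_j}{\sum_{k \in [n]} s^{(k)}_j}.
\end{equation*}
For every $j$ such that at least one player places positive probability on action $j$, the inner sum telescopes to exactly $1$; so the contribution of such $j$ to the pot is precisely $\sf{R}_j$. Hence the total pot equals $\sum_{j \in J(\vv{s})} \sf{R}_j$, where $J(\vv{s}) := \{ j \in [m] : \sum_{k} s^{(k)}_j > 0 \}$.

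Finally, I would close the proof with the rationality argument. Suppose some $j^\star \in [m]$ with $\sf{R}_{j^\star} > 0$ lies outside $J(\vv{s})$. Then any single player $P_i$ can deviate to the mixed strategy obtained from $\vv{s^{(i)}}$ by reallocating an arbitrarily small amount $\epsilon > 0$ of probability mass (taken, say, proportionally from actions with strictly positive mass) onto action $j^\star$. Under this deviation, $P_i$ becomes the unique endorser of $j^\star$, so the relative contribution ${\vv{\omega_{\hat v_{j^\star}\mid\cal V}^{-1}}(v_i)}$ jumps to $1$, adding $\sf{R}_{j^\star} > 0$ to $P_i$'s expected utility; the remaining terms change by $O(\epsilon)$ and can be made arbitrarily small, so the deviation strictly improves utility for all sufficiently small $\epsilon$. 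Hence no rational profile can leave any server with $\sf{R}_j > 0$ unendorsed, giving $J(\vv{s}) \supseteq \{ j : \sf{R}_j > 0 \}$ and total expected pot $\sum_{j \in [m]} \sf{R}_j$. The main (modest) obstacle is formalizing the $O(\epsilon)$ continuity estimate for the other players' relative contribution terms under the deviation, but since those terms are continuous rational functions of $\vv{s^{(i)}}$ away from the singular denominator, this is routine.
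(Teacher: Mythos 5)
Your proof is correct and follows essentially the same route as the paper, which justifies this observation only informally inline: you make precise the sum-and-swap computation showing each endorsed server $j$ contributes exactly $\sf{R}_j$ to the pot, and your $\epsilon$-reallocation deviation is exactly the paper's ``assign a tiny probability to action $a_i=1$'' argument, with the key point (the discontinuous jump of the relative-contribution fraction from $0$ to $1$ for the sole endorser, versus only $O(\epsilon)$ loss elsewhere) handled correctly. No gaps.
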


We show that $\cal{G}_{\rm perfect}$ is $\inp{\cal{E}_{\rm NE}, f_1}$-decodable for,
\begin{itemize}
    \item $\cal{E}_{\rm NE}$ equal to the {\em set of all (expected) Nash equilibria} of the game, and
    \item $f_1$ equals $\sf{N}$, { the L1-normalizing function}.
\end{itemize}

Observe that $f_1$ preserves ratios. That is, $f_1(\vv{\sf{R}}) = \sf{N}(\vv{\sf{R}})$, and 
%\begin{equation*}
 $   \dfrac{\sf{N}(\vv{\sf{R}})_i}{\sf{N}(\vv{\sf{R}})_j} = \dfrac{{\sf{R}}_i}{\sf{R}_j}$
%\end{equation*}

In words, any (expected) NE of the TRep game,
$\cal{G}_{\rm perfect}$ encodes the pairwise ratios of the components of nature's private state. %whchis makes $f_1$ order preserving. %This will be sufficient for extracting an ordering of the

We first prove the following lemma about the space of equilibria of $\cal{G}_{\rm{perfect}}$,
\begin{lemma}\label{lem:unique-nash}
    $\cal{G}_{\rm perfect}$ has a unique (expected) NE. The equilibrium is  the symmetric strategy profile, $\vv{s^*} = (\vv{{s^*}^{(1)}}, \ldots, \vv{{s^*}^{(n)}})$, where $\vv{{s^*}^{(i)}} = \sf{N}(\vv{\sf{R}})$ for all $i$. Therefore, $\cal{E}_{\rm NE} = \inset{\vv{s^{*}}}$.
\end{lemma}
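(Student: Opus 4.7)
The plan is to characterize every NE of $\cal{G}_{\rm perfect}$ through first-order conditions (FOCs) and then verify that these conditions force any NE to coincide with the symmetric profile $\vv{s^*}$ of the statement. Existence of a NE follows from Nash's theorem (the game is finite and symmetric), so only uniqueness needs work.

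By \Cref{obs:dominating}, any best response places probability $0$ on the ``player'' actions $m+1,\dots,m+n$, so I may restrict attention to strategies supported on $[m]$. With this restriction, $\E[u_i(\vv{s},\vv{\sf{R}})]=\sum_j \sf{R}_j\,s^{(i)}_j/(s^{(i)}_j+s^{(-i)}_j)$ is a sum of terms each strictly concave in $s^{(i)}_j$ whenever $s^{(-i)}_j>0$, so player $i$'s best response is the unique maximizer over the simplex. Writing $S_j=\sum_k s^{(k)}_j$ and introducing a multiplier $\lambda_i$ for $\sum_j s^{(i)}_j=1$, the Kuhn--Tucker conditions yield
\[
  s^{(i)}_j \;=\; \bigl(S_j - \lambda_i\, S_j^2/\sf{R}_j\bigr)^{+} \qquad \text{for each } j \text{ with } \sf{R}_j>0,
\]
while $s^{(i)}_j=0$ for any $j$ with $\sf{R}_j=0$ (such actions yield zero utility and are strictly dominated). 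A brief preliminary observation is that $S_j>0$ at any NE whenever $\sf{R}_j>0$: otherwise any player could capture the entire share $\sf{R}_j$ by moving an arbitrarily small $\varepsilon$ of weight onto $j$.

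The crucial step is that, given the aggregates $\vv{S}$---which are shared across all players---the simplex constraint pins down $\lambda_i$ \emph{uniquely}. Indeed, the map $\lambda\mapsto\sum_j(S_j-\lambda S_j^2/\sf{R}_j)^{+}$ is continuous, equal to $\sum_j S_j=n\ge 2$ at $\lambda=0$, vanishes once $\lambda\ge\max_j \sf{R}_j/S_j$, and is strictly decreasing on the interval where it remains positive. Hence there is a unique $\lambda^{*}>0$ for which this sum equals $1$. Since $\lambda^{*}$ depends only on $\vv{S}$ and not on $i$, every player uses the same multiplier and thus the same best-response vector, so any NE must be symmetric: $\vv{s^{(i)}}=\vv{s}$ for all $i$.

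Finally, substituting the symmetric ansatz $S_j=n\,s_j$ into $s_j=S_j-\lambda^{*} S_j^2/\sf{R}_j$ (on the support $s_j>0$) yields $1=n-\lambda^{*} n^2 s_j/\sf{R}_j$, i.e., $s_j\propto\sf{R}_j$; combined with $\sum_j s_j=1$, this forces $s_j=\sf{N}(\vv{\sf{R}})_j$, proving $\cal{E}_{\rm NE}=\{\vv{s^*}\}$. The main technical subtlety I anticipate is exactly the uniqueness of $\lambda^{*}$ in the previous paragraph: it sidesteps an otherwise delicate case analysis of potentially asymmetric NE with heterogeneous supports, because the positive-part truncation $(\cdot)^{+}$ already encodes complementary slackness uniformly across players, so no separate treatment of who plays which action is needed.
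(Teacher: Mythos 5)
Your proof is correct, but it reaches the conclusion by a genuinely different route than the paper. The paper's argument is a security-level/minimax one: it sets $L=\inp{\sum_j \sf{R}_j}/n$, shows that playing $\sf{N}(\vv{\sf{R}})$ guarantees every player at least $L$ (with the guarantee \emph{strict} unless the opponents' aggregate equals $(n-1)\sf{N}(\vv{\sf{R}})_j$ in every coordinate, via strict convexity of $\E[u_i]$ in $\vv{s^{(-i)}}$), invokes the symmetric constant-sum structure to cap every player's NE payoff at exactly $L$, and then solves the resulting full-rank linear system $c^{(-i)}_j=(n-1)\sf{N}(\vv{\sf{R}})_j$ for all $i,j$. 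You instead exploit the aggregative structure through first-order conditions: the KKT formula $s^{(i)}_j=\bigl(S_j-\lambda_i S_j^2/\sf{R}_j\bigr)^{+}$ depends on the opponents only through the shared aggregates $S_j$, the simplex constraint pins down a single $\lambda^{*}$ common to all players (your monotonicity argument for $\lambda\mapsto\sum_j(\cdot)^{+}$ is sound, with the value $n$ at $\lambda=0$ relying on the restriction to actions in $[m]$ from Observation~\ref{obs:dominating}), so every NE is symmetric, and the symmetric fixed point forces $s_j\propto\sf{R}_j$. Your route never needs the constant-sum property or the exact payoff characterization, only concavity and your preliminary observation that $S_j>0$ when $\sf{R}_j>0$; this makes it more robust to perturbations of the reward structure (e.g., heterogeneous per-server reward scalings) where the game would no longer be constant-sum. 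What the paper's approach buys in exchange is the explicit non-exploitability of $\sf{N}(\vv{\sf{R}})$, which the authors lean on elsewhere (notably in Section~\ref{sec:hierarchical-info}). Two small points you should tighten: both arguments implicitly assume $\vv{\sf{R}}\neq\vv{0}$ (otherwise every profile is a NE and the statement fails), and for KKT necessity you should note that at a NE no player can be \emph{alone} on a coordinate $j$ with $s^{(i)}_j>0$ and $s^{(-i)}_j=0$ (your $\lambda_i=0$ branch), since that player could retain $\varepsilon$ mass on $j$, keep the full $\sf{R}_j$, and strictly gain by reallocating the rest; this is implicit in your truncated formula but deserves a sentence.
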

\begin{proof}
Let $L = \inp{\sum_{j\in[m]}\sf{R}_j}/n$. Since the game is symmetric, $L$ is an upper bound on the (expected) payoff achievable by any player in a Nash equilibria. First, we show that every player can {\em guarantee} a minimum (expected) payoff of $L$.

Fix $P_i$ and its strategy, $\vv{s^{(i)}} = \sf{N}(\vv{\sf{R}})$. 

As a function of the strategies of the other players, the expected utility,
    \begin{alignat*}{2}
    \E\inb{u_i\inp{\sf{N}(\sf{R}), \vv{s^{(-i)}}; \vv{\sf{R}}}}
    &= 
    \sum_{j\in [m]} {\dfrac{\sf{N}(\sf{R})_j \cdot \sf{R}_j}{\vv{s^{(-i)}_j} + \sf{N}(\sf{R})_j}} \qquad &&\text{(using ~\eqref{eq:exp_util})} 
\end{alignat*}

%It is not hard to see that 
We observe that the Hermitian of the above function is negative definite and therefore is strictly convex as a function of $\vv{s^{(-i)}_j}$'s over the domain, \\$\inbr{\inp{\vv{s^{(-i)}_1, \ldots, \vv{s^{(-i)}_m}}} \ \colon\  \forall j\in[m], \vv{s^{(-i)}_j} \geq 0 \text{, and } \sum_{j\in[m]}{\vv{s^{(-i)}_j}} = (n-1)}$, which is bijective to $\Delta^{(-i)} = \prod_{\substack{j\in[m],j\neq i}}{\Delta_j}$, the strategy space of all other players.

Therefore, $\E\inb{u_i}$ has a {\em unique} global minimum with respect to the $\vv{s^{(-i)}_j}$'s. Partially differentiating in each variable, we find that the minimum is achieved at,
$\vv{s^{(-i)}_j} = (n-1)\sf{N}\inp{\vv{\sf{R}}}_j$. And therefore the minimum value of $\E\inb{u_i\inp{\vv{\sf{R}}; \sf{N}(\sf{R}), \vv{s^{(-i)}}}}$ is,
\begin{equation*}
    \sum_{j\in [m]} {\dfrac{\sf{N}(\sf{R})_j \cdot \sf{R}_j}{\vv{s^{(-i)}_j} + \sf{N}(\sf{R})_j}}
    = \sum_{j\in [m]} {\dfrac{\sf{N}(\sf{R})_j \cdot \sf{R}_j}{(n-1)\sf{N}\inp{\vv{\sf{R}}}_j + \sf{N}(\sf{R})_j}} = \sum_{j\in[m]}{\frac{\sf{R}_j}{n}} = L
\end{equation*}

Therefore, an (expected) payoff of $L$ is achievable by every player by playing $\sf{N}(\vv{\sf{R}})$ as its strategy. (It is {\em non-exploitable}.) Moreover if the other players play any strategy such that there exists $j$ and $s^{(-i)}_j \neq (n-1)\sf{N}(\vv{\sf{R}})_j$, then playing $\sf{N}(\vv{\sf{R}})$ guarantees payoff strictly greater than $L$. 
As any Nash cannot exceed individual payoff of more than $L$, we must have that if $\vv{c} = \inp{\vv{c^{(1)}}, \ldots, \vv{c^{(n)}}}$ is Nash then, $\vv{c^{(-i)}_j} = (n-1)\sf{N}\inp{\vv{\sf{R}}}_j$ for all $i\in[n], j\in[m]$.

This determines a full-rank linear system whose solution is %can be easily calculated,
$c^{(i)}_j = \sf{N}(\vv{\sf{R}})_j$, which is exactly $\vv{s^{*}}$. This implies, that if a Nash exists, it must be $\vv{s^{*}}$.%\qed

We can conclude that $\vv{s^*}$ must be Nash since we know that there exists a (symmetric) NE in symmetric games as per the observation above. This concludes the proof of the lemma. 
\end{proof}

 We show that the equilibria in TRep games can be decoded using PageRank to the relative values of nature's state (equivalently, to the relative trustworthiness of the servers).
 Recall that strategies in TRep games map to endorsements in TRep graphs. For each server $j$, we compute a reputation score, $\vrsc_j$ as per Eq.~\eqref{eq:rep_score} where the PageRank vector $\vv{\pi}$ is computed over the induced TRep graph and the mixed strategies correspond to transition probabilities.

Translating back to TRep games, we define the {\em PageRank (PR) decoding} function, $\cal{D}_{\rm PR}\colon \Delta\to \inb{0,1}^m$,

%\begin{alignat*}{1}
 \[   \cal{D}_{\rm PR}(\vv{s}) = \inp{\vrsc_1, \ldots, \vrsc_m}
    % \dfrac{1}{n}\cdot \inp{\sum_{i\in[n]}{\vv{s}^{(i)}_1}, \ldots, \sum_{i\in[n]}{\vv{s}^{(i)}_m}}
%\end{alignat*}
\textrm{, where } \vrsc_j = \dfrac{\sum_{i\in[n]} \vv{s}^{(i)}_j \cdot \vv{\pi}(v_i)}{\sum_{k\in[m]}\sum_{i\in[n]} \vv{s}^{(i)}_k \cdot \vv{\pi}(v_i)}
\]

% i.e., $D_{\rm PR}$ takes an average of the (mixed) strategies and is the TRep-game analogue of computing the reputation score.

Finally, we show,

\begin{theorem}\label{thm:perfect-decoder}
    $\cal{G}_{\rm perfect}$ is $\inp{\cal{E}_{\rm NE}, f_1}$-decodable using $D_{\rm PR}$.
\end{theorem}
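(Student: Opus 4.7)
The plan is to combine the characterization of the equilibrium set from Lemma~\ref{lem:unique-nash} with a direct computation of $\cal{D}_{\rm PR}$ evaluated at that unique equilibrium, and observe that the PageRank weights on the users cancel, leaving exactly the $L_1$-normalized trustworthiness vector. Concretely, since Lemma~\ref{lem:unique-nash} gives $\cal{E}_{\rm NE} = \{\vv{s^*}\}$ with $\vv{{s^*}^{(i)}} = \sf{N}(\vv{\sf{R}})$ for every $i\in[n]$, to prove $(\cal{E}_{\rm NE}, f_1)$-decodability via $D_{\rm PR}$ it suffices to show $\cal{D}_{\rm PR}(\vv{s^*}) = \sf{N}(\vv{\sf{R}}) = f_1(\vv{\sf{R}})$.

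First I would observe that in the TRep graph induced by $\vv{s^*}$, every user $v_i$ has the same outgoing edge distribution $\vv{w^{(i)}} = \sf{N}(\vv{\sf{R}})$, so the transition probability $\Pr[v_i\to \hat{v}_j\mid T]$ equals $\sf{N}(\vv{\sf{R}})_j$ and, crucially, is independent of $i$. By Observation~\ref{obs:dominating} (applicable since the entries of $\vv{s^*}$ indexed by $\{m+1,\ldots,m+n\}$ are zero), the induced graph is bipartite from $\cal{V}$ to $\hat{\cal{V}}$, so the PageRank analysis in Section~\ref{sec:util-pr} applies and yields, for each server,
\[
\vv{\pi}(\hat{v}_j) \;=\; (1-\alpha)\,\sf{N}(\vv{\sf{R}})_j \cdot \sum_{i\in[n]} \vv{\pi}(v_i).
\]

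Next I would plug $\vv{s^*}$ into the definition of $\cal{D}_{\rm PR}$:
\[
\vrsc_j \;=\; \frac{\sum_{i\in[n]} {s^*}^{(i)}_j \cdot \vv{\pi}(v_i)}{\sum_{k\in[m]}\sum_{i\in[n]} {s^*}^{(i)}_k \cdot \vv{\pi}(v_i)} \;=\; \frac{\sf{N}(\vv{\sf{R}})_j \sum_{i\in[n]} \vv{\pi}(v_i)}{\sum_{k\in[m]} \sf{N}(\vv{\sf{R}})_k \sum_{i\in[n]} \vv{\pi}(v_i)}.
\]
The common factor $\sum_{i\in[n]} \vv{\pi}(v_i)$ cancels, and since $\sf{N}(\vv{\sf{R}})$ is $L_1$-normalized the denominator equals $1$, giving $\vrsc_j = \sf{N}(\vv{\sf{R}})_j$. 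Hence $\cal{D}_{\rm PR}(\vv{s^*}) = \sf{N}(\vv{\sf{R}}) = f_1(\vv{\sf{R}})$, which is the required decodability identity (a deterministic equality, hence the $\simeq$ relation holds trivially).

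The conceptual content of the argument is essentially all in Lemma~\ref{lem:unique-nash}; the main obstacle to watch for here is purely a bookkeeping one, namely making sure the sink-fix and teleportation modifications introduced in Section~\ref{sec:util-pr} do not alter the closed form for $\vv{\pi}(\hat{v}_j)$ in a way that breaks the cancellation. Since in the modified Markov chain teleportation lands only in $\cal{V}$ and servers are only reached through outgoing edges of users (which are identical across users under $\vv{s^*}$), the factor multiplying $\sf{N}(\vv{\sf{R}})_j$ in $\vv{\pi}(\hat{v}_j)$ is truly $j$-independent, so the cancellation goes through and the theorem follows.
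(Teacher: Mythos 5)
Your proposal is correct and follows essentially the same route as the paper's proof: invoke Lemma~\ref{lem:unique-nash} to reduce $\cal{E}_{\rm NE}$ to the single profile where every user plays $\sf{N}(\vv{\sf{R}})$, substitute into $\cal{D}_{\rm PR}$, and cancel the common factor $\sum_{i\in[n]}\vv{\pi}(v_i)$ to recover $\sf{N}(\vv{\sf{R}}) = f_1(\vv{\sf{R}})$. Your added check that the sink-fix and restricted teleportation leave the closed form $\vv{\pi}(\hat{v}_j)=(1-\alpha)\,\sf{N}(\vv{\sf{R}})_j\sum_i\vv{\pi}(v_i)$ intact is a reasonable extra verification that the paper leaves implicit, but it does not change the substance of the argument.
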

\begin{proof}
Using Lemma~\ref{lem:unique-nash}, we have $\cal{E}_{\rm NE} = \inset{\inp{\sf{N}(\vv{\sf{R}}), \ldots, \sf{N}(\vv{\sf{R}})}}$, which implies,
\begin{alignat*}{1}
    \cal{D}_{\rm PR}(\cal{E}_{\rm NE})
    &= \inset{\inp{\dfrac{\sum_{i\in[n]} \sf{N}(\vv{\sf{R}})_1 \cdot \vv{\pi}(v_i)}{\sum_{k\in[m]}\sum_{i\in[n]} \sf{N}(\vv{\sf{R}})_k \cdot \vv{\pi}(v_i)}, \ldots, \dfrac{\sum_{i\in[n]} \sf{N}(\vv{\sf{R}})_m \cdot \vv{\pi}(v_i)}{\sum_{k\in[m]}\sum_{i\in[n]}\sf{N}(\vv{\sf{R}})_k \cdot \vv{\pi}(v_i)}}}
    \\
    &= \inset{\dfrac{\sum_{i\in[n]} \vv{\pi}(v_i)}{\sum_{k\in[m]}\sum_{i\in[n]} \sf{N}(\vv{\sf{R}})_k \cdot \vv{\pi}(v_i)}\cdot \inp{{\sf{N}(\vv{\sf{R}})}_1, \ldots, {\sf{N}(\vv{\sf{R}})}_m}}
%    \\
%    &= \inset{\dfrac{\sum_{i\in[n]} \vv{\pi}(v_i)}{\sum_{i\in[n]}\vv{\pi}(v_i) \sum_{k\in[m]} \sf{N}(\vv{\sf{R}})_k}\cdot \inp{{\sf{N}(\vv{\sf{R}})}_1, \ldots, {\sf{N}(\vv{\sf{R}})}_m}}
%    \\
%    &= \inset{\dfrac{1}{\sum_{k\in[m]} \sf{N}(\vv{\sf{R}})_k}\cdot \inp{{\sf{N}(\vv{\sf{R}})}_1, \ldots, {\sf{N}(\vv{\sf{R}})}_m}}
    \\
    &= \inset{\sf{N}(\vv{\sf{R}})}
%    \\
   % &
    = f_1(\vv{\sf{R}})
\end{alignat*}
\end{proof}

We remark that $\cal{G}_{\rm perfect}$ has the neat property that every players' strategy is exactly the function on nature's state we would like to decode.
Therefore, the decoding function could have also been to simply use any players' strategy in a Nash equilibrium as the output. While this is indeed true, this is merely a coincidence due to the simplified nature of the above game, which makes the symmetric NE being unique. The PR decoder provides a unifying treatment in which we use PageRank for both the utilities and decoding, and we believe this methodology can extend to more complicated graph structures.

% use a slightly different decoding function, the average of everyone's strategies. We will see that the same decoder can be used for the more complicated games in the next sections and thus, provides a more uniform approach inspired by PageRank to all TRep-games. \rohit{if we don't have game 3, can we still justify the averaging decoder??}

% We recall that the goal of our game is for rational users to (1) act in accordance to their \emph{beliefs}, $\gamma_i$, and (2) efficiently recover the relative trustworthiness of the reputation parties, $\hat{\cal{P}}$ as the reputation scores.

% In this setting, we show that there exists a \emph{unique} Nash equilibrium (Lemma~\ref{}) and an efficient and natural decoder (Lemma~\ref{}) that recovers the relative trustworthiness of $\hat{\cal{P}}$ from $\cal{P}$'s endorsements. \rohit{define relative ordering in prelims/overview. no longer need definition of inversions i guess. or no inversions is a corollary. actually, we prob need it for the lottery later...}
\section{Hierarchical Information: Leveraging Perfect Information}
\label{sec:hierarchical-info}

We examine TRep games in which the graph structure does not reduce to bi-partite when there is an asymmetry of information about nature's state.
In particular, there exists a set of $k$ agents, $\cal{P}_\mathrm{perfect} \subseteq \cal{P}$ that have perfect information on (all of) nature's state.
While the other agents may have some partial or no information on nature.
We believe this to be an interesting specialization of TRep games to capture newly joining, fresh parties in the blockchain network that may not have enough information about the servers but may have beliefs about existing (perfect information) players. 
We apply the restriction that the other agents, $\cal{P}_\mathrm{fresh} = \cal{P}\setminus \cal{P}_\mathrm{perfect}$ may not directly endorse a server and can only endorse other agents.
Without loss of generality, let $\cal{P}_\mathrm{perfect} = \inset{P_1, \ldots, P_k}$ and $\cal{P}_\mathrm{fresh} = \inset{P_{k+1}, \ldots, P_n}$. Then, we have $\cal{A}_i = [m+n]$ for $i\in[k]$ and $\cal{A}_i = \inset{m+1, \ldots, n}$ (the set of actions that endorse players) for $i \in\inset{k+1, \ldots, n}$.
Denote this game as $\cal{G}_\mathrm{hierarchy}$.

We show that $\cal{G}_\mathrm{hierarchy}$ is also $\inp{\cal{E}_{\rm NE}, f_1}$-decodable! Moreover, it turns out that in the NE strategies, $\cal{P}_\mathrm{perfect}$ follow the same strategy as in the perfect information case \emph{invariant} of $\cal{P}_\mathrm{fresh}$'s strategies. 
Similar to Observation~\ref{obs:dominating}, we observe,

\begin{observation}
For any agent, $P_i\in \cal{P}_\mathrm{perfect}$, any strategy with non-zero probability on action $j\in\inset{m+1, \ldots, m+n}$ is dominated by a strategy with zero probability on~$j$.
\end{observation}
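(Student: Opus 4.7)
The plan is to adapt the argument of Observation~\ref{obs:dominating} to the hierarchical setting, where endorsements of other players now have nonzero \emph{indirect} contribution to servers through the PageRank structure. Fix the strategies $\vv{s^{(-i)}}$ of all players other than $P_i \in \cal{P}_\mathrm{perfect}$ and nature's realization $\vv{r}$, and write $U_i(\vv{s^{(i)}}) := \E\inb{u_i(\vv{s^{(i)}},\vv{s^{(-i)}},\vv{r})}$. Suppose $s^{(i)}_{m+k} = p > 0$ for some $k \neq i$. I would construct an alternative strategy $\vv{s'^{(i)}}$ with $s'^{(i)}_{m+k} = 0$ that reallocates the mass $p$ to direct server endorsements in $[m]$, and show $U_i(\vv{s'^{(i)}}) \geq U_i(\vv{s^{(i)}})$ with strict inequality whenever some server with $r_j > 0$ is reachable from $v_k$ in the transition graph.

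The main step is to analyze the effect of the reallocation on the Personalized PageRank $\vv{\pi_{v_i}}$ via its series expansion $\vv{\pi_{v_i}} = \alpha\sum_{t \geq 0}(1-\alpha)^t \vv{e_{v_i}} T^t$, where $T$ is the transition matrix induced by $(\vv{s^{(i)}},\vv{s^{(-i)}})$ together with the sink fix from \Cref{sec:util-pr}. Each coordinate $\vv{\pi_{v_i}}(\hat{v}_j)$ is a weighted sum over random walks from $v_i$ ending at $\hat{v}_j$, where length-$t$ walks are weighted by $(1-\alpha)^t$. Concretely, I would define $\vv{s'^{(i)}}$ by redistributing the $p$ mass across $[m]$ in proportion to the first-hit distribution on $\hat{\cal{V}}$ of a walk in $T$ starting at $v_k$. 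Under this choice, every walk from $v_i$ that originally stepped to $v_k$ and later reached some $\hat{v}_j$ is matched with a strictly shorter walk in the modified graph that reaches $\hat{v}_j$ directly from $v_i$; since shorter walks carry strictly larger weight whenever $\alpha \in (0,1)$, this weakly increases $\vv{\pi_{v_i}}(\hat{v}_j)$ for every $j \in [m]$ and strictly increases it for at least one $j$ whose first-hit probability from $v_k$ is positive.

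To complete the proof I need to control the denominator $\sum_{u \in \cal{V}}\vv{\pi_u}(\hat{v}_j)$ of the relative contribution $\vv{\omega_{\hat{v}_j\mid \cal{V}}^{-1}}(v_i)$. I would argue by a coupling argument (or via a resolvent expansion of $(I-(1-\alpha)T)^{-1}$) that redirecting mass from $v_k$ toward the sink servers can only shorten walks from any other source $u$ that happen to pass through $v_i$, so the denominator does not grow faster than the numerator; hence the relative contribution strictly increases for at least one $j$ with $r_j>0$, and summing $r_j$-weighted over $j\in[m]$ gives $U_i(\vv{s'^{(i)}}) > U_i(\vv{s^{(i)}})$, contradicting the supposed optimality of $\vv{s^{(i)}}$. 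The main obstacle is precisely this denominator dependence: because $U_i$ is a ratio, numerator and denominator both shift under any mass reallocation, and ruling out cancellation requires careful bookkeeping. An alternative, and possibly cleaner, route is to view the perfect-info players as playing a reweighted variant of $\cal{G}_\mathrm{perfect}$ in which each $P_i\in\cal{P}_\mathrm{perfect}$ inherits an effective weight from the fresh players' endorsements aggregated via PageRank; in this reduction, Observation~\ref{obs:dominating} applies directly and yields the desired dominance.
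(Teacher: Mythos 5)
Your instinct that this observation is not automatic in the hierarchical setting is right: unlike the bipartite case, a perfect player's mass on another player does reach servers indirectly, so something must actually be argued. The paper, however, does not do what you attempt. Its entire proof is the phrase ``similar to Observation~\ref{obs:dominating}'', whose own justification is the one-line remark that the utility only rewards relative contribution to the servers, so probability mass on player-endorsements can always be profitably reallocated to actions in $[m]$. Your closing suggestion---reducing the perfect players' problem to a reweighted $\cal{G}_{\rm perfect}$ so that Observation~\ref{obs:dominating} applies directly---is the closest in spirit to the paper's (implicit) reasoning; the elaborate random-walk machinery in your main argument is entirely your own.

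As a standalone proof, though, your argument has a genuine gap, and it is exactly the one you flag yourself: control of the denominator of $\vv{\omega_{\hat{v}_j\mid \cal{V}}^{-1}}(v_i)$. Since $\sum_{v\in\cal{V}}\vv{\omega_{\hat{v}_j\mid \cal{V}}^{-1}}(v)=1$, the per-server pot is constant-sum across users, and reallocating $P_i$'s mass perturbs \emph{every} source's Personalized PageRank through the common transition matrix: walks from other users $u$ that pass through $v_i$ are also shortened, so the terms $\vv{\pi_{u}}(\hat{v}_j)$ in the denominator grow as well. The assertion that ``the denominator does not grow faster than the numerator'' is precisely the inequality to be proved, and the proposed coupling/resolvent argument is named but never carried out. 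The numerator step also has unresolved details: the ``first-hit distribution on $\hat{\cal{V}}$ from $v_k$'' is sub-stochastic once you account for the $\alpha$-restarts to $v_i$ and for the sink fix teleporting the walk back to a uniform user after each server visit, so the walk-matching must be set up per excursion, and coordinatewise monotonicity ``for every $j\in[m]$'' only holds if the redistribution exactly matches that (sub-)distribution. None of this is obviously fatal, but as written the proposal is a plan whose central inequality is assumed rather than established, whereas the paper sidesteps the entire issue by invoking the same informal reallocation intuition as in the perfect-information case.
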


Therefore, in any Nash equilibrium, $\cal{P}_\mathrm{perfect}$ only point to the servers (strategies over $[m]$). Let $\cal{V}_\mathrm{perfect} = \inset{v_1, \ldots, v_k}$ and $\cal{V}_\mathrm{fresh} = \inset{v_{k+1}, \ldots, v_n}$ respectively correspond to the vertices to the perfect information and other agents in the induced TRep graph. The contribution of $P_i\in \cal{P}_\mathrm{perfect}$ in the induced TRep graph (and its markov chain, $T$),

\begin{equation*}
\vv{\pi_{\hat{v}_j}^{-1}}(v_i) = (1-\alpha)\Pr\inb{v_i\to \hat{v}_j \mid T}
\end{equation*}

We obtain the relative contribution of the perfect information agents,
\begin{equation}
\vv{\omega_{\hat{v}_j\mid \cal{V}}^{-1}}(v_i)
= \dfrac{\vv{\pi^{-1}_{\hat{v}_j}}(v_i)}{\sum_{v\in \cal{V}} \vv{\pi^{-1}_{\hat{v}_j}}(v)}
= \dfrac{(1-\alpha)\Pr\inb{v_i\to \hat{v}_j \mid T}}{\sum_{v\in\cal{V}_\mathrm{perfect}}{(1-\alpha)\Pr\inb{v\to \hat{v}_j \mid T}} + 
\sum_{v\in\cal{V}_\mathrm{fresh}} \vv{\pi^{-1}_{\hat{v}_j}}(v)}
\end{equation}

We can use a similar proof as for Lemma~\ref{lem:unique-nash} to show that the strategy of perfect information agents, $\vv{s^{(i)}} = \sf{N}(\vv{\sf{R}})$ in \emph{any} Nash equilibrium.

This is sufficient to prove $f_1$-decodability of $\cal{G}_\mathrm{hierarchy}$. For any server $\hat{v}_j\in\hat{\cal{P}}$, recall its reputation score (Eq.~\eqref{eq:rep_score}),
\begin{equation*}
    \vrsc_j = \dfrac{ \sum_{v\in\cal{V}} \Pr[v\to \hat{v}_j \mid T] \cdot \vv{\pi}(v)}{\sum_{\hat{v}\in\hat{\cal{V}}}{\sum_{v\in\cal{V}} \Pr[v\to \hat{v} \mid T] \cdot \vv{\pi}(v)}} = \dfrac{ \sum_{v\in\cal{V}_\mathrm{perfect}} \Pr[v\to \hat{v}_j \mid T] \cdot \vv{\pi}(v)}{\sum_{\hat{v}\in\hat{\cal{V}}}{\sum_{v\in\cal{V}_\mathrm{perfect}} \Pr[v\to \hat{v} \mid T] \cdot \vv{\pi}(v)}}
\end{equation*}
where the sum is simplified to only be over $\cal{P}_\mathrm{perfect}$ since we disallow edges from $\cal{P}_\mathrm{fresh}$ to $\hat{\cal{P}}$. Thus, the same PageRank decoder, $\cal{D}_{\mathrm{PR}}$ works just as in Theorem~\ref{thm:perfect-decoder} regardless of $\cal{P}_\mathrm{fresh}$'s strategies.

We conclude with a remark about the equilibrium strategies of $\cal{P}_\mathrm{fresh}$. While  decodability does not depend on $\cal{P}_\mathrm{fresh}$, not every strategy maximizes their utility. In fact, analogous to Observation~\ref{obs:dominating}, %we have that 
any strategy with non-zero probability on action $j\in\inset{m+k+1, \ldots, m+n}$ (i.e., endorsing another ``fresh'' player) is dominated by a strategy with zero probability on $j$. Thus, in any NE, players in $\cal{P}_\mathrm{fresh}$ only endorse $\cal{P}_\mathrm{perfect}$ and \emph{not} each other. One can show that \emph{any} strategy subject to this restriction is Nash. 
 \section{(Consistent) Noisy Information About Nature}
\label{sec:imperf-id-eps}

We continue with our study of TRep games and explore decodability when every player has an approximation of nature's private state, $\sf{R'} = \inp{\sf{R'}_1,\ldots, \sf{R'}_m}$. We suppose that $\sf{R'}_j\in[0,1]$ is distributed such that its expecation as a random variable is equal to $\sf{R}_j$, and there exists a constant, $\epsilon > 0$, and some probability, $p$, such that $\Pr\inb{\sf{R'}_j\notin\inb{\sf{R_j}-\epsilon, \sf{R_j}+\epsilon}}\leq p$. 
 We model it as such to capture the idea of confidence intervals when approximating a random variable. E.g., the sample mean is centered around the true value and is normally distributed.
Write, $\sf{R'}_j =  \sf{R}_j+\delta_j\epsilon'$ for $\delta_j\in\inset{-1,1}$.

We define the type space for every player, $P_i$, as $T_i = T_{\rm approx} = \inb{0,1}^m$.
We assume that every player has the {\em same} type, $\vv{t_i} = \vv{t_j} = \vv{\sf{R'}} = \inp{\sf{R'}_1,\ldots, \sf{R'}_m}$, i.e. every player has the same belief about nature's state.
Denote this game as $\cal{G}_{\rm noisy}$. The same observations above hold for $\cal{G}_{\rm noisy}$ as well.

% The players may know $\epsilon$ and $p$ is---e.g. if nature's state was estimated using a simple sample mean or some other statistic cite [approximation  algorithms].
As we shall see, $\epsilon$ and $p$
%our analysis of the game is {\em independent} of $\epsilon$ and $p$ \rohit{\textbf{IS THIS TRUE???}}; however, they$
will determine the ``slack'' in the relative trustworthiness that we're able to decode.

% More formally, TR games with uncertain beliefs about nature, $\cal{G}_{\rm noisy-id}$, can be modeled as a {\em ex-ante Bayesian game} where each player has the same {\em type} but no player knows their own type or their opponents'.
% In this case the type can be specified by a value, $\epsilon \geq 0$ and an $m$-vector in $\inset{+1,-1}^m$ specifying whether the noise in nature's private state is $\pm\epsilon$ for each coin flip respectively. For some fixed $\epsilon$, nature randomly selects a vector in $\inset{+1,-1}^m$ and assigns this type to every player.
% \begin{definition}[$\cal{G}_{\rm noisy-id}$]
%     We define $\cal{G}_{\rm noisy-id}$ as a TR game with the following modification,
%     \begin{itemize}
%         \item \textbf{Types.} Let $T = \inb{0,1} \times \inset{+1,-1}^m$ be the space of all types. Each player, $P_i$ has the same type space, $T_i = T$. 
%         E.g., the type $\inp{\epsilon, \vv{k} = \inp{k_1,\ldots, k_m}}\in T$ is interpreted as a vector of noise applied to the state of nature resulting in, $\inp{\sf{R}_1+k_1\epsilon, \ldots, \sf{R}_m+k_m\epsilon}$.

%         \item \textbf{Utilities.} We redefine the utilities to include the type of the player. Let $u_i\colon T_i \times \Delta_i \to $
%     \end{itemize}

%     ... Therefore, $\cal{G}$ is an ex-ante Bayesian game against nature.
% \end{definition}

In Section~\ref{sec:perf-nash}, we showed that in the perfect information setting ($\cal{G}_{\rm perfect}$), where every player a priori precisely knew nature's private state, the unique NE encoded the relative values of nature's private state, $\vv{\sf{R}}$. This is of course still true in $\cal{G}_{\rm noisy}$ since the utilities have not changed; however, the players operate with incomplete information and so, we must deal with a weaker notion than Nash.

Instead, we show that TRep games with the noisy but identical type distribution shown above are $\inp{\cal{E}_{\rm tt}, f_2}$-decodable for,
\begin{itemize}
    \item $\cal{E}_{\rm tt}$ is the singular set of the ``truth-telling'' strategy profile, denoted by $\vv{s_{\rm tt}}$, in which every players play $\sf{N}(\vv{t_i})$ as their strategy. We call the strategy $\sf{N}(\vv{t_i})$ as ``truth-telling'' as it represents some fixed function of the players' belief about the state of nature, which it can locally compute. 
    % \item $f_2\colon\inb{0,1}^m \to \inb{0,1}^m$ is a randomized function that takes as input $\vv{u} = \inp{u_1,\ldots,u_m}$, randomly selects a vector, $\vv{k} = \inp{k_1,\ldots,k_m}\in\inset{+1,-1}^m$, and outputs $\vv{v}=\inp{v_1,\ldots,v_m}$ such that $\dfrac{v_i}{v_j} = \dfrac{u_i+k_i\epsilon}{u_j+k_j\epsilon}$.
 %    \item $f_2\colon \inb{0,1}^m \to \inb{0,1}^m$ is a randomized function that takes as input $\vv{u} = \inp{u_1,\ldots,u_m}$, and outputs $\vv{v}$ such that  $\E\inb{\norm{\infty}{\vv{v}-\sf{N}(\vv{u})}} \leq \epsilon'$, where $\epsilon' = \epsilon'(\epsilon)$ is some scalar that depends on $\epsilon$. In words, $f_3$, in expectation, outputs a vector ``close'' to the {\em $L_1$-normalized} vector with respect to the $L_{\infty}$-norm.
    \item $f_2\colon \inb{0,1}^m \to \inb{0,1}^m$ is a randomized function that takes as input $\vv{u} = \inp{u_1,\ldots,u_m}$, and outputs $\vv{v}$ such that  $\Pr\inb{\E\inb{\norm{\infty}{\vv{v}-\sf{N}(\vv{u})}} \leq \frac{\epsilon}{\norm{1}{\vv{\sf{R}}}}}\geq 1-mp - q$, where $q$ is an
    upper bound on the probability that $\abs{\norm{1}{\vv{\sf{R'}}} - \norm{1}{\vv{\sf{R}}}} \geq \delta$ for some small constant, $\delta$. We show that $q$ is exponentially-decreasing in $m$.
     In words, with high probability, $f_2$, in expectation, outputs a vector ``close'' to the {\em $L_1$-normalized} vector with respect to the $L_{\infty}$-norm.
\end{itemize}

% The proof can be found in Appendix~\ref{sec:proofs}.

In the following we show that $\vv{s_{\rm tt}}$,  is an $\epsilon'$-(Nash) equilibrium \cite{10.5555/1296179} for $\cal{G}_{\rm noisy}$, where $\epsilon'$ is decreasing in $n/m^2$. 

\begin{lemma}\label{lem:eps-nash}
    Assuming $\epsilon=O(1/n)$, the ``truth-telling'' strategy profile, $\vv{s_{\rm tt}}$, is $\epsilon'$-NE for $\epsilon' =  O(m^2/n)$.
\end{lemma}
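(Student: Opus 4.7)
The plan is to compare, under nature's true state $\vv{\sf{R}}$, the expected utility obtained by a player $P_i$ in the truth-telling profile to the best-response payoff achievable by any deviation, and to show this gap is $O(m^2/n)$. When every other player plays $\pi := \sf{N}(\vv{\sf{R'}})$, the ratios in Eq.~\eqref{eq:exp_util} collapse and $P_i$'s truth-telling payoff simplifies, by direct substitution, to $L := \norm{1}{\vv{\sf{R}}}/n$, independent of $\vv{\sf{R'}}$. The task is therefore to upper bound $P_i$'s maximum attainable payoff under any deviation $s^{(i)}$ against the fixed profile $s^{(-i)}_j=(n-1)\pi_j$.

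For the deviation, I would parameterize $s^{(i)}=\pi+\Delta$ with $\sum_j\Delta_j=0$, differentiate Eq.~\eqref{eq:exp_util} with a Lagrange multiplier for the simplex constraint, and obtain in closed form that the maximum attainable payoff equals $\norm{1}{\vv{\sf{R}}}-\tfrac{n-1}{n}\bigl(\sum_j\sqrt{\sf{R}_j\,\pi_j}\bigr)^2$. Subtracting $L$, the maximum gain from deviating becomes $\tfrac{n-1}{n}\bigl(A - T^2/B\bigr)$, where $A:=\norm{1}{\vv{\sf{R}}}$, $B:=\norm{1}{\vv{\sf{R'}}}$, and $T:=\sum_j\sqrt{\sf{R}_j\sf{R'}_j}$. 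Applying the identity $\sqrt{ab}=\tfrac{a+b}{2}-\tfrac{(\sqrt{a}-\sqrt{b})^2}{2}$ term by term rewrites this as $\tfrac{n-1}{n}\cdot\tfrac{2(A+B)D-(A-B)^2-D^2}{4B}$, where $D:=\sum_j(\sqrt{\sf{R}_j}-\sqrt{\sf{R'}_j})^2$. The type assumption $|\sf{R}_j-\sf{R'}_j|\le\epsilon$ componentwise yields $|A-B|\le m\epsilon$, and the elementary inequality $(\sqrt{a}-\sqrt{b})^2\le|a-b|$ yields $D\le m\epsilon$. Consequently the maximum gain is dominated by the $2(A+B)D$ term and is at most $\tfrac{n-1}{n}\cdot\tfrac{(A+B)D}{2B}=O(m\epsilon)$; plugging in $\epsilon=O(1/n)$ gives $O(m/n)\subseteq O(m^2/n)$. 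Since $P_i$ was arbitrary, this establishes the $\epsilon'$-NE property.

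\textbf{Main obstacle.} The principal step is establishing the clean Cauchy-Schwarz-like closed form for the best response against the uniform profile $(n-1)\pi$; the Lagrangian computation is routine but requires careful bookkeeping of the $(n-1)$ and $1/n$ factors and a verification that the interior critical point respects the non-negativity constraint on $s^{(i)}$. A secondary subtlety is bounding $D$ when some entries $\sf{R}_j$ may be near zero: in that regime the crude bound $(\sqrt{a}-\sqrt{b})^2\le|a-b|$ is what saves us, since the Taylor-style refinement $(a-b)^2/(4\sqrt{ab})$ would introduce a spurious blow-up. Using the uniform $|a-b|$ bound avoids any lower-bound hypothesis on the entries of $\vv{\sf{R}}$ and keeps the argument parametrized solely by $\epsilon$ and $m$.
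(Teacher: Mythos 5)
Your proposal is correct and, after the shared first step, takes a genuinely different route to the bound than the paper. Both arguments fix the opponents' profile at $(n-1)\sf{N}(\vv{\sf{R'}})$ and locate the same interior critical point $x^*_j = n\sqrt{\sf{R}_j\sf{R'}_j}/\sum_k\sqrt{\sf{R}_k\sf{R'}_k} - (n-1)\sf{N}(\vv{\sf{R'}})_j$; but the paper then estimates the utility gap by computing the deviation $\Delta x_j$ and invoking a ``coarse'' first-order linear approximation of the utility, arriving at $\epsilon' \le \frac{m^2(n-1)}{n^2}\cdot\frac{1+\epsilon}{1-\epsilon} = O(m^2/n)$. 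You instead evaluate the best-response \emph{value} exactly, $A - \frac{n-1}{n}\,T^2/B$, subtract the exact truth-telling payoff $A/n$, and control the exact difference $\frac{n-1}{n}\bigl(A - T^2/B\bigr)$ with the identity $2T = A+B-D$ and the inequality $(\sqrt{a}-\sqrt{b})^2\le|a-b|$. This buys you a fully rigorous bound with no Taylor-remainder hand-waving and a sharper rate, $O(m\epsilon)=O(m/n)$, which of course implies the stated $O(m^2/n)$. Two small points to tighten: (i) the final step $\frac{(A+B)D}{2B}=O(m\epsilon)$ implicitly needs $A/B=O(1)$; when $B=\norm{1}{\vv{\sf{R'}}}$ is itself $O(m\epsilon)$ this fails, but then $A\le B+m\epsilon=O(m\epsilon)$ and the trivial bound $A-T^2/B\le A$ already gives $O(m\epsilon)$, so the conclusion survives in all cases and you should say so explicitly; (ii) as you note, $x^*$ may violate non-negativity, but since you only use the critical value as an \emph{upper} bound on the constrained maximum of a concave function over the simplex, feasibility of $x^*$ is not actually needed -- you can state this rather than treat it as an open obstacle.
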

\begin{proof}
%We show that $\vv{s_{\rm tt}}$ is an $\epsilon'$-(Nash) equilibrium \cite{10.5555/1296179} for $\cal{G}_{\rm noisy}$, where $\epsilon'$ is decreasing in $n/m^2$. That is, 
All players playing $\sf{N}(\vv{\sf{R}})$ guarantees a payoff %that is 
at most $\epsilon'$ less than the  %from the 
best-response payoff. $P_i$'s (expected) utility as a function of everyone else playing truthfully is,
\begin{alignat*}{2}
\E\inb{u_i(\vv{x}, {s_{\rm tt}}_{-i}; \vv{\sf{R}})}
&= 
\sum_{j\in [m]} {\dfrac{x_j \cdot \sf{R}_j}{(n-1)\sf{N}(\vv{\sf{R'}})_j + x_j}}
\end{alignat*}

The above function is strictly concave on the domain of $\vv{x}$: $\inbr{x_i \geq 0; \sum_{i\in[n]}{x_i} = 1}$, and therefore obtains its maximum at a unique point. Partially differentiating and simplifying yields maximum value at $\vv{x^*}$,
\begin{equation*}
x^*_j = n\dfrac{\sqrt{\sf{R}_j\sf{R'}_j}}{\sum_{k\in[m]}{\sqrt{\sf{R}_k\sf{R'}_k}}} - (n-1)\sf{N}(\vv{\sf{R'}})_j
\end{equation*}

and thus, the deviation from the truthful strategy in each component, 
\begin{equation*}
\Delta x_j = x^*_j - \sf{N}(\vv{\sf{R'}})_j = n\dfrac{\sqrt{\sf{R}_j\sf{R'}_j}}{\sum_{k\in[m]}{\sqrt{\sf{R}_k\sf{R'}_k}}} - n\dfrac{\sf{R'}_j}{\norm{1}{\vv{\sf{R'}}}}
\end{equation*}

We coarsely approximate $\epsilon'$. For  $\epsilon=O(1/n)$, $\Delta x_j$ is small, and we can approximate the change in utility, $\epsilon' = \E\inb{u_i(\vv{x^*}, {s_{\rm tt}}_{-i}; \vv{\sf{R}})} - \E\inb{u_i(\vv{s_{\rm tt}}; \vv{\sf{R}})}$ using linear approximation. We obtain, \\
$\epsilon' \leq \dfrac{m^2(n-1)}{n^2}\dfrac{1+\epsilon}{1-\epsilon}$.  This completes the proof of the lemma.
\end{proof}

We remark that the above is an asymptotic bound that aims at showing that for small enough $\epsilon$, $\epsilon'$ is inversely proportional to $n/m^2$. In particular, for our blockchain-motivated scenario, where a natural assumption is that $n\gg m$ (as is the case in  major cryptocurrencies) $\epsilon'$  will be monotonically decreasing as the system scales.  We note in passing that our numerical experiments demonstrate that the value of $\epsilon'$ would typically be much smaller than $m^2/n$, even for a large number of users. This is consistent with our intuition that in our game (where utilities are normalized) %$$
%
%
% and a large number of cl
%
%Therefore,  
playing one's beliefs guarantees a payoff close to the best possible payoff one can hope even when everyone is perfectly informed; therefore small deviations from this strategy should not yield major gains in utility. 
%
%given the nature of the game is the best payoff one can hope, for as you do not have perfect information about nature's state and thus, its (fixed random) strategy.
We conjecture that %believe it could be possible to prove that 
playing the truth-telling strategy is  an {\em approximate Bayesian Nash Equilibrium} for a tighter $\epsilon'$ (dominated by  $\epsilon$ and $p$). We leave this as a question for future work. %\rohit{vassilis, is this okay to say???} 

% Bayesian Nash Equilibria is an extremely natural class of strategy profiles for bayesian games. Intuitively, it captures the notion that players play the best response with respect to the other players beliefs and strategies.
% We remark that since each player has the same type which is known to everyone, $\cal{G}_{\rm noisy-id}$ reduces to $\cal{G}_{\rm perfect}$ with an alteration of nature in which its state is $\vv{\sf{R}'}$ instead of $\vv{\sf{R}}$. In other words, any Bayesian Nash equilibrium of $\cal{G}_{\rm noisy-id}$ is simply a Nash equilibrium in $\cal{G}_{\rm perfect}$ with a different (perceived) state of nature. Indeed, the (expected) utility for every $P_i$,
% \begin{align*}
% \E\inb{u_i(\vv{R'}; \vv{s})}
% &= \sum_{j\in[m]}{R'_j \cdot \dfrac{s^{(i)}_j}{s^{(i)}_j + s^{(-i)}_j}}
% \end{align*}
% Therefore, we present the following lemma whose proof is identical to that of Lemma~\ref{lem:unique-nash}.
% \begin{lemma}\label{lem:unique-b-nash}
%     $\cal{G}_{\rm noisy-id}$ has a unique (expected Bayesian) Nash equilibrium. The Nash is precisely the symmetric strategy profile, $\vv{s^*} = (s^*_1, \ldots, s^*_n)$, where $s^*_i = \sf{N}(\vv{\sf{R'}})$ for all $i$. Therefore, $\cal{E}_{\rm BNE} = \inset{\vv{s^{*}}}$.
% \end{lemma}

Now, unlike $\cal{G}_{\rm perfect}$, we cannot perfectly decode $f_1(\vv{\sf{R}}) = \sf{N}(\vv{\sf{R}})$ since the players only know an approximation of nature's state . Instead, we prove decodability for the ``weaker'' function, $f_2$, that only approximates the $L_1$-normalized function in the {$L_\infty$}-norm. Again, we will use the same decoding function, $D_{\rm PR}$ to show that PageRank approximately decodes to the relative trustworthiness scores of the servers.

   We first show that $\Pr\inb{\abs{\norm{1}{\vv{\sf{R'}}} - \norm{1}{\vv{\sf{R}}}} \geq \delta} \leq q$ for some $q$ that is exponentially small in $m$ using Hoeffding's bound.
    % Observe that $\Pr\inb{\abs{\norm{1}{\vv{\sf{R'}}} - \norm{1}{\vv{\sf{R}}}}} = \Pr\inb{\norm{1}{\vv{\sf{R'}}} \geq \norm{1}{\vv{\sf{R}}} + \delta} + \Pr\inb{\norm{1}{\vv{\sf{R'}}} \leq \norm{1}{\vv{\sf{R}}} - \delta}$.
    Formally, we treat $\sf{R'}_j$ as a random variable. By assumption, we have $\E\inb{\sf{R'}_j} = \sf{R_j}$. Also, $\norm{1}{\vv{\sf{R'}}} = \sum_{j\in[m]}{\sf{R'}_j}$ by definition, which implies 
    
\begin{equation}\label{veq}\E\inb{\norm{1}{\vv{\sf{R'}}}} = \E\inb{\sum{\sf{R'}_j}} = \sum{\E\inb{\sf{R'}_j}} = \sum{\sf{R}_j} = \norm{1}{\vv{\sf{R}}}.\end{equation}
    % Formally, we can treat $\sf{R'}_j \in \cal{U}\inb{\sf{R}_j-\epsilon, \sf{R}_j+\epsilon}$ as  (independent) uniformly random variables with probability at least $1-mp$ by union bound over $j$. Denote this event as $E$. Therefore, $\Pr[E] \geq (1-mp)$.
    Therefore, $\Pr\inb{\abs{\norm{1}{\vv{\sf{R'}}} - \norm{1}{\vv{\sf{R}}}} \geq \delta} = \Pr\inb{\abs{\norm{1}{\vv{\sf{R'}}} - \E\inb{\norm{1}{\vv{\sf{R'}}}}} \geq \delta}$, which by Hoeffding's bound \cite{Hoeffding1963},
\begin{equation*}
    \Pr\inb{\abs{\norm{1}{\vv{\sf{R'}}} - \E\inb{\norm{1}{\vv{\sf{R'}}}}} \geq \delta}
    \leq {\rm exp}\inp{-\frac{\delta^2}{4\epsilon^2 m}} = q
\end{equation*}
which is exponentially decreasing in $m$. Since the above holds for any $\delta$,  for sufficiently large $m$, we can approximate,
\begin{equation}\label{eq:approx}
\norm{1}{\vv{\sf{R'}}} \approx \E[\norm{1}{\vv{\sf{R'}}}] \overset{Eq.~\eqref{veq}}{=} \norm{1}{\vv{\sf{R}}}.
\end{equation}
%\end{remark}

Using the above, we show,
\begin{theorem}
    $\cal{G}_{\rm noisy}$ is $\inp{\cal{E}_{\rm tt}, f_2}$-decodable.
\end{theorem}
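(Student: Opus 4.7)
The plan is to first show that the decoder $\cal{D}_{\rm PR}$ applied to the truth-telling profile $\vv{s_{\rm tt}}$ collapses to the clean closed form $\sf{N}(\vv{\sf{R'}})$, and then to combine the per-coordinate assumption $\Pr\inb{\abs{\sf{R'}_j - \sf{R}_j} > \epsilon} \leq p$ with the Hoeffding tail bound already established to control the $L_\infty$ distance between $\sf{N}(\vv{\sf{R'}})$ and $\sf{N}(\vv{\sf{R}})$ with probability at least $1 - mp - q$.

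First I would evaluate $\cal{D}_{\rm PR}(\vv{s_{\rm tt}})$ directly. Since every player uses the same mixed strategy $\vv{s^{(i)}} = \sf{N}(\vv{\sf{R'}})$ (the common type fixes this quantity deterministically), the coordinates $s^{(i)}_j$ do not depend on $i$, so the factor $\sum_i \vv{\pi}(v_i)$ cancels between numerator and denominator in the definition of $\vrsc_j$---precisely mirroring the calculation in the proof of Theorem~\ref{thm:perfect-decoder}. Thus $\cal{D}_{\rm PR}(\vv{s_{\rm tt}}) = \sf{N}(\vv{\sf{R'}})$, and the only remaining randomness is that of nature's noisy signal $\vv{\sf{R'}}$.

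Second, I would bound the coordinatewise distance between $\sf{N}(\vv{\sf{R'}})$ and $\sf{N}(\vv{\sf{R}})$. Define the ``good'' event $\cal{E}_{\rm good}$ as the conjunction of (a) $\abs{\sf{R'}_j - \sf{R}_j} \leq \epsilon$ for every $j\in[m]$ and (b) $\abs{\norm{1}{\vv{\sf{R'}}} - \norm{1}{\vv{\sf{R}}}} \leq \delta$. By a union bound over the $m$ per-coordinate tail events, together with the Hoeffding estimate underlying~\eqref{eq:approx}, $\Pr\inb{\cal{E}_{\rm good}} \geq 1 - mp - q$. Conditional on $\cal{E}_{\rm good}$, the elementary identity
\begin{equation*}
\sf{N}(\vv{\sf{R'}})_j - \sf{N}(\vv{\sf{R}})_j = \frac{(\sf{R'}_j - \sf{R}_j)\norm{1}{\vv{\sf{R}}} - \sf{R}_j\inp{\norm{1}{\vv{\sf{R'}}} - \norm{1}{\vv{\sf{R}}}}}{\norm{1}{\vv{\sf{R'}}}\cdot\norm{1}{\vv{\sf{R}}}}
\end{equation*}
combined with $\norm{1}{\vv{\sf{R'}}}\approx \norm{1}{\vv{\sf{R}}}$ (from~\eqref{eq:approx}) gives a coordinatewise bound of order $\epsilon/\norm{1}{\vv{\sf{R}}}$ (up to lower-order terms that vanish as $\delta \to 0$), and hence the same bound for the $L_\infty$ norm.

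The main obstacle I foresee is reconciling the exact form of $f_2$'s specification---in particular the $\Pr\inb{\E\inb{\cdot}}$ nesting---with the essentially deterministic-given-$\vv{\sf{R'}}$ output of $\cal{D}_{\rm PR}$. My plan is to treat $\cal{E}_{\rm good}$ as the outer event over which the probability in $f_2$'s specification is evaluated: conditional on $\cal{E}_{\rm good}$, there is no further randomness over which to take expectation (the output is a deterministic function of $\vv{\sf{R'}}$), so the inner expectation equals the deterministic bound derived above; unconditioning gives the required $1 - mp - q$ probability, yielding $\inp{\cal{E}_{\rm tt}, f_2}$-decodability.
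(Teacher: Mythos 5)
Your proposal is correct and follows essentially the same route as the paper: evaluate $\cal{D}_{\rm PR}$ at the truth-telling profile to obtain $\sf{N}(\vv{\sf{R'}})$, condition on the intersection of the per-coordinate event and the Hoeffding event (probability at least $1-mp-q$ by union bound), and bound the coordinatewise deviation by $\epsilon/\norm{1}{\vv{\sf{R}}}$. Your explicit difference-of-fractions identity is a slightly cleaner way of organizing the final estimate than the paper's substitution $\sf{R'}_j = \sf{R}_j + \delta_j\epsilon'$ followed by the approximation $\norm{1}{\vv{\sf{R'}}}\approx\norm{1}{\vv{\sf{R}}}$, but the two arguments are the same in substance.
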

\begin{proof} 
    We have, $\cal{E}_{\rm tt} = \inset{\inp{\sf{N}(\vv{\sf{R'}}), \ldots, \sf{N}(\vv{\sf{R'}})}}$, which implies,
    $
    \cal{D}_{\rm PR}(\cal{E})
    = \inset{\sf{N}(\vv{\sf{R'}})}
$

We calculate,
\begin{alignat*}{2}
\E\inb{\norm{\infty}{\cal{D}_{\rm PR}\inp{\cal{E}_{\rm tt}} - \sf{N}(\vv{\sf{R}})}}
&= 
\E\inb{\norm{\infty}{\sf{N}(\vv{\sf{R'}}) - \sf{N}(\vv{\sf{R}})}}
= \max_{j\in[m]}{\E\inb{\abs{\sf{N}(\vv{\sf{R'}})_j - \sf{N}(\vv{\sf{R}})_j}}} 
\end{alignat*}

We restrict ourselves to the intersection of events, \\$E = \inset{\sf{R'}_j\in\inb{\sf{R_j}-\epsilon, \sf{R_j}+\epsilon} \text { for all } j\in[m]}$, and $H = \inset{\abs{\norm{1}{\vv{\sf{R'}}} - \E\inb{\norm{1}{\vv{\sf{R'}}}}} \leq \delta}$.

Now,
$
    \Pr[E\cap H] \geq 1 - \Pr[E^c \cup H^c]
    \geq 1 - (\Pr[E^c] + \Pr[H^c])
$, where $A^c$ is defined the complement of event $A$. By union bound, $\Pr[E^c] \leq mp$, and, $\Pr[H^c] \leq q$ as we showed above. Thus, $\Pr[E\cap H] \geq 1 - mp - q$.

Conditioned on $E\cap H$, we have $\epsilon' \leq \epsilon$ and,

\begin{alignat*}{2}
\max_{j\in[m]}{\E\inb{\abs{\sf{N}(\vv{\sf{R'}})_j - \sf{N}(\vv{\sf{R}})_j}}} 
&\leq \max_{j\in[m]}{\E\inb{\abs{\frac{\sf{R}_j+\delta_j\epsilon'}{\norm{1}{\vv{\sf{R'}}}} - \frac{\sf{R}_j}{\norm{1}{\vv{\sf{R}}}}}}} 
\\
&\hspace*{-1.5ex}\overset{Eq.~\eqref{eq:approx}}{\approx} \max_{j\in[m]}{\E\inb{\abs{\frac{\sf{R}_j+\delta_j\epsilon'}{\norm{1}{\vv{\sf{R}}}} - \frac{\sf{R}_j}{\norm{1}{\vv{\sf{R}}}}}}}
\\
&\leq \dfrac{\epsilon}{\E\inb{\norm{1}{\vv{\sf{R}}}}}
\end{alignat*}

We conclude that $\E\inb{\norm{\infty}{\cal{D}_{\rm PR}\inp{\cal{E}_{\rm tt}} - \sf{N}(\vv{\sf{R}})}} \leq \dfrac{\epsilon}{\E\inb{\norm{1}{\vv{\sf{R}}}}}$ with probability %at least 
$\geq 1-mp-q$.
\end{proof}

Intuitively, this demonstrates that for sufficiently small $p$, i.e., sufficiently high confidence on the interval, as the system scales in $m$, the decoding of the above approximate Nash becomes order preserving with respect to the ground truth. In fact, assuming sufficient large gaps in the ground truth, the above will be true even for small values of $m$. 

% \rohit{Vassilis, we should talk about what happens when there are a lot of less trustworthy vs more trustworthy servers. E.g., if there are fewer trustworthy servers, then the expectation is small and thus, we decode something more noisy. But that noise should be okay I think...}

% \rohit{maybe we need to study Bayesian Nash Equilibrium and show that these equilibria are a subset of the $\epsilon$-Nash equilibria... there are two possibilities: (1) ex-ante stage game, no player knows their type. here type exactly means the value of $\epsilon$ and in which direction $\pm$. this seems to align well with our model (2) interim stage: player knows their own type but only a distribution over the other players' types. this might make more sense for the upper bounded game case where based on when i joined the game, i know how good of an approximation i have so i know something about my type but i don't know it exactly and maybe I can assume everyone else's type is within the upper bound}
% \input{game3}
% \input{dynpart}
\section{Conclusion and Future Work}
\label{sec:conclusion}
We introduced a class of games, called trustworthy reputation games, which is motivated by the problem of a blockchain system publicly extracting its users' collective perception of the trustworthiness of the blockchain nodes. Our games make use of the PageRank algorithm for both ranking and rewarding nodes. Importantly, we demonstrate that this use comes with natural guarantees about how accurately our extraction works. In the process, we initiate the study of benchmarking PageRank in trustworthiness settings and analyze its behavior. Our work opens a number of interesting questions, including how to extend our treatment to repeated games, as a means of allowing dynamically updateable reputation systems, and investigating the behavior of our PageRank model on different classes of graphs and topologies both empirically and theoretically. We leave it as future work to explore the sensitivity of ranks to misinformed users.  

% \rohit{UPDATE conclusion to mention future work on imperfect information's effect on the rankings... maybe experiments, bounds on difference, et cetera. }

% Bibliography
\bibliographystyle{splncs04}
\bibliography{reputation}

\begin{thebibliography}{10}
\providecommand{\url}[1]{\texttt{#1}}
\providecommand{\urlprefix}{URL }
\providecommand{\doi}[1]{https://doi.org/#1}

\bibitem{Abreu2007}
Abreu, D., Pearce, D.: Bargaining, reputation, and equilibrium selection in repeated games with contracts. Econometrica  \textbf{75}(3),  653–710 (May 2007). \doi{10.1111/j.1468-0262.2007.00765.x}, \url{http://dx.doi.org/10.1111/j.1468-0262.2007.00765.x}

\bibitem{10.1007/978-3-540-77004-6_12}
Andersen, R., Borgs, C., Chayes, J., Hopcraft, J., Mirrokni, V.S., Teng, S.H.: Local computation of pagerank contributions. In: Bonato, A., Chung, F.R.K. (eds.) Algorithms and Models for the Web-Graph. pp. 150--165. Springer Berlin Heidelberg, Berlin, Heidelberg (2007)

\bibitem{angeris2020improved}
Angeris, G., Chitra, T.: Improved price oracles: Constant function market makers. Proceedings of the 2nd ACM Conference on Advances in Financial Technologies (AFT) pp. 80--91 (2020). \doi{10.1145/3419614.3423267}, \url{https://dl.acm.org/doi/10.1145/3419614.3423267}

\bibitem{angeris2022foundations}
Angeris, G., Evans, A., Chitra, T.: Foundations of automated market makers. Proceedings of the ACM on Measurement and Analysis of Computing Systems (POMACS)  \textbf{6}(2),  1--33 (2022). \doi{x10.1145/3539608}, \url{https://dl.acm.org/doi/10.1145/3539608}

\bibitem{AC:AshLin13}
Asharov, G., Lindell, Y., Zarosim, H.: Fair and efficient secure multiparty computation with reputation systems. In: Sako, K., Sarkar, P. (eds.) Advances in Cryptology - {ASIACRYPT} 2013 - 19th International Conference on the Theory and Application of Cryptology and Information Security, Bengaluru, India, December 1-5, 2013, Proceedings, Part {II}. Lecture Notes in Computer Science, vol.~8270, pp. 201--220. Springer (2013). \doi{10.1007/978-3-642-42045-0\_11}, \url{https://doi.org/10.1007/978-3-642-42045-0\_11}

\bibitem{avarikioti2021brick}
Avarikioti, G., Kokoris{-}Kogias, E., Wattenhofer, R., Zindros, D.: Brick: Asynchronous incentive-compatible payment channels. In: Financial Cryptography and Data Security (FC). Springer (2021). \doi{10.1007/978-3-662-64331-0\_11}, \url{https://arxiv.org/abs/1905.11360}

\bibitem{avarikioti2019ride}
Avarikioti, Z., Heimbach, L., Wang, Y., Wattenhofer, R.: Ride the lightning: The game theory of payment channels. arXiv preprint arXiv:1912.04797  (2019), \url{https://arxiv.org/abs/1912.04797}

\bibitem{avarikioti2020cerberus}
Avarikioti, Z., Litos, O.S.T., Wattenhofer, R.: Cerberus channels: Incentivizing watchtowers for bitcoin. In: Financial Cryptography and Data Security (FC). pp. 346--366. LNCS, Springer (2020). \doi{10.1007/978-3-030-51280-4\_19}, \url{https://fc20.ifca.ai/preproceedings/132.pdf}

\bibitem{AZAR2019511}
Azar, P.D., Kleinberg, R., Weinberg, S.M.: Prior independent mechanisms via prophet inequalities with limited information. Games and Economic Behavior  \textbf{118},  511--532 (2019). \doi{https://doi.org/10.1016/j.geb.2018.05.006}, \url{https://www.sciencedirect.com/science/article/pii/S0899825618300769}

\bibitem{10.1145/1052934.1052938}
Bianchini, M., Gori, M., Scarselli, F.: Inside pagerank. ACM Trans. Internet Technol.  \textbf{5}(1),  92–128 (Feb 2005). \doi{10.1145/1052934.1052938}, \url{https://doi.org/10.1145/1052934.1052938}

\bibitem{EPRINT:BirFehKho17}
Biryukov, A., Feher, D., Khovratovich, D.: Guru: Universal reputation module for distributed consensus protocols. Cryptology {ePrint} Archive, Paper 2017/671 (2017), \url{https://eprint.iacr.org/2017/671}

\bibitem{boneh2020proof}
Boneh, D., Kiayias, A., Panagiotakos, G., Tselekounis, Y.: Proof-of-stake longest chain protocols: Security vs. predictability. In: Advances in Cryptology -- CRYPTO 2020. LNCS, vol. 12171, pp. 602--632. Springer (2020). \doi{10.1007/978-3-030-56877-1\_21}, \url{https://eprint.iacr.org/2019/1421}

\bibitem{pagerank}
Brin, S., Page, L.: The anatomy of a large-scale hypertextual web search engine. Comput. Networks  \textbf{30}(1-7),  107--117 (1998). \doi{10.1016/S0169-7552(98)00110-X}, \url{https://doi.org/10.1016/S0169-7552(98)00110-X}

\bibitem{DBLP:conf/eurosp/BrunjesKKS20}
Br{\"{u}}njes, L., Kiayias, A., Koutsoupias, E., Stouka, A.: Reward sharing schemes for stake pools. In: {IEEE} European Symposium on Security and Privacy, EuroS{\&}P 2020, Genoa, Italy, September 7-11, 2020. pp. 256--275. {IEEE} (2020). \doi{10.1109/EUROSP48549.2020.00024}, \url{https://doi.org/10.1109/EuroSP48549.2020.00024}

\bibitem{buterin2019optimistic}
Buterin, V.: An incomplete guide to rollups. \url{https://vitalik.ca/general/2021/01/05/rollup.html} (2019), blog post

\bibitem{capretto2024fast}
Capretto, T., Dall’Agnol, E., Gudgeon, L., Moreno-Sanchez, P., Tavares, S., Tselekounis, Y., Werner, S.: Fast and secure decentralized optimistic rollups using setchain. arXiv preprint arXiv:2406.02316  (2024), \url{https://arxiv.org/abs/2406.02316}

\bibitem{algorand}
Chen, J., Micali, S.: Algorand: {A} secure and efficient distributed ledger. Theor. Comput. Sci.  \textbf{777},  155--183 (2019). \doi{10.1016/J.TCS.2019.02.001}, \url{https://doi.org/10.1016/j.tcs.2019.02.001}

\bibitem{CANS:Chow07}
Chow, S.S.M.: Running on karma - {P2P} reputation and currency systems. In: Bao, F., Ling, S., Okamoto, T., Wang, H., Xing, C. (eds.) Cryptology and Network Security, 6th International Conference, {CANS} 2007, Singapore, December 8-10, 2007, Proceedings. Lecture Notes in Computer Science, vol.~4856, pp. 146--158. Springer (2007). \doi{10.1007/978-3-540-76969-9\_10}, \url{https://doi.org/10.1007/978-3-540-76969-9\_10}

\bibitem{Cripps2008}
Cripps, M.W., Ely, J.C., Mailath, G.J., Samuelson, L.: Common learning. Econometrica  \textbf{76}(4),  909–933 (Jul 2008). \doi{10.1111/j.1468-0262.2008.00862.x}, \url{http://dx.doi.org/10.1111/j.1468-0262.2008.00862.x}

\bibitem{pimp-mech}
Devanur, N., Hartline, J., Karlin, A., Nguyen, T.: Prior-independent multi-parameter mechanism design. In: Chen, N., Elkind, E., Koutsoupias, E. (eds.) Internet and Network Economics. pp. 122--133. Springer Berlin Heidelberg, Berlin, Heidelberg (2011)

\bibitem{BadRep}
Ely, J., Välimäki, J.: Bad reputation. The Quarterly Journal of Economics  \textbf{118}(3),  785--814 (2003), \url{https://EconPapers.repec.org/RePEc:oup:qjecon:v:118:y:2003:i:3:p:785-814.}

\bibitem{ersoy2020profit}
Ersoy, O., Roos, S., Erkin, Z.: How to profit from payments channels. In: Financial Cryptography and Data Security (FC). pp. 284--303. LNCS, Springer (2020). \doi{10.1007/978-3-030-51280-4\_16}, \url{https://fc20.ifca.ai/preproceedings/201.pdf}

\bibitem{blogpost}
Esber, J., Kominers, S.: A novel framework for reputation-based systems (2021), \url{https://a16zcrypto.com/posts/article/reputation-based-systems/}

\bibitem{ethereum2.0}
{Ethereum Foundation}: Ethereum 2.0: Proof-of-stake and the beacon chain. \url{https://ethereum.org/en/upgrades/beacon-chain/} (2020), accessed: 2025-05-28

\bibitem{foulley2018}
Foulley, J.L., Celeux, G., Josse, J.: Empirical bayes approaches to pagerank type algorithms for rating scientific journals (2018), \url{https://arxiv.org/abs/1707.09508}

\bibitem{Fudenberg1992}
Fudenberg, D., Levine, D.K.: Maintaining a reputation when strategies are imperfectly observed. The Review of Economic Studies  \textbf{59}(3), ~561 (Jul 1992). \doi{10.2307/2297864}, \url{http://dx.doi.org/10.2307/2297864}

\bibitem{Fudenberg2011}
Fudenberg, D., Yamamoto, Y.: Learning from private information in noisy repeated games. Journal of Economic Theory  \textbf{146}(5),  1733–1769 (Sep 2011). \doi{10.1016/j.jet.2011.03.003}, \url{http://dx.doi.org/10.1016/j.jet.2011.03.003}

\bibitem{Gai2018ProofOR}
Gai, F., Wang, B., Deng, W., Peng, W.: Proof of reputation: A reputation-based consensus protocol for peer-to-peer network. In: DASFAA (2018)

\bibitem{10.1007/s10462-021-09996-w}
Gronauer, S., Diepold, K.: Multi-agent deep reinforcement learning: a survey. Artif. Intell. Rev.  \textbf{55}(2),  895–943 (Feb 2022). \doi{10.1007/s10462-021-09996-w}, \url{https://doi.org/10.1007/s10462-021-09996-w}

\bibitem{articlePRR}
Gu, K., Fan, Y., Di, Z.: Signed pagerank on online rating systems. Journal of Systems Science and Complexity  \textbf{35} (06 2021). \doi{10.1007/s11424-021-0124-2}

\bibitem{10.1145/2488388.2488433}
Gupta, P., Goel, A., Lin, J., Sharma, A., Wang, D., Zadeh, R.: Wtf: the who to follow service at twitter. In: Proceedings of the 22nd International Conference on World Wide Web. p. 505–514. WWW '13, Association for Computing Machinery, New York, NY, USA (2013). \doi{10.1145/2488388.2488433}, \url{https://doi.org/10.1145/2488388.2488433}

\bibitem{10.5555/1316689.1316740}
Gy\"{o}ngyi, Z., Garcia-Molina, H., Pedersen, J.: Combating web spam with trustrank. In: Proceedings of the Thirtieth International Conference on Very Large Data Bases - Volume 30. p. 576–587. VLDB '04, VLDB Endowment (2004)

\bibitem{Harsanyi1982}
Harsanyi, J.C.: Games with Incomplete Information Played by ``Bayesian'' Players, pp. 154--170. Springer Netherlands, Dordrecht (1982). \doi{10.1007/978-94-017-2527-9\_8}, \url{https://doi.org/10.1007/978-94-017-2527-9\_8}

\bibitem{tim-optimal-mech}
Hartline, J.D., Roughgarden, T.: Simple versus optimal mechanisms. In: Proceedings of the 10th ACM Conference on Electronic Commerce. p. 225–234. EC '09, Association for Computing Machinery, New York, NY, USA (2009). \doi{10.1145/1566374.1566407}, \url{https://doi.org/10.1145/1566374.1566407}

\bibitem{10.1145/511446.511513}
Haveliwala, T.H.: Topic-sensitive pagerank. In: Proceedings of the 11th International Conference on World Wide Web. p. 517–526. WWW '02, Association for Computing Machinery, New York, NY, USA (2002). \doi{10.1145/511446.511513}, \url{https://doi.org/10.1145/511446.511513}

\bibitem{10.24963/ijcai.2023/746}
He, Z., Liu, W., Guo, W., Qin, J., Zhang, Y., Hu, Y., Tang, R.: A survey on user behavior modeling in recommender systems. In: Proceedings of the Thirty-Second International Joint Conference on Artificial Intelligence. IJCAI '23 (2023). \doi{10.24963/ijcai.2023/746}, \url{https://doi.org/10.24963/ijcai.2023/746}

\bibitem{herlihy2018atomic}
Herlihy, M.: Atomic cross-chain swaps. In: Proceedings of the 2018 ACM Symposium on Principles of Distributed Computing (PODC). pp. 245--254 (2018). \doi{10.1145/3212734.3212736}, \url{https://dl.acm.org/doi/10.1145/3212734.3212736}

\bibitem{Hoeffding1963}
Hoeffding, W.: Probability inequalities for sums of bounded random variables. Journal of the American Statistical Association  \textbf{58}(301),  13–30 (Mar 1963). \doi{10.1080/01621459.1963.10500830}, \url{http://dx.doi.org/10.1080/01621459.1963.10500830}

\bibitem{NetworkReputation}
Hopcroft, J., Sheldon, D.: Network reputation games. eCommons@Cornell  (2008)

\bibitem{Haggstrom_2002}
Häggström, O.: Irreducible and aperiodic Markov chains, p. 23–27. London Mathematical Society Student Texts, Cambridge University Press (2002)

\bibitem{10.1093/bioinformatics/btq680}
Iván, G., Grolmusz, V.: When the web meets the cell: using personalized pagerank for analyzing protein interaction networks. Bioinformatics  \textbf{27}(3),  405--407 (12 2010). \doi{10.1093/bioinformatics/btq680}, \url{https://doi.org/10.1093/bioinformatics/btq680}

\bibitem{kiayias2017ouroboros}
Kiayias, A., Russell, A., David, B., Oliynykov, R.: Ouroboros: A provably secure proof-of-stake blockchain protocol. In: Annual International Cryptology Conference (CRYPTO). LNCS, vol. 10401, pp. 357--388. Springer (2017). \doi{10.1007/978-3-319-63688-7\_12}, \url{https://eprint.iacr.org/2016/889}

\bibitem{DBLP:conf/crypto/KiayiasRDO17}
Kiayias, A., Russell, A., David, B., Oliynykov, R.: Ouroboros: {A} provably secure proof-of-stake blockchain protocol. In: Katz, J., Shacham, H. (eds.) Advances in Cryptology - {CRYPTO} 2017 - 37th Annual International Cryptology Conference, Santa Barbara, CA, USA, August 20-24, 2017, Proceedings, Part {I}. Lecture Notes in Computer Science, vol. 10401, pp. 357--388. Springer (2017). \doi{10.1007/978-3-319-63688-7\_12}, \url{https://doi.org/10.1007/978-3-319-63688-7\_12}

\bibitem{KleOstZik20}
Kleinrock, L., Ostrovsky, R., Zikas, V.: Proof-of-reputation blockchain with nakamoto fallback. In: {INDOCRYPT}. Lecture Notes in Computer Science, vol. 12578, pp. 16--38. Springer (2020)

\bibitem{kolachala2024sok}
Kolachala, K., Ababneh, M., Vishwanathan, R.: Sok: Payment channel networks. arXiv preprint arXiv:2407.20968  (2024), \url{https://arxiv.org/abs/2407.20968}

\bibitem{landis2023incentive}
Landis, D.: Incentive non-compatibility of optimistic rollups. arXiv preprint arXiv:2312.01549  (2023), \url{https://arxiv.org/abs/2312.01549}

\bibitem{DBLP:conf/icisc/Larangeira22}
Larangeira, M.: Reputation at stake! {A} trust layer over decentralized ledger for multiparty computation and reputation-fair lottery. In: Seo, S., Seo, H. (eds.) Information Security and Cryptology - {ICISC} 2022 - 25th International Conference, {ICISC} 2022, Seoul, South Korea, November 30 - December 2, 2022, Revised Selected Papers. Lecture Notes in Computer Science, vol. 13849, pp. 195--215. Springer (2022). \doi{10.1007/978-3-031-29371-9\_10}, \url{https://doi.org/10.1007/978-3-031-29371-9\_10}

\bibitem{lee2025looking}
Lee, S.: Looking for attention: Randomized attention test design for validator monitoring in optimistic rollups. arXiv preprint arXiv:2505.24393  (2025), \url{https://arxiv.org/abs/2505.24393}

\bibitem{Leoni2014}
Leoni, P.L.: Learning in general games with nature’s moves. Journal of Applied Mathematics  \textbf{2014},  1–9 (2014). \doi{10.1155/2014/453168}, \url{http://dx.doi.org/10.1155/2014/453168}

\bibitem{LevinPeresWilmer2006}
Levin, D.A., Peres, Y., Wilmer, E.L.: {Markov chains and mixing times}. American Mathematical Society (2006), \url{http://scholar.google.com/scholar.bib?q=info:3wf9IU94tyMJ:scholar.google.com/&output=citation&hl=en&as_sdt=2000&ct=citation&cd=0}

\bibitem{7525508}
Maestre, J.M., Ishii, H.: A cooperative game theory approach to the pagerank problem. In: 2016 American Control Conference (ACC). pp. 3820--3825 (2016). \doi{10.1109/ACC.2016.7525508}

\bibitem{mailath2006repeated}
Mailath, G., Samuelson, L.: Repeated Games and Reputations: Long-Run Relationships. Oxford scholarship online, Oxford University Press, USA (2006), \url{https://books.google.com/books?id=hAISDAAAQBAJ}

\bibitem{sprites2017miller}
Miller, A., Bentov, I., Kumaresan, R., Cordi, C., McCorry, P.: Sprites and state channels: Payment networks that go faster than lightning. arXiv preprint arXiv:1702.05812  (2017), \url{https://arxiv.org/abs/1702.05812}

\bibitem{moosavi2023fast}
Moosavi, S., Krishnan, S., Kaptchuk, G., Song, D.: Fast and furious withdrawals from optimistic rollups -- drops. In: 5th ACM Conference on Advances in Financial Technologies (AFT 2023). LIPIcs, vol.~282, pp. 22:1--22:19. Schloss Dagstuhl -- Leibniz-Zentrum f{\"u}r Informatik (2023). \doi{10.4230/LIPIcs.AFT.2023.22}, \url{https://drops.dagstuhl.de/opus/volltexte/2023/19168/}

\bibitem{DBLP:phd/ndltd/Mui02}
Mui, L.: Computational models of trust and reputation: agents, evolutionary games, and social networks. Ph.D. thesis, Massachusetts Institute of Technology, Cambridge, MA, {USA} (2002)

\bibitem{Nash1951}
Nash, J.: Non-cooperative games. The Annals of Mathematics  \textbf{54}(2), ~286 (Sep 1951). \doi{10.2307/1969529}, \url{http://dx.doi.org/10.2307/1969529}

\bibitem{10.5555/1296179}
Nisan, N., Roughgarden, T., Tardos, E., Vazirani, V.V.: Algorithmic Game Theory. Cambridge University Press, USA (2007)

\bibitem{poon2016lightning}
Poon, J., Dryja, T.: The bitcoin lightning network: Scalable off-chain instant payments. Whitepaper draft (2016), \url{https://lightning.network/lightning-network-paper.pdf}, draft 0.5.9.2

\bibitem{RENAULT2004124}
Renault, J., Tomala, T.: Learning the state of nature in repeated games with incomplete information and signals. Games and Economic Behavior  \textbf{47}(1),  124--156 (2004). \doi{https://doi.org/10.1016/S0899-8256(03)00153-2}, \url{https://www.sciencedirect.com/science/article/pii/S0899825603001532}

\bibitem{repinseqgames}
Resnick, P., Zechauser, R., Friedman, E., Kuwabara, K.: Reputation systems: Facilitation trust in internet interactions. Journal of Communications - JCM  \textbf{43} (01 2001)

\bibitem{inbook}
Sankaranarayanan, B., Shivarajan, S., Kumaran, T., Saravanasankar, S.: PageRank Algorithm-Based Recommender System Using Uniformly Average Rating Matrix, pp. 99--112 (11 2018). \doi{10.4018/978-1-5225-5445-5.ch006}

\bibitem{RepWithConflict}
Schmidt, K.M.: {Reputation and Equilibrium Characterization in Repeated Games with Conflicting Interests}. Econometrica  \textbf{61}(2),  325--351 (March 1993)

\bibitem{schwartz2014ripple}
Schwartz, D., Youngs, N., Britto, A.: The ripple protocol consensus algorithm. \url{https://ripple.com/files/ripple_consensus_whitepaper.pdf} (2014), accessed: 2025-05-28

\bibitem{Sugaya2020}
Sugaya, T., Yamamoto, Y.: Common learning and cooperation in repeated games. Theoretical Economics  \textbf{15}(3),  1175–1219 (2020). \doi{10.3982/te3820}, \url{http://dx.doi.org/10.3982/TE3820}

\bibitem{sung}
Sun, C.: REPUTATION GAMES AND POLITICAL ECONOMY. Ph.D. thesis, PRINCETON UNIVERSITY, {USA} (2015)

\bibitem{1344743}
Xing, W., Ghorbani, A.: Weighted pagerank algorithm. In: Proceedings. Second Annual Conference on Communication Networks and Services Research, 2004. pp. 305--314 (2004). \doi{10.1109/DNSR.2004.1344743}

\bibitem{10471277}
Yang, M., Wang, H., Wei, Z., Wang, S., Wen, J.R.: { Efficient Algorithms for Personalized PageRank Computation: A Survey }. IEEE Transactions on Knowledge \& Data Engineering  \textbf{36}(09),  4582--4602 (Sep 2024). \doi{10.1109/TKDE.2024.3376000}, \url{https://doi.ieeecomputersociety.org/10.1109/TKDE.2024.3376000}

\bibitem{ye2022specular}
Ye, Z., Zhang, Q., AlBassam, M., Wang, Y., Li, Z., Maffei, M., Mislove, A., Onarlioglu, K., Schranz, O., Song, D., Varia, M.: Specular: Towards secure, trust-minimized optimistic blockchain execution. arXiv preprint arXiv:2212.05219  (2022), \url{https://arxiv.org/abs/2212.05219}

\bibitem{8645706}
{Yu}, J., {Kozhaya}, D., {Decouchant}, J., {Esteves-Verissimo}, P.: Repucoin: Your reputation is your power. IEEE Transactions on Computers  \textbf{68}(8),  1225--1237 (Aug 2019). \doi{10.1109/TC.2019.2900648}

\bibitem{Zamir2009}
Zamir, S.: Bayesian Games: Games with Incomplete Information, pp. 426--441. Springer New York, New York, NY (2009). \doi{10.1007/978-0-387-30440-3\_29}, \url{https://doi.org/10.1007/978-0-387-30440-3\_29}

\bibitem{10.1145/1390334.1390465}
Zhang, L., Zhang, K., Li, C.: A topical pagerank based algorithm for recommender systems. In: Proceedings of the 31st Annual International ACM SIGIR Conference on Research and Development in Information Retrieval. p. 713–714. SIGIR '08, Association for Computing Machinery, New York, NY, USA (2008). \doi{10.1145/1390334.1390465}, \url{https://doi.org/10.1145/1390334.1390465}

\bibitem{Zhang28052017}
Zhang, W., Zhang, S., and, S.G.: A pagerank-based reputation model for personalised manufacturing service recommendation. Enterprise Information Systems  \textbf{11}(5),  672--693 (2017). \doi{10.1080/17517575.2015.1077998}, \url{https://doi.org/10.1080/17517575.2015.1077998}

\end{thebibliography}

\newpage
\begin{appendix}
\section{Connection to PoR/PoS Blockchain}\label{sec:blockchain}
We show how our TRep games can be applied to PoR blockchains, e.g.,~\cite{KleOstZik20}. First we give a high-level overview of that paradigm {\em proof-of-reputation with proof-of-stake fallback} (in short, PoR/PoS) paradigm (\ref{sec:PoR/PoS}). Subsequently, in \ref{sec:use_rep_in_bc}, we highlight how to use TRep games in that paradigm.
% the PoR with PoS backup construction. 

\subsection{Proof of Reputation with Proof of Stake Fallback}\label{sec:PoR/PoS}
%In Section~\ref{sec:blockchain}, we will show how our trustworthy reputation games can be applied to the  {\em proof-of-reputation with proof-of-stake fallback} (in short, PoR/PoS) paradigm for bootstrapping---i.e., setting the inital parameters of---the blockchain proposed in~\cite{KleOstZik20}. Below, we provide a high level overview of that paradigm. 

The PoR/PoS paradigm assumes a so-called  {\em reputation system}, which is a vector of $n$  independent (probability distributions on) binary random variables $(\vrep_1, \ldots, \vrep_n)$. The number $n$ is the number of blockchain nodes (also referred to as 
 {\em servers}) who are tasked  with running the blockchain protocol, e.g., in Bitcoin that would be the {\em miners.} Each $\vrep_i$ corresponds to the probability that the $i$th node will remain ``honest'' in the protocol, i.e., it will follow its prescribed protocol. 
Given such a reputation vector, as long as it closely captures the ``ground truth,'' or in other words, the true probability of servers' honesty, \cite{KleOstZik20} shows how to select committees of size $C= {\rm polylog}(n)$ so that with  big probability a majority of the parties in all committees will be honest. 

%We note in passing that selection of honest majority committees is essential for blockchains that following the 
%parties (i.e., parties that follow the protocol) as possible. 

As argued in~\cite{KleOstZik20,AC:AshLin13} the simplest way for achieving the above goal (maximizes the probability of honest majorities) is to order the parties according to their reputation, and select the top $C$ in this ordering. (In fact, the actual mechanism/lottery for this selection has several additional properties, which ensures and intuitive notion of fairness; but the above simple deterministic selection algorithm is sufficient for understanding the use of reputation within PoR blockchains.) 

Given such a mechanism for selecting honest-majority committees, a PoR blockchain can be constructed in a similar way as common Byzantine Fault-Tolerant (BFT) blockchains, e.g., Algorand in the proof-of-stake (PoS) setting: proceed in phases (often referred to as {\em slots} or {\em blockchain rounds}) where in each $i$th round a committee is chosen to vote on (by adding their digital signature on) the $i$th block, and a block is accepted if and only if its voted by more than $C/2$ parties in this $i$th-slot committee (i.e., has a majority vote by committee members). 

Importantly, observe that in order to take a decision on whether or not the majority of the $i$th slot committee has voted, it is essential that parties can verify whether a signature corresponds to a party in this committee. And to make sure that parties adopt the same block, there should be agreement among the committee members. This is done by making sure that the reputation system (and the associated randomness) used for the lottery are ``known'' to the blockchain (i.e., they are encoded in its past blocks). 

The blockchain from~\cite{KleOstZik20} also fortifies the security of its above PoR methodology by assuming a fallback blockchain, which is based on the proof-of-stake paradigm, and is used to detect and correct forks due to an inaccurate reputation system. In particular, parties running the PoR based constriction above, periodically report (a publicly verifiable digest) of their view. The proposed mechanisms ensures that if the blockchain properties are violated then it will be promptly noted on the secondary chain. In this case, the system (temporarily) falls back to that secondary chain.  As discussed above, this mechanism is necessary for security; however, its details beyond what is discussed above are not relevant for our paper.

% Assume a correlation-free static reputation system \rohit{change wording depending on how much notation we want to borrow from [KOZ]} wherein each reputation party, $\hat{P}_i$, has reputation, $\sf{R}_i$, i.e., $$\Pr\inb{\hat{P}_i \ \text{behaves \emph{honestly}}} = \sf{R}_i$$

% Here, \emph{honestly} means with respect to some oracle that say, outputs 1 when the party behaves correctly, and 0 otherwise. Therefore, $\sf{R}_i$ would be the probability that the oracle outputs 1 for $\hat{P}_i$.

%begin Cancel 1
\cancel{ Independently, we study the underlying problem of \emph{discovering} $\vv{\sf{R}}$ in the decentralized setting. Assuming rational players, we construct an incentive mechanism and system such that the output prescribed by the \emph{optimal} actions of each party (say, in a Nash equilibrium) is equal or closely approximates the desired function computed over $\vv{\sf{R}}$.

We assume that every party behaves rationally---decisions are made with respect to maximizing the expected reward/payoff.
Informally, we define an incentive mechanism to be a game-theoretic mechanism where every epoch, the action taken by each party is prescribed by the Nash equilibrium of the game.

More generally, we would hope for an incentive mechanism in which parties are incentivized to report their \emph{true beliefs}. In a setting in which each party has full information, this would be exactly proportional to $\sf{Rep}$.

Note that if there are multiple, then we implicitly pick the one with the maximum total sum of reward for the malicious parties \rohit{this is ill-defined}.
}%end Cancel 1

%Since the novelty of our work is not the application but rather TRep games themselves, the purpose of this section is to demonstrate the connection, rather than providing the concrete blockchain construction. For this reason,  we will keep the discussion informal. 

\subsection{Using Trustworthy Reputation Games in a PoR/PoS Blockchain}\label{sec:use_rep_in_bc}

In a (byzantine-fault tolerant) PoR blockchain like~\cite{KleOstZik20}, a committee of nodes is selected for proposing and voting on each block. %, and (in order to incentivize participation). 
Most blockchain-based cryptocurrency systems, give rewards (in terms of coins) to the members of such committee to incentivize participation. However, the rest of the users of the system typically, do not receive any rewards. An exception here is systems that offer their users ``dividends'' or interest for participation in order to boost adoption and/or availability (e.g., Algorand) and Proof-of-Stake blockchains that support stake delegation (e.g., Cardano, Ethereum, etc.)  where users can delegate their stake to stake-pools and they get a fraction of their rewards when their stake pool (operator) is selected to propose the next block. We conjecture that our proposed mechanisms can be used also to incentivize truthful stake delegation. We view this as an interesting research question, albeit less relevant than reputation in a PoR system, as untruthfully extracted reputation can hurt the systems security, whereas untruthful stake delegation does not: if the  majority of stake is in honest hands, then honest parties will delegate truthfully, otherwise, security cannot be ensured. %either is or is not in the hands of honest users, whether delegators or pool operators)

Here, we show how to encode our payoff function into the PoR blockchain for rewarding the users/endorsers. 
%
%
%
%in ways that allow everyone to sufficiently estimate their belief of the ground truth. %---and incentivization truthfulness/ %---which fit the goals of PoR and show their connection to our TRep Games. 
Consistently with~\cite{KleOstZik20}, we will assume the blockchain leverages a static trustworthy reputation system---i.e., the trustworthiness of the nodes is fixed (part of the ground truth) and recorded on the PoR blockchain's genesis block. 

We recall how~\cite{KleOstZik20} encodes such static reputation:  The honesty probability of each party is encoded in the ground truth (which we referred to here as trustworthiness score vector). In particular,
at the beginning of the system's execution, a biased coin is (privately) flipped for each node $s_i$ which is heads with probability $\vrep_i$. If the outcome of the coin is heads, then $s_i$ is corrupted/Byzantine. We will refer to parties who are not corrupted as {\em honest}. We remark that an honest party follows the protocol, whereas a byzantine party might not. To connect the cryptographic notion of corruption with the notion of honesty as used in trustworthy reputation we make the following natural convention: all byzantine-corrupted parties are {\em eventually faulty}, in the sense that there is a bound $\lambda\in\mathbb{N}$ such that all corrupted parties will be faulty within $\lambda$ rounds of the protocol. The assumption of eventual-faultiness yields a natural bridge between the cryptographic notion of corruption and the notion of (dis-)honestly as induced by  the trustworthiness score. Indeed, a corrupted party  who remain honest forever is no different than an honest party and has no negative effect on the security properties of the blockchain. 

We note that static reputation and party set implies that discovery of the ground truth corresponds to the so-called {\em bootstrapping} of the blockchain, in particular, creation of its parameters that can be engraved on its genesis block. Such a block is agreed upon by all parties (users and nodes) and ``engraved'' on their state at the onset of the PoR protocol. Indeed, the reputation system in~\cite{KleOstZik20} is actually part of their PoR blockchain's genesis block written on both the PoR and the (fallback) PoS  blockchain. Therefore, in the following we focus on how to perform such bootstrapping.  %for~\cite{KleOstZik20}. 

An important consideration here is that in absence of publicly agreed trustworthy reputation, agreeing on such genesis-block parameters is impossible without further assumptions. This is where the fallback PoS change becomes handy in the bootstrapping phase, as the bootstrapping happens on the PoS blockchain.

We demonstrate how to use our game(s) to perform this bootstrapping. For simplicity we focus this discussion on the perfect information game $\vgame_{\rm perfect}$. Recall that $\vgame_{\rm perfect}$ assumes that users have perfect knowledge of the state of nature. At the onset of the protocol, all users play $\vgame_{\rm perfect}$ and record their strategies on the PoS blockchain. 
The bootstrapping protocol then proceeds as follows:   we  execute  the PoR/PoS protocol, with the difference that instead of using the lottery from~\cite{KleOstZik20} for picking committees, we follow a fixed  round-robin schedule:  order all nodes lexicographically (e.g., according to their wallet address); in the first slot choose the first $c={\rm polylog}(m)$ servers,  then the next $c$ and so on. 

It is important to note that in this phase, we have no information about reputation so the safety and/or liveness of the PoR chain might  be violated in several slots. The fallback property of the PoR/PoS chain from~\cite{KleOstZik20}, guarantees that when this happens, it is detected on the secondary chain and at least some corrupted node is discovered. When this happens, we restart the bootstrapping  with this node excluded. The above process is done until some iteration completes $\lambda$ (the eventual-faultiness parameter) rounds.  

At the end of this phase, the system looks at all information on the bootstrapping phase recorded on the PoS blockchain and rewards users according to $\vgame_{\rm perfect}$, where servers detected as malicious correspond to nature sampling $0$ and the remainder as sampling $1$. Finally, the system executes the decoding function $\cal{D}_\mathrm{PR}$ and the associated reputation scores are adopted as the nodes' reputation. This completes the bootstrapping and the blockchain can then start running the protocol from~\cite{KleOstZik20} with the extracted reputation scores as its reputation system.

The following is a corollary  of Lemmas~\ref{lem:unique-nash} and~\ref{lem:eps-nash} by observing that  the assumption is that a node $s_i$ is honest with probability $\vrep_i$ and every corrupted node will be faulty (and recorded as corrupted) in the boostraping phase---this follows directly by the evantual-faultiness assumption. The above means that the rewards for parties will occur with the same probability as when nature moves in $\vgame_{\rm perfect}$.

\newtheorem{vcorollary}{Corollary}[section]

%\vassilis{something weird happens with corollary numbering}
\begin{vcorollary}
Assuming the users have perfect (resp. consistent noisy)  information on the nodes' trustworthiness, playing according to their truth-telling strategy is a Nash equilibrium (resp, $\epsilon'$-best response for $\epsilon'$ as in Lemma~\ref{lem:eps-nash}). %belief. \vassilis{do we have a name for this strategy?} \rohit{truth-telling strategy}
\end{vcorollary}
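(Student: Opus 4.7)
The plan is to show that the bootstrapping protocol on the PoR/PoS blockchain faithfully instantiates the abstract games $\vgame_{\rm perfect}$ and $\vgame_{\rm noisy}$, after which both halves of the corollary follow immediately from Lemma~\ref{lem:unique-nash} and Lemma~\ref{lem:eps-nash}, respectively. The proof, then, is essentially a reduction: construct a mapping from the blockchain-level mechanism to the corresponding TRep game, verify that expected payoffs are preserved under this mapping, and invoke the relevant lemma.

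First, I would formalize the correspondence between objects. Blockchain users who record their endorsement strategies on the PoS chain correspond to players $P_i \in \cal{P}$; the nodes/servers being ranked correspond to $\hat{v}_j \in \hat{\cal{V}}$; and each user's recorded strategy plays the role of a mixed strategy $\vv{s}^{(i)} \in \Delta_i$. The decoding step performed by the blockchain, $\cal{D}_\mathrm{PR}$, is definitionally identical to the one from Section~\ref{sec:util-pr}, and the per-server reward disbursement of one minted coin split according to the relative contribution $\vv{\omega_{\hat{v}_j\mid \cal{V}}^{-1}}(v_i)$ is precisely the payoff formula of $\vgame_{\rm perfect}$.

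The key technical step is arguing that nature's move in the abstract game is realized faithfully by the bootstrapping process. In the abstract game, $r_j \sim \mathrm{Ber}(\vrep_j)$ independently for each $j$, and reward is disbursed for server $j$ iff $r_j = 1$. In the blockchain, rewards are disbursed for server $s_j$ iff $s_j$ is not recorded as corrupted during any iteration of bootstrapping. By the static-reputation convention from~\cite{KleOstZik20}, corruption is decided up front by an independent biased coin with bias $\vrep_j$; by the eventual-faultiness assumption, every corrupted server becomes faulty within $\lambda$ rounds and is therefore detected by the fallback mechanism before the last iteration of bootstrapping terminates (which by construction runs until $\lambda$ rounds are completed without restart). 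Conversely, honest servers follow the protocol and are never detected as faulty. Hence the indicator ``not recorded as corrupted'' has exactly distribution $\mathrm{Ber}(\vrep_j)$, and these indicators are mutually independent across $j$; this matches the distribution of $\vv{r}$ in $\vgame_{\rm perfect}$ coordinate-wise. Consequently the expected blockchain reward of each user, as a function of the recorded strategy profile, is exactly $\E[u_i(\vv{s}, \vv{\sf{R}})]$ from the game.

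Once the reduction is in place, the conclusion is immediate: in the perfect-information regime, Lemma~\ref{lem:unique-nash} gives that the truth-telling profile $\vv{s}^{(i)} = \sf{N}(\vv{\sf{R}})$ is the unique Nash equilibrium of $\vgame_{\rm perfect}$, and in the consistent-noisy regime, Lemma~\ref{lem:eps-nash} gives that truth-telling is an $\epsilon'$-best response with $\epsilon' = O(m^2/n)$. The main obstacle I anticipate is the bookkeeping around restarts of the bootstrapping phase: each restart excludes a discovered corrupted node and re-runs the round-robin schedule, and I need to make sure this does not distort the induced distribution over which servers are rewarded. The cleanest way to handle it is to observe that the corruption coin flips are fixed once at the onset and that restarts only serve to \emph{identify} already-corrupted nodes rather than resample their status; thus the final iteration's reward event for each surviving server $s_j$ is still governed by the original $\mathrm{Ber}(\vrep_j)$ coin, and excluded servers simply receive zero reward (consistent with $r_j = 0$ in the game). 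With this observation, the reduction goes through uniformly and both clauses of the corollary follow.
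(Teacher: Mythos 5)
Your proposal is correct and follows essentially the same route as the paper: the paper's (brief) justification likewise observes that by the eventual-faultiness assumption every corrupted node is detected during bootstrapping, so the per-server reward event occurs with probability $\vrep_j$ exactly as nature's move in $\vgame_{\rm perfect}$, and then invokes Lemmas~\ref{lem:unique-nash} and~\ref{lem:eps-nash}. Your write-up merely makes the reduction and the restart bookkeeping explicit where the paper leaves them as a one-paragraph remark.
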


%is slit in two epochs: In the first epoch, the parties execute the Mobby with a fixed schedule of polylog(n) size committees, e.g., in a round robin fashion: order all nodes lexicographically (according to their wallet address) and in the first slot choose the first $c=polylog(n)$, then the next $c$ and so on. Since committees are chosen lexicographically, 

The decodability property of the games then ensures that the blockchain can compute an ordering of the servers according to their trustworthiness,  which has no inversions in the perfect information case and a small (negligible in $m$ when $p$ is negligible in $m$) probability of inversions in the incomplete information case.  Using this ordering, if selecting the top $\ell={\rm polylog}(m)$ nodes yields a committee where the number of corrupted parties is at most $(1/2-c'')\ell$, for some constant $c''$ (which is the assumption in~\cite{KleOstZik20}) then selecting the top $0.9\ell$ parties will also yield an honest majority committee with overwhelming probability. We defer to \cite{KleOstZik20}
 for more details. 

%We remark that in cryptography, a party that is not corrupted (also referred does not 

%of the parties are recorded on the blockchain once, at the beginning, and stay fixed for the lifetime of the system---and static 

%Intuitively, the first one rewards everyone, independent of whether or not they endorsed the chosen committee; the second one rewards only users who endorsed servers in the chosen committee, and the third one is a mix of the two. We view both mechanisms as relevant for the development of a system: In order for the rewards to be the defining factor for a user's rationality, they need to be substantial. At the early stages of a PoR system, where the focus is on adoption and most accurate discovery of the node's reputation, rewarding every users is both economically feasible (onle a few users) and more desirable (extracting reputation from more signals). However, in as the system scales, rewarding every user becomes costly, so the system would choose to resort to the second mechanism (or a mix of the two).   We note that for the purpose of this analysis, we will 

%\subsection{Rewarding Every User}

%\subsection{Rewarding Only Users who Point to Selected Committee Members}
\section{Deferred Proofs}
\label{sec:proofs}

\subsection{PageRank on Informed Cliques}
We prove that running PageRank on a perfectly informed clique yields a reputation score that is ratio-preserving with respect to the trustworthiness scores.

For simplicity, we assume $\sum_{i\in [m]}\sf{R}_i = 1$ but this can be easily generalized by normalizing with respect to L1-norm.

Let $G$ be the clique and $M$ be its adjacency matrix. Since $G$ is perfectly informed, we have the outgoing weights of each node, $\vv{w_{i}} = \inp{\sf{R}_1, \ldots, \sf{R}_m}$.
Recall that PageRank is the stationary distribution of $M$. Thus, we prove,
\begin{lemma}
\label{lem:pr-on-clique}
    The stationary distribution of $M$, $\vv{\pi}$, is ratio-preserving. 
\end{lemma}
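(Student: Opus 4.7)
The plan is to exploit the very special structure of $M$: because every node in the perfectly informed clique has the same outgoing endorsement vector $(\sf{R}_1,\ldots,\sf{R}_m)$, the adjacency matrix $M$ is rank-one with $M[i,j]=\sf{R}_j$ for every $i,j\in[m]$. Under the simplifying assumption $\sum_{j\in[m]}\sf{R}_j=1$, each row of $M$ sums to $1$ and $M$ itself is row-stochastic, so it directly induces a valid Markov chain without needing the restart correction from Section~\ref{sec:pr}. The task therefore reduces to finding the stationary distribution of this very structured chain.

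The key computation is immediate. Plugging into $\vv{\pi}=\vv{\pi}M$ componentwise,
\begin{equation*}
    \pi_j \;=\; \sum_{i\in[m]} \pi_i\, M[i,j] \;=\; \sf{R}_j \sum_{i\in[m]} \pi_i \;=\; \sf{R}_j,
\end{equation*}
where the last equality uses the normalization $\norm{1}{\vv{\pi}}=1$. Hence $\vv{\pi}=\vv{\sf{R}}$, and ratio-preservation follows instantly: $\pi_i/\pi_j = \sf{R}_i/\sf{R}_j$ for every $i,j\in[m]$ with $\sf{R}_j\neq 0$.

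Before finalizing, I would also verify that the stationary distribution is well-defined and unique so that the statement is unambiguous. As long as all $\sf{R}_j>0$, every entry of $M$ is strictly positive, so the chain is trivially irreducible and aperiodic; the lemma cited in Section~\ref{sec:pr} then guarantees a unique stationary distribution. Any servers with $\sf{R}_j=0$ correspond to transient states that carry no stationary mass, and the same argument applies on the restricted support while the ratio claim is vacuous for such servers. I do not expect a genuine obstacle here: the result is essentially the folklore fact that a row-stochastic matrix whose rows are all equal to a common probability vector $\vv{q}$ has $\vv{q}$ as its unique stationary distribution. The only mild subtleties are dispensing with the $\sum_j \sf{R}_j=1$ assumption (handled by $L_1$-rescaling, as the authors note) and accounting for zero-weight nodes.
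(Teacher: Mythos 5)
Your proof is correct and follows essentially the same route as the paper's: both observe that all rows of $M$ equal $(\sf{R}_1,\ldots,\sf{R}_m)$ and verify that this common row vector is the stationary distribution, from which ratio-preservation is immediate. You simply make explicit the componentwise computation and the uniqueness/irreducibility check that the paper leaves as "easy to verify."
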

\begin{proof}
The stationary distribution is the solution to, $$\vv{\pi} = \vv{\pi}M$$ (i.e., the eigenvector of eigenvalue 1). For $M$ in which each row $\vv{w_i}$ is identical, it is easy to verify that $\vv{\pi}$ is also in fact equal to $\inp{\sf{R_1}, \ldots, \sf{R_m}}$. Therfore, $\frac{\vv{\pi}_i}{\vv{\pi}_j} = \frac{\sf{R}_i}{\sf{R}_j}$ which concludes the proof.
\end{proof}

\subsection{Designated PageRank}

We provide a proof sketch of the fact that running Designated PageRank on the Designated Reputation Graph yields a ratio-preserving reputation score on the designated subset of reputation nodes.

\begin{lemma}
\label{lem:pr-on-moon-clique}

Let $G = (V\cup S,E)$ be a Designated Reputation graph with perfect information where $S$ is the designated reputation nodes. Then, Designated PageRank is ratio-preserving with respect to the trustworthiness of the reputation nodes.
\end{lemma}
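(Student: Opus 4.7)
The plan is to reduce the lemma to an elementary observation about the transition probabilities from users to servers in the perfectly informed case, and then appeal directly to Equation~\eqref{eq:rep_score} from Section~\ref{sec:util-pr} (which already expresses a server's reputation score as a linear combination of user PageRanks weighted by the user-to-server transition probabilities). Concretely, I would argue that in the perfectly informed clique the normalized outgoing weight that every user $v_i\in\cal{V}$ places on server $s_j$ factors as $\Pr[v_i \to s_j \mid T] = c_i\,\sf{R}_j$, where $c_i$ depends only on $v_i$ (and in particular is independent of $j$). This is exactly where perfect information is used: the unnormalized endorsement of $v_i$ to $s_j$ equals $\sf{R}_j$, so the only source of user-dependence in the transition probability is the per-user normalizer $c_i = 1/\bigl(\sum_k w^{(i)}_k\bigr)$ arising from making the row stochastic.

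Next I would carry the factorization through the Designated PageRank equation for the server coordinates. Because the Designated fix makes $S$ behave as a set of sinks whose ``restart'' mass goes uniformly to $\cal{V}$, and because teleportation lands only in $\cal{V}$, the only way probability mass enters $s_j$ in the random walk $T'$ is through a $(1-\alpha)$-scaled $\cal{V}$-to-$S$ step. Plugging the factorization into the expression
\[\vv{\pi}(s_j) \;=\; (1-\alpha)\sum_{v_i\in\cal{V}} \Pr[v_i\to s_j\mid T]\cdot \vv{\pi}(v_i) \;=\; (1-\alpha)\,\sf{R}_j\sum_{v_i\in\cal{V}} c_i\,\vv{\pi}(v_i),\]
the bracketed sum is a common (user-dependent but $j$-independent) constant, so $\vv{\pi}(s_j)\propto \sf{R}_j$. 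Normalizing over $\hat{\cal{V}}$ as in Equation~\eqref{eq:rep_score} gives $\vrsc_j = \sf{R}_j/\sum_k \sf{R}_k$, and hence $\vrsc_i/\vrsc_j = \sf{R}_i/\sf{R}_j$, which is exactly the ratio-preservation claim. Observe that the argument never needs to solve for the user PageRanks $\vv{\pi}(v_i)$ themselves; whatever they are, the $\sf{R}_j$ factor pulls out cleanly.

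The only subtlety (and the place I expect to spend the most care) is justifying that the aforementioned common constant $\sum_i c_i\vv{\pi}(v_i)$ is strictly positive, so that the ratios are actually well defined. This follows because Designated PageRank's restart step guarantees every $v_i\in\cal{V}$ receives at least $\alpha/n$ in stationary probability, hence $\vv{\pi}(v_i)>0$ for all users, and in the perfectly informed clique each $c_i>0$ as well. The remaining bookkeeping---verifying that the modified chain is irreducible on the non-transient states and that Equation~\eqref{eq:rep_score} is applicable here exactly as in Section~\ref{sec:util-pr}---is mechanical and mirrors the corresponding step in the proof of Lemma~\ref{lem:pr-on-clique}, with the role of the identical-rows symmetry now played by the weaker but sufficient property that the $\cal{V}$-to-$S$ rows are pairwise scalar multiples of the common vector $(\sf{R}_1,\ldots,\sf{R}_m)$.
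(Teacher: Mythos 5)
Your proposal is correct and follows essentially the same route as the paper's (sketched) proof: both observe that under perfect information every user's transition probability into server $s_j$ is proportional to $\sf{R}_j$, so after the $(1-\alpha)$-scaled $\cal{V}$-to-$S$ step the server components of the stationary distribution inherit the ratios of $\vv{\sf{R}}$ and normalizing gives ratio-preservation. Your write-up is more explicit than the paper's --- it isolates the factorization $\Pr[v_i\to s_j\mid T]=c_i\,\sf{R}_j$ and checks positivity of the common constant, whereas the paper argues by direct analogy with Lemma~\ref{lem:pr-on-clique} --- but the underlying argument is the same.
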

\paragraph{Proof sketch.} Since the non-reputation nodes are perfectly informed, the outgoing weights are all equal and correspond to the true trustworthiness (L1-normalized). This implies that the incoming edges to the reputation nodes from the non-reputation nodes are all equal, and the incoming edges from other reputation nodes are 0 (by definition of designated reputation graph). Thus, similar to Lemma~\ref{lem:pr-on-clique}, the components of the stationary distribution that correspond to the reputation nodes will be in the same ratio as the incoming edges. Since the incoming edges are perfect, we recover an ratio-preserving reputation score on the reputation nodes.

% \subsection{Proofs for $\cal{G}_\rm{noisy}$}
\end{appendix}

\end{document}